\documentclass[12pt]{article}
\usepackage{amsmath}
\usepackage{graphicx,psfrag,epsf}
\usepackage{enumerate}
\usepackage{natbib}
\usepackage{url} 

\newcommand{\blind}{0}

\usepackage{adjustbox}
\usepackage{enumerate}
\usepackage{bm}
\usepackage{amsfonts}
\usepackage[colorlinks,citecolor=blue,urlcolor=blue,hyperfootnotes=false]%
{hyperref}
\usepackage{amsthm}
\usepackage{amssymb}
\usepackage{graphicx}
\usepackage{float}
\usepackage{booktabs}
\usepackage{threeparttable}
\usepackage{amsmath}
\numberwithin{equation}{section}
\usepackage{lscape}
\usepackage{scrextend}
\usepackage{relsize}
\usepackage{multicol}
\usepackage{chngcntr}
\usepackage{etoolbox}
\usepackage{amssymb}
\usepackage{tabularx}
\usepackage{mathrsfs}
\usepackage[english]{babel}
\usepackage{multirow}
\usepackage{booktabs}
\usepackage[labelfont=bf]{caption}
\usepackage{float}
\usepackage{titlesec}
\usepackage{array}
\usepackage{color}
\usepackage{afterpage}
\usepackage{framed}
\usepackage{float}
\usepackage{afterpage}%
\usepackage{comment}
\usepackage{multirow}
\usepackage{accents}
\usepackage[pass]{geometry}
\usepackage{amssymb}
\usepackage{xcolor}
\usepackage{amsmath}
\usepackage{extpfeil}
\usepackage{pdflscape}
\usepackage{natbib}
\setcitestyle{open={(},close={)}}
\allowdisplaybreaks

\newtheorem{theorem}{Theorem}[section]
\newtheorem{corollary}{Corollary}[section]
\newtheorem{proposition}{Proposition}[section]
\newtheorem{lemma}{Lemma}[section]

\theoremstyle{definition} 
\newtheorem{assumption}{Assumption}[section]

\addto\extrasenglish{%
   
}
\addto\extrasenglish{%
   
}
\addto\extrasenglish{%
  
}
\addto\extrasenglish{%
  
}
\addto\extrasenglish{%
  
}
\addto\extrasenglish{%
  
}
\addto\extrasenglish{%
  
}

\hypersetup{
  linkcolor=blue,
  urlcolor=blue}

\begin{document}

\def\spacingset#1{\renewcommand{\baselinestretch}%
{#1}\small\normalsize} \spacingset{1}


\if0\blind
{
  \title{\bf Detecting Copula Structural Changes:\\
A Smooth Testing Approach
} 
  \author{Shiyao Huang\thanks{Department of Business Statistics and Econometrics, Guanghua School of Management, Peking University, Beijing, 100871, China. Email: \texttt{2401111054@gsm.pku.edu.cn}.}\hspace{.2cm}\\
    Peking University\\
    and \\
    Xiaojun Song\thanks{Corresponding author: Department of Business Statistics and Econometrics, Guanghua School of Management, Peking University, Beijing, 100871, China. Email: \texttt{sxj@gsm.pku.edu.cn}. The financial support from the National Natural Science Foundation of China [Grant Numbers 72373007, 72333001, and 72621002] 
is gratefully acknowledged. The author also gratefully acknowledges the research support from the Center for Statistical Science of Peking University.} \\
    Peking University}
  \maketitle
} \fi

\if1\blind
{
  \bigskip
  \bigskip
  \bigskip
  \begin{center}
    {\LARGE\bf 
    Detecting Copula Structural Changes:\\
A Smooth Testing Approach
}
\end{center}
  \medskip
} \fi

\bigskip
\begin{abstract}
This paper develops novel smooth tests for structural changes in the innovation copula of multivariate dynamic models. We characterize deviations from copula constancy through a collection of generalized Fourier coefficients and test their joint significance. Under the null hypothesis, estimation of the dynamic parameters and the unknown marginals has no first-order estimation effect on the proposed statistics. Consequently, the feasible tests based on estimated residuals are asymptotically equivalent to their infeasible counterparts based on the unobserved innovations, yielding a desirable oracle property. Our proposed tests are asymptotically $\chi^2$-distributed and possess nontrivial power against local alternatives that approach the null at the parametric rate. To enhance the practicability of our methods, we further develop a data-driven procedure that automatically selects the truncation orders of the basis expansions. Unlike existing methods based on empirical copula processes or kernel smoothing, our tests require neither computationally intensive bootstrap procedures nor bandwidth selection. Extensive simulations demonstrate the satisfactory empirical size and power of our proposed tests. In particular, the data-driven test delivers substantial power gains under sparse alternatives, while remaining competitive under dense alternatives. Applications to exchange rates and stock returns further illustrate the practical usefulness of the proposed methods.
\end{abstract}

\noindent%
{\it Keywords:} Structural changes; Dynamic copula models; Generalized Fourier coefficients; Smooth tests; Pseudo observations
\vfill

\newpage
\spacingset{1.8} 

\section{Introduction}
\label{Introduction}
Copulas provide a flexible framework for modeling multivariate distributions by separating the marginal behavior of individual variables from their dependence structure. Sklar's theorem \citep{sklar1959fonctions} ensures that any joint distribution can be represented through a copula and its marginal distributions, thereby allowing the two components to be specified and estimated separately. This flexibility has led to the widespread use of copula models in risk management, finance, and insurance \citep{frey2001modelling,cherubini2004copula,cherubini2011dynamic,patton2012review,mcneil2015quantitative}. For instance, in economics, copula modeling helps in the prediction of financial contagion and periods of ``boom'' or ``bust'' \citep{dewick2022copula}; in finance, it plays an indispensable role in multivariate option pricing \citep{rosenberg1999semiparametric} and portfolio Value-at-Risk \citep{hull1998value}. 

Building on the separation between marginal behavior and dependence, \citet{chen2006estimation} proposed the semiparametric copula-based multivariate dynamic (SCOMDY) model to address the nonlinearities, non-Gaussian features, and complex comovements commonly observed in multivariate economic and financial time series. The SCOMDY framework parametrically specifies the conditional mean and variance while modeling the joint distribution of the innovations using a parametric copula with unspecified marginal distributions. It is sufficiently general to accommodate widely used dynamic specifications, including ARCH, GARCH, VAR, and Markov-switching models, as well as various copula families that capture nonlinear, asymmetric, and tail dependence. This flexibility makes SCOMDY models applicable to a broad range of financial problems, such as portfolio selection, asset and option pricing, Value-at-Risk measurement, and forecasting \citep{diks2010out,min2014scomdy}.

Despite its flexibility and broad applicability, a primary concern regarding the SCOMDY model is that the innovation copula is assumed to be static. In practice, however, the dependence structure among economic and financial variables may change in response to the evolving economic conditions, including financial crises, pandemic shocks, and major policy or institutional changes \citep{rodriguez2007measuring,bartram2007euro,benkraiem2022financial}. It is therefore crucial to test for structural changes in the innovation copula over the sample period before imposing a time-invariant specification. Otherwise, an erroneously assumed time-invariant copula model may lead to biased parameter estimates and invalid empirical conclusions. \citep{chollete2009modeling,supper2020comparison}.

Testing for copula constancy has attracted considerable attention in recent decades. For example, \citet{nasri2022change} propose Kolmogorov--Smirnov (KS)-type and Cramér--von Mises (CvM)-type test statistics based on the empirical copula process, whereas \citet{lu2025adaptive} develop a kernel-based testing procedure. Both approaches are omnibus against fixed alternatives. However, the asymptotic null distributions of the tests proposed by \citet{nasri2022change} are non-pivotal, requiring computationally intensive bootstrap procedures to obtain critical values. By contrast, the test statistic proposed by \citet{lu2025adaptive} is asymptotically standard normal under the null. Nevertheless, the use of the kernel smoothing method and the associated bandwidth leads to a slower convergence rate of the test, thereby restricting the local alternatives against which the test has nontrivial power. Furthermore, both approaches may suffer power losses when the underlying copula evolves smoothly rather than changes abruptly. For related tests, see also \citet{da2004change}, \citet{busetti2011copula}, \citet{bucher2013consistent}, \citet{dehling2017testing}, and \citet{stark2022testing}.

Unlike existing approaches, we develop a smooth testing approach for structural changes in the innovation copula of SCOMDY models, in the spirit of \citet{neyman1937smooth}. The proposed tests are constructed by jointly testing whether selected generalized Fourier coefficients, which characterize departures from copula constancy, are zero. Our framework accommodates both directional and asymptotically omnibus tests. When a fixed number of Fourier coefficients are included, the resulting test is directional, with power directed toward alternatives captured by the selected basis terms. On the other hand, when the truncation orders are selected using a data-driven procedure over candidate sets whose upper bounds diverge as the sample size increases, the resulting test is asymptotically omnibus. Under the null hypothesis of copula constancy, the test statistics have asymptotic $\chi^2$ distributions and can detect local alternatives converging to the null at the parametric rate. Their critical values can therefore be obtained directly without requiring bootstrap procedures. Thus, unlike \citet{nasri2022change} and \citet{lu2025adaptive}, our proposed tests achieve asymptotic pivotality without sacrificing the parametric detection rate against local alternatives. Furthermore, under the null hypothesis, estimating the dynamic parameters and the unknown marginal distributions does not introduce any first-order estimation effect to our test statistics. Consequently, the feasible tests constructed from estimated residuals are asymptotically equivalent to their infeasible counterparts based on the latent innovations. Finally, our simulation results indicate that the proposed tests deliver substantial finite-sample power gains over existing procedures, particularly when the underlying copula changes smoothly over time. Notably, the data-driven test exhibits high sensitivity and robustness to sparse alternatives, in which changes in the copula are confined to only a subset of components. Taken together, this paper greatly extends and complements existing work on testing copula constancy.

The rest of this paper is organized as follows. In \autoref{Testing framework}, we present our smooth testing framework and the characterization on which it is based. \autoref{Asymptotic theory} systematically investigates the asymptotic theory of our proposed methods. \autoref{Data-driven selections} presents a data-driven selection procedure to automatically determine the truncation orders of the proposed smooth tests. We verify our theoretical results through Monte Carlo simulations in \autoref{Simulations} and provide two empirical applications to illustrate the usefulness of our tests in \autoref{Empirical applications}. \autoref{Concluding comments} concludes the paper. Proofs of theoretical results are presented in the Online Supplementary Material.

Throughout the paper, $\stackrel{d}{\longrightarrow}$ and $\stackrel{p}{\longrightarrow}$ denote convergence in distribution and convergence in probability, respectively. We reserve $\rightsquigarrow$ for weak convergence of random elements in the Skorokhod space $D([0,1],\mathbb{R}^p)$ endowed with the $J_1$ topology, where the dimension $p$ may vary from line to line. In addition, $\rightarrow$ is employed to denote ordinary convergence of deterministic
sequences. For any matrix $A$, we use $\|A\|=\sqrt{\operatorname{tr}(AA')}$ to denote its Frobenius norm. For any bounded function $r:[0,1]\rightarrow\mathbb{R}^p$, $p\geqslant 1$, we define its supremum norm by $\|r\|_{\infty}=\sup_{s\in[0,1]}\|r(s)\|$. For a real symmetric matrix $A$, we use $\lambda_{\min}(A)$ and $\lambda_{\max}(A)$ to denote its smallest and largest eigenvalues, respectively.

\section{Testing framework}
\label{Testing framework}

Let $\{(Y_t',X_t')'\}_{t=1}^n$ be a vector stochastic process, where $Y_t:=(Y_{t1},Y_{t2},\ldots,Y_{td})'$ is of dimension $d\geqslant 2$, and $X_t$ is a vector of predetermined or exogenous variables distinct from $Y_t$'s. Let $\mathcal{F}_{t-1}$ denote the $\sigma$-field generated by $\{Y_{t-1},Y_{t-2},\ldots;X_t,X_{t-1},\ldots\}$. Following \citet{chen2006estimation}, in this paper, we consider the following SCOMDY model:
\begin{align}
Y_t=\mu_t(\alpha_0)+V^{1/2}_t(\alpha_0)\varepsilon_t,
\label{SCOMDY}
\end{align}
where $\mu_t(\alpha_0)=(\mu_{t1}(\alpha_0),\mu_{t2}(\alpha_0),\ldots,\mu_{td}(\alpha_0))':=\mathbb{E}[Y_t|\mathcal{F}_{t-1}]$ is the conditional mean of $Y_t$ given $\mathcal{F}_{t-1}$, and $V_t(\alpha_0)=\text{diag}(V_{t1}(\alpha_0),V_{t2}(\alpha_0),\ldots,V_{td}(\alpha_0))$, in which $V_{tj}(\alpha_0)=\mathbb{E}[(Y_{tj} - \mu_{tj}(\alpha_0))^2|\mathcal{F}_{t-1}]$ denotes the conditional variance of $Y_{tj}$ given $\mathcal{F}_{t-1}$. The standardized innovation vectors $\{\varepsilon_t:=(\varepsilon_{t1},\varepsilon_{t2},\ldots,\varepsilon_{td})'\}_{t=1}^n$ are independent across $t$, and each $\varepsilon_t$ is independent of $\mathcal{F}_{t-1}$. For $j=1,2,\ldots,d$, $\{\varepsilon_{tj}\}_{t=1}^n$ are independent and identically distributed (i.i.d.) with distribution function $F_j$ and satisfy $\mathbb{E}(\varepsilon_{tj}) = 0$, $\mathbb{E}(\varepsilon_{tj}^2)=1$. The distribution function of $\varepsilon_t$ is given by $F_t(x) = C_t(F_1(x_1),F_2(x_2),\ldots,F_d(x_d))$, where $x = (x_1,x_2,\ldots,x_d)'\in\mathbb{R}^d$ and $C_t:[0,1]^d \rightarrow [0,1]$ is the copula function that maps the marginal distributions $F_1(x_1),F_2(x_2),\ldots,F_d(x_d)$ to the joint distribution $F_t(x)$. 

For analytical purposes, we adopt the rescaled-time representation $C_t(\cdot)
=
C(\cdot;t/n)$, $t=1,2,\ldots,n$, where $\{C(\cdot;\tau):\tau\in[0,1]\}$ is a copula family indexed by the rescaled time $\tau$. This article aims to test whether the unknown copula function $C(\cdot;\tau)$ exhibits structural changes. Define the time-averaged copula as $\bar{C}(\cdot) = \int_0^1C(\cdot;\tau)d\tau$. The null hypothesis of no structural change is
\begin{align}
\mathcal{H}_0:C(\cdot;\tau)=\bar{C}(\cdot)\,\,\,\text{for all}\,\,\,\tau\in[0,1],
\label{oriH_0}
\end{align}
whereas the alternative hypothesis is
\begin{align}
\mathcal{H}_1:\exists \,\mathcal{T}\subset [0,1]\,\,\,\text{with positive measure, such that}\,\,\, C(\cdot;\tau)\neq\bar{C}(\cdot)\,\,\, \text{$\forall\tau\in\mathcal{T}$.}
\label{oriH_1}
\end{align}

For $(u,\tau)\in[0,1]^d\times[0,1]$, define the function $\tilde\Delta(u;\tau) = C(u;\tau) - \bar{C}(u)$. The null hypothesis \eqref{oriH_0} is then equivalent to testing whether
\begin{align}
\tilde Q := \int_0^1\int_{[0,1]^d}\tilde \Delta^2(u;\tau)dud\tau=0.
\label{quantity tilde Q}
\end{align}
In the existing literature, \citet{lu2025adaptive} use the criterion in \eqref{quantity tilde Q} to construct a CvM-type test. Their procedure, however, requires estimating $C(\cdot;\tau)$ pointwise in $\tau$ using local smoothing methods. Consequently, the procedure depends critically on bandwidth selection, and the associated test statistic converges at a rate slower than the parametric rate $n^{-1/2}$. Furthermore, the choice of bandwidth also restricts the class of local alternatives that the test can detect.

In this paper, we adopt a criterion similar to that of \citet{lu2025adaptive} but use a completely different approach to test the null hypothesis in \eqref{oriH_0}. Suppose $C(u;\tau)$ is $d$-times differentiable with respect to $u$ such that $c(u;\tau) : = \partial^d C(u;\tau)/\partial u_1\partial u_2\ldots \partial u_d$ and $\bar{c}(u):= \partial^d \bar{C}(u)/\partial u_1\partial u_2\ldots \partial u_d$ are the copula densities of $C(u;\tau)$ and $\bar{C}(u)$, respectively. Define $\Delta(u;\tau) = c(u;\tau) - \bar{c}(u)$. The null hypothesis \eqref{oriH_0} also amounts to testing whether 
\begin{align}
Q := \int_0^1\int_{[0,1]^d} \Delta^2(u;\tau)dud\tau = 0.
\label{quantity Q}
\end{align}
Let $\{\pi_k(u)\}_{k=0}^\infty$ and $\{\psi_l(\tau)\}_{l=0}^\infty$ denote the complete orthonormal systems in function spaces $\mathcal{L}_2([0,1]^d)=\{f:\int_{[0,1]^d}|f(u)|^2du<\infty\}$ and $\mathcal{L}_2([0,1])=\{f:\int_0^1|f(\tau)|^2d\tau<\infty\}$, respectively, with $\pi_0(u)=\psi_0(\tau)\equiv 1$ and $\int_{[0,1]^d}\pi_i(u)\pi_j(u)du = \int_0^1\psi_i(\tau)\psi_j(\tau)d\tau = \delta_{ij}$, where $\delta_{ij}$ is the Kronecker delta that equals 1 if $i=j$ and 0 otherwise. Note that if $\Delta\in\mathcal{L}_2([0,1]^d\times [0,1])$, Parseval’s relation implies that
\begin{align*}
Q = \sum_{k=0}^\infty\sum_{l=0}^\infty \theta_{kl}^2,\,\,\, \text{where}\,\,\, \theta_{kl}:= \int_0^1\int_{[0,1]^d}\Delta(u;\tau)\pi_k(u)\psi_l(\tau)dud\tau.
\end{align*}
As such, testing \eqref{quantity Q} amounts to testing whether the Fourier coefficients $\theta_{kl}=0$ for $k,l\geqslant 0$. In addition, since
\begin{align*}
&\theta_{k0} = \int_0^1\int_{[0,1]^d}(c(u;\tau) - \bar{c}(u))\pi_k(u)dud\tau = \int_{[0,1]^d}(\bar{c}(u) - \bar{c}(u))\pi_k(u)du=0,\\
&\theta_{0l}=\int_0^1\int_{[0,1]^d}(c(u;\tau) - \bar{c}(u))\psi_l(\tau)dud\tau = \int_0^1(1 - 1)\psi_l(\tau)d\tau=0,
\end{align*}
and for $k,l\geqslant 1$, 
\begin{align*}
\int_0^1\int_{[0,1]^d}\bar{c}(u)\pi_k(u)\psi_l(\tau)dud\tau = \int_0^1\psi_l(\tau)d\tau\cdot \int_{[0,1]^d}\bar{c}(u)\pi_k(u)du=0,
\end{align*}
the expansion of $Q$ can be further simplified as 
\begin{align}
Q = \sum_{k=1}^\infty\sum_{l=1}^\infty\theta_{kl}^2,\,\,\,\text{where}\,\,\, \theta_{kl}= \int_0^1\int_{[0,1]^d}c(u;\tau)\pi_k(u)\psi_l(\tau)dud\tau.
\label{expansion Q}
\end{align}
Moreover, when $\Delta\notin\mathcal{L}_2([0,1]^d\times[0,1])$, we may interpret $Q$ as $Q=\infty$. Nevertheless, the Fourier coefficients $\theta_{kl}$ remain well defined and finite, provided that $\pi_k(u)$ and $\psi_l(\tau)$ are bounded on $[0,1]^d$ and $[0,1]$, respectively. 

A traditional CvM-type test can be viewed as jointly testing all coefficients $\theta_{kl}$ for $k,l\geqslant1$. However, such a procedure may suffer a loss of power because it aggregates all possible directions, even though not every coefficient $\theta_{kl}$ contains useful information about departures from the null. In sharp contrast, we propose to construct our test using coefficients that capture the main directions of departure. Note that orthonormal systems are typically constructed so that basis terms with smaller indices capture the dominant features, whereas those with larger indices correspond to the noise. We therefore consider the coefficients $\theta_{kl}$ for $k=1,2,\ldots,K$ and $l=1,2,\ldots,L$, where $K$ and $L$ may either be fixed or selected using data-driven methods. Specifically, when $K$ and $L$ are fixed, we test \eqref{oriH_0} against alternatives exhibiting deviations in certain prespecified directions. Alternatively, we can select $K$ and $L$ using data-driven methods, in which the upper bounds of the candidate sets diverge to infinity as $n\rightarrow \infty$. Thus, the data-driven methods ensure that our tests are asymptotically consistent against any fixed alternatives. In the following, we will first consider the case of fixed $K$ and $L$ in \autoref{Asymptotic theory}, and then analyze the data-driven methods in \autoref{Data-driven selections}.

Suppose that a $\sqrt{n}$-consistent estimator $\hat\alpha_n$ for $\alpha_0$ is available (e.g., using quasi-maximum likelihood estimation). We estimate each $\varepsilon_{tj}$ by $\hat{\varepsilon}_{tj} = V_{tj}^{-1/2}(\hat\alpha_n)(Y_{tj} - \mu_{tj}(\hat\alpha_n))$, $t=1,2,\ldots,n$, $j=1,2,\ldots,d$. Let $U_t = (U_{t1},U_{t2},\ldots,U_{td})'$ with $U_{tj} = F_j(\varepsilon_{tj})$. We then estimate each $U_{tj}$ by $\hat{U}_{tj} = \hat{F}_{nj}(\hat{\varepsilon}_{tj})$, where 
\begin{align*}
\hat{F}_{nj}(x_j) = \frac{1}{n}\sum_{t=1}^n1 (\hat{\varepsilon}_{tj}\leqslant x_j)
\end{align*}
is the empirical distribution function of $F_j$ and $1(\cdot)$ denotes the indicator function. Under some mild regularity conditions, it can be shown that
\begin{align*}
\mathbb{E}\left(\frac{1}{n}\sum_{t=1}^n\pi_k(U_t)\psi_l(t/n)\right)= \frac{1}{n}\sum_{t=1}^n\int_{[0,1]^d}c(u;t/n)\pi_k(u)\psi_l(t/n)du \rightarrow  \theta_{kl}.
\end{align*}
Thus, we propose to estimate each $\theta_{kl}$ via
\begin{align*}
\hat{\theta}_{kl} = \frac{1}{n}\sum_{t=1}^n \pi_k(\hat{U}_t)\psi_l(t/n),
\end{align*}
where $\hat{U}_t=(\hat{U}_{t1},\hat{U}_{t2},\ldots,\hat{U}_{td})'$. For notational brevity, let $\pi(u) = (\pi_1(u),\pi_2(u),\ldots,\pi_K(u))'$ and  $\hat{\theta}_{n,l} = n^{-1}\sum_{t=1}^n\pi(\hat{U}_t)\psi_l(t/n)$. We construct the building block of our test as $\hat{\theta}_n(K,L) = (\hat\theta_{n,1}',\hat\theta_{n,2}',\ldots,\hat\theta_{n,L}')'$. One may wonder whether replacing the unobservable $U_t$ with $\hat{U}_t$ introduces additional estimation uncertainty, causing the asymptotic behavior of $\hat{\theta}_{n}(K,L)$ to differ from that of its infeasible counterpart. Interestingly, we show that under the null hypothesis, neither the estimation of the dynamic parameters nor that of the unknown marginal distributions introduces a first-order estimation effect. In other words,
\begin{align*}
\hat{\theta}_{kl} = \frac{1}{n}\sum_{t=1}^n \pi_k(U_t)\psi_l(t/n) + o_p(n^{-1/2}).
\end{align*}
Such an oracle equivalence is a highly desirable property in econometrics, as it obviates the need for first-order corrections arising from nuisance estimation. By contrast, the weak limit of the empirical copula process generally contains additional first-order terms induced by estimating the unknown marginals; see Lemma~1(iii) of \citet{nasri2022change} for example. In the next section, we will formally establish the asymptotic properties of $\hat{\theta}_n(K,L)$, and construct the final test statistic based on it.

\section{Asymptotic theory}
\label{Asymptotic theory}
In this section, we establish the asymptotic properties of $\hat{\theta}_n(K,L)$ under the null hypothesis $\mathcal{H}_0$, the fixed alternatives $\mathcal{H}_1$, and a sequence of local alternatives $\mathcal{H}_{1L}$ that converges to $\mathcal{H}_0$ at the parametric rate $n^{-1/2}$. We construct our final test statistic based on $\hat{\theta}_n(K,L)$ and demonstrate that it is asymptotically $\chi^2$ distributed under the null hypothesis. Throughout this section, $K$ and $L$ are assumed to be fixed positive integers.

\subsection{Asymptotic null distribution}
\label{Asymptotic null distribution}

Let $\mathcal{A}\subset \mathbb{R}^{d_\alpha}$ denote the parameter space of the SCOMDY model. Define quantities
\begin{align*}
a_{tj} = V_{tj}^{-1/2}(\alpha_0)\frac{\partial \mu_{tj}}{\partial \alpha}(\alpha_0), \,\,\, b_{tj} = \big(2V_{tj}(\alpha_0)\big)^{-1}\frac{\partial V_{tj}}{\partial \alpha}(\alpha_0),\,\,\,\bar{a}_{nj} = \frac{1}{n}\sum_{t=1}^na_{tj},\,\,\,\bar{b}_{nj} = \frac{1}{n}\sum_{t=1}^nb_{tj}.
\end{align*} 
We make the following assumptions.

\begin{assumption}
\label{A1}
The process $\{(Y_t',X_t')'\}_{t=1}^n$ satisfies the SCOMDY specification in \eqref{SCOMDY}. For every $\alpha\in\mathcal{A}$, $\mu_t(\alpha)$ and $V_t(\alpha)$ are $\mathcal{F}_{t-1}$-measurable, and $V_{tj}(\alpha)>0$ almost surely for all $t=1,2,\ldots,n$ and $j=1,2,\ldots,d$.
\end{assumption}

\begin{assumption}
\label{A2}
The parameter space $\mathcal{A}$ is compact, and the true parameter $\alpha_0$ lies in the interior of $\mathcal{A}$. Moreover, $\sqrt{n}(\hat{\alpha}_n-\alpha_0)=O_p(1)$.
\end{assumption}

\begin{assumption}
\label{A3}
The innovation vectors $\{\varepsilon_t\}_{t=1}^n$ are independent across $t$, and each $\varepsilon_t$ is independent of $\mathcal{F}_{t-1}$. For each $j=1,2,\ldots,d$, \textup{(i)} $\{\varepsilon_{tj}\}_{t=1}^n$ are i.i.d. with distribution function $F_j$ and satisfy $\mathbb{E}(\varepsilon_{tj})=0$ and $\mathbb{E}(\varepsilon_{tj}^2)=1$; \textup{(ii)} There exists a fixed neighborhood $\mathcal{A}_0\subset\operatorname{int}(\mathcal{A})$ of $\alpha_0$ such that $m_{tj}(\alpha):=V_{tj}^{-1/2}(\alpha_0)\big(\mu_{tj}(\alpha)-\mu_{tj}(\alpha_0)\big)$ and $s_{tj}(\alpha):=\big(V_{tj}(\alpha)/V_{tj}(\alpha_0)\big)^{1/2}$ are twice continuously differentiable with respect to $\alpha$ in $\mathcal{A}_0$ almost surely; \textup{(iii)} For some $\delta>0$, $G_{1,tj}:=\sup_{\alpha\in\mathcal{A}_0}\left(\|\partial m_{tj}(\alpha)/\partial \alpha\|+\|\partial s_{tj}(\alpha)/\partial \alpha\|\right)$ and $G_{2,tj}:=\sup_{\alpha\in\mathcal{A}_0}\left(\|\partial^2m_{tj}(\alpha)/\partial \alpha\partial \alpha'\|+\|\partial^2s_{tj}(\alpha)/\partial \alpha\partial \alpha'\|\right)$ satisfy $\sup_{n\geqslant1}\max_{1\leqslant t\leqslant n}\mathbb{E}\left(G_{1,tj}^{2+\delta}+G_{2,tj}^{2+\delta}\right)<\infty$; \textup{(iv)}  $F_j$ admits a continuously differentiable density $f_j$, and $M_{0j}:=\sup_{x_j\in\mathbb{R}}(1+|x_j|)f_j(x_j)<\infty$, $M_{1j}:=\sup_{x_j\in\mathbb{R}}(1+|x_j|)^2|\dot f_j(x_j)|<\infty$. \textup{(v)} There exist sequences $\{a_j\}_{j=1}^d$ and $\{b_j\}_{j=1}^d$ such that $\bar{a}_{nj}\stackrel{p}\longrightarrow a_j$ and $\bar{b}_{nj}\stackrel{p}\longrightarrow b_j$.
\end{assumption}

\begin{assumption}
\label{A4}
For every $\tau\in[0,1]$, the copula $C(\cdot;\tau)$ is absolutely continuous on $[0,1]^d$ with copula density $c(u;\tau):=\partial^dC(u;\tau)/\partial u_1\partial u_2\cdots\partial u_d$. Moreover, 
the map $\tau\mapsto c(\cdot;\tau)$ is continuous except at finitely many points, at each of which finite left- and right-hand limits exist.
\end{assumption}

\begin{assumption}
\label{A5}
$\{\pi_k\}_{k=0}^\infty$ and $\{\psi_l\}_{l=0}^\infty$ are complete orthonormal bases of $\mathcal{L}_2([0,1]^d)$ and $\mathcal{L}_2([0,1])$, respectively, with $\pi_0(u)=\psi_0(\tau)\equiv1$ and $\int_{[0,1]^d}\pi_i(u)\pi_{j}(u)du=\int_0^1\psi_i(\tau)\psi_{j}(\tau)d\tau=\delta_{ij}$, the Kronecker delta. For $k=1,2,\ldots,K$ and $l=1,2,\ldots,L$, $\pi_k(u)$ and $\psi_l(\tau)$ are twice differentiable, with $\sup_{u\in[0,1]^d}\left(|\pi_k(u)|+\|\partial \pi_k(u)/\partial u\|+\|\partial^2 \pi_k(u)/\partial u\partial u'\|\right)<\infty$ and $\sup_{\tau\in[0,1]}\left(|\psi_l(\tau)|+|\dot\psi_l(\tau)|+|\ddot\psi_l(\tau)|\right)<\infty$.
\end{assumption}

Assumptions \ref{A1}--\ref{A3} impose regularity conditions on the dynamic specification, parameter estimation, and innovations of the SCOMDY model; see also \citet{chen2006estimation}, \citet{chan2009multivariate}, and \citet{lu2025adaptive}. A key difference from \citet{chen2006estimation} is that we do not impose strict stationarity on the observed process $\{(Y_t',X_t')'\}_{t=1}^n$. Instead, \autoref{A3} places conditions directly on the innovation sequence. It maintains temporal independence and time-invariant marginal distributions while allowing the contemporaneous dependence to vary over time. Such a formulation is therefore sufficiently general to accommodate structural changes in the innovation copula. In \autoref{A4}, the piecewise continuity in $\tau$ accommodates both smooth evolution and finitely many abrupt changes, facilitating the Riemann-sum approximations used in the subsequent analysis. \autoref{A4} also accommodates copulas with unbounded densities, including the Clayton and Gumbel families, whose densities may diverge near the boundary of \([0,1]^d\). \autoref{A5} is a basic assumption about the orthonormal system, which can be found in a large number of studies concerning Neyman's smooth tests; see e.g., \citet{barton1953neyman}, \citet{ledwina1994data}, \citet{bera2002neyman}, \citet{kraus2009adaptive}, and \citet{song2022smooth}. A popular choice for $\pi_k(u)$ is  the product basis functions
\begin{align}
\pi_{\mathbf{k}}(u)
=
\prod_{j=1}^d\varphi_{k_j}(u_j),
\,\,\,
\mathbf{k}
=
(k_1,k_2,\ldots,k_d)'
\in
\mathbb{N}_0^d,
\label{product bases}
\end{align}
where $\{\varphi_k\}_{k=0}^{\infty}$ is a complete orthonormal system of $\mathcal{L}_2([0,1])$ satisfying $\varphi_0(z)\equiv1$ and $\int_0^1\varphi_i(z)\varphi_j(z)dz=\delta_{ij}$. Examples for $\varphi_k(u)$ and $\psi_l(\tau)$ include (i) the Legendre polynomials
\begin{equation}
g_j(z)=(-1)^{j}\sqrt{2j+1}  \sum_{r=0}^{j}C_j^r
C_{j+r}^{r}(-z)^{r},\,\,\, j\in\mathbb{N}_0,
\label{Le}
\end{equation}
where $C_j^r=\frac{j!}{r!(j-r)!}$ and $m!=1\times2\times\cdots\times m$ for any positive integer $m$ with the convention $0!=1$, and (ii) the trigonometric bases
\begin{align}
g_0(z)=1, \,\,\, g_{2j-1}(z)=\sqrt{2}\sin{(2\pi j z)},\,\,\, g_{2j}(z)=\sqrt{2}\cos{(2\pi j z)}, \,\,\, j\geqslant 1.
\label{triangonal}
\end{align}
Additionally, other bases may be considered, such as the Chebyshev and Jacobi polynomials; see \citet{rivlin2020chebyshev} and \citet{szeg1939orthogonal} for detailed discussions.

Let $I_L$ denote the $L$-dimensional identity matrix. Define
\begin{align*}
\mu_K = \int_{[0,1]^d}\pi(u)\bar{c}(u)du\,\,\,\text{and}\,\,\, \Sigma_K = \int_{[0,1]^d}(\pi(u) - \mu_K)(\pi(u) - \mu_K)'\bar{c}(u)du.
\end{align*}
We first establish an important asymptotic expansion of $\hat{\theta}_n(K,L)$ under the null hypothesis. Our result reveals that, under $\mathcal{H}_0$, estimating the dynamic parameters and the unknown marginal distributions does not introduce any first-order estimation effect. Consequently, $\hat{\theta}_n(K,L)$ has the same first-order asymptotic behavior as its infeasible counterpart constructed from the unobservable innovations.

\begin{theorem}
\label{Th1}
Suppose Assumptions \ref{A1}--\ref{A5} hold and $\Sigma_K$ is positive definite. Then, under $\mathcal{H}_0$ in \eqref{oriH_0}, as $n\rightarrow\infty$,
\begin{align}
\hat{\theta}_{kl} = \frac{1}{n}\sum_{t=1}^n\pi_k(U_t)\psi_l(t/n) + o_p(n^{-1/2}).
\label{hat theta kl expansion}
\end{align}
Furthermore, $\sqrt{n}\hat{\theta}_n(K,L)\stackrel{d}\longrightarrow \mathcal{N}_{KL}(0_{KL},I_L\otimes \Sigma_K)$, where $\otimes$ is the Kronecker product.
\end{theorem}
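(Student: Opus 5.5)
The plan is to decompose $\hat{\theta}_{kl}-n^{-1}\sum_{t}\pi_k(U_t)\psi_l(t/n)$ by a second-order Taylor expansion of $\pi_k$ around $U_t$:
\begin{align*}
\hat{\theta}_{kl}-\frac{1}{n}\sum_{t=1}^n\pi_k(U_t)\psi_l(t/n)=\frac{1}{n}\sum_{t=1}^n\frac{\partial \pi_k(U_t)}{\partial u'}(\hat U_t-U_t)\psi_l(t/n)+R_n,
\end{align*}
with $|R_n|\leqslant C\max_{t,j}|\hat U_{tj}-U_{tj}|^2$ by \autoref{A5}. The first step is a uniform representation of $\hat U_{tj}-U_{tj}$. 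Write $\hat\varepsilon_{tj}=(\varepsilon_{tj}-m_{tj}(\hat\alpha_n))/s_{tj}(\hat\alpha_n)$. Then
\begin{align*}
\hat U_{tj}-U_{tj}=\big[\hat F_{nj}(\hat\varepsilon_{tj})-F_j(\hat\varepsilon_{tj})\big]+\big[F_j(\hat\varepsilon_{tj})-F_j(\varepsilon_{tj})\big].
\end{align*}
Standard residual-empirical-process arguments (\autoref{A3}(ii)--(iv), as in \citet{chen2006estimation}) give that the first bracket equals $F_{nj}(\varepsilon_{tj})-F_j(\varepsilon_{tj})$ plus a term $f_j(\varepsilon_{tj})\,\varepsilon_{tj}$-type drift from the estimated $\alpha$, with error $o_p(n^{-1/2})$ uniformly. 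Here $F_{nj}$ denotes the empirical distribution function of the true innovations. The second bracket equals $-f_j(\varepsilon_{tj})\big(a_{tj}'+\varepsilon_{tj}b_{tj}'\big)(\hat\alpha_n-\alpha_0)+o_p(n^{-1/2})$. The key point is that the drift contributed by $\hat F_{nj}$ involves $\bar a_{nj}$ and $\bar b_{nj}$, which are averaged versions of the same $a_{tj}$ and $b_{tj}$. Hence $\max_{t,j}|\hat U_{tj}-U_{tj}|=O_p(n^{-1/2})$, and $R_n=O_p(n^{-1})$.

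The second step plugs these pieces into the linear term, giving three kinds of contributions.

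\emph{Marginal estimation.} This contribution is
\begin{align*}
n^{-1}\sum_t\partial_j\pi_k(U_t)\psi_l(t/n)\big(F_{nj}(\varepsilon_{tj})-F_j(\varepsilon_{tj})\big)=n^{-2}\sum_{t,s}\partial_j\pi_k(U_t)\psi_l(t/n)\big(1(U_{sj}\leqslant U_{tj})-U_{tj}\big),
\end{align*}
a V-statistic. Its Hoeffding projection onto $U_s$ is $n^{-1}\sum_s\int\partial_j\pi_k(u)\big(1(U_{sj}\leqslant u_j)-u_j\big)\bar c(u)du\cdot n^{-1}\sum_t\psi_l(t/n)$, where $\bar c$ is time-invariant under $\mathcal{H}_0$. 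Because $\int_0^1\psi_l=0$ and the Riemann sum is $O(n^{-1})$ (by \autoref{A5}), this is $O_p(n^{-1})\cdot O_p(n^{-1/2})\cdot n$, which is $o_p(n^{-1/2})$ after care. The projection onto $U_t$ has mean zero because $\mathbb{E}[1(U_{sj}\leqslant u)]=u$. The degenerate remainder is $O_p(n^{-1})$.

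\emph{Parameter estimation.} These terms are products of $(\hat\alpha_n-\alpha_0)$ with averages of the form $n^{-1}\sum_t\partial_j\pi_k(U_t)f_j(\varepsilon_{tj})(a_{tj}+\varepsilon_{tj}b_{tj})\psi_l(t/n)$, together with the corresponding $\bar a_{nj},\bar b_{nj}$ drift terms. Since $\varepsilon_t$ is independent of $\mathcal{F}_{t-1}$ and, under $\mathcal{H}_0$, $U_t$ has the fixed law $\bar c$, the quantity $\mathbb{E}[\partial_j\pi_k(U_t)f_j(\varepsilon_{tj})\,g(\varepsilon_{tj})\mid\mathcal{F}_{t-1}]$ is a constant $\kappa_{kj}$ not depending on $t$. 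A martingale-difference LLN (using the moments in \autoref{A3}(iii)) then reduces each average to $\kappa_{kj}\, n^{-1}\sum_t a_{tj}\psi_l(t/n)+o_p(1)$.

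The third step is the main obstacle. One must show that the combined coefficient of $\hat\alpha_n-\alpha_0$ vanishes, i.e., that $n^{-1}\sum_t a_{tj}\psi_l(t/n)$ minus the matching $\bar a_{nj}\,n^{-1}\sum_t\psi_l(t/n)$ is $o_p(1)$. A natural route is to observe that the $\hat F_{nj}$ drift term multiplies $\psi_l(t/n)$ by the constant $\bar a_{nj}$, so it vanishes in the limit, and then to handle $n^{-1}\sum_t a_{tj}\psi_l(t/n)$. I would try to prove $n^{-1}\sum_{t\leqslant \lfloor ns\rfloor}a_{tj}\to s\,a_j$ uniformly in $s$, strengthening \autoref{A3}(v) as needed, and combine it with summation by parts using $\dot\psi_l$ bounded and $\int\psi_l=0$. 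If this is unavailable, an alternative is to exploit a cancellation: integration by parts shows that $\int\partial_j\pi_k(u)\,\bar c(u)\,du$-type constants combine with the empirical-copula drift so that the $\alpha$-effect cancels exactly, as in \citet{chen2006estimation}.

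Finally, for the CLT, set $Z_t=\pi(U_t)\otimes\psi(t/n)$, where $\psi=(\psi_1,\ldots,\psi_L)'$; these are independent across $t$ under $\mathcal{H}_0$. The mean is $\mu_K\, n^{-1}\sum_t\psi_l(t/n)=O(n^{-1})$, so after scaling by $\sqrt n$ it contributes $o(1)$. The variance is $n^{-1}\sum_t\psi(t/n)\psi(t/n)'\otimes\Sigma_K\to I_L\otimes\Sigma_K$ by a Riemann sum and orthonormality; this uses $\mathrm{Var}(\pi(U_t))=\Sigma_K$, and the cross term in $\mu_K\mu_K'$ is negligible. Lindeberg's condition holds trivially by boundedness of $\pi$ and $\psi$, and the Lindeberg--Feller theorem together with the Cramér--Wold device yields $\sqrt n\hat\theta_n(K,L)\stackrel{d}\longrightarrow\mathcal{N}_{KL}(0_{KL},I_L\otimes\Sigma_K)$.
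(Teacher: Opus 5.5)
Your architecture matches the paper's proof: a second-order Taylor expansion of $\pi_k$; a split of $\hat U_{tj}-U_{tj}$ into the residual-empirical-process drift, the piece $F_j(\hat\varepsilon_{tj})-F_j(\varepsilon_{tj})$, and the V-statistic in $F_{nj}-F_j$ (handled by the same Hoeffding decomposition); martingale-difference centering of $W_{tj}=\partial_j\pi_k(U_t)f_j(\varepsilon_{tj})$; summation by parts for the time-weighted average of $a_{tj}$; and Lindeberg--Feller. Two steps, however, do not go through as written.

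\emph{The remainder.} The inference ``hence $\max_{t,j}|\hat U_{tj}-U_{tj}|=O_p(n^{-1/2})$'' does not follow. The second bracket is essentially $-f_j(\varepsilon_{tj})(a_{tj}+\varepsilon_{tj}b_{tj})'(\hat\alpha_n-\alpha_0)$. It involves the individual $a_{tj},b_{tj}$, not their averages, so its maximum over $t$ is only controlled by $n^{-1/2}\max_tG_{1,tj}$. Under the $(2+\delta)$-moments of Assumption~\ref{A3}(iii) this is $O_p(n^{-1/2+1/(2+\delta)})$. Even for a Gaussian AR(1) conditional mean, $\max_t|Y_{t-1}|$ grows like $\sqrt{\log n}$. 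Your bound $|R_n|\leqslant C\max_{t,j}|\hat U_{tj}-U_{tj}|^2$ therefore only gives $O_p(n^{-\delta/(2+\delta)})$, which is guaranteed to be $o_p(n^{-1/2})$ only when $\delta>2$. Bound the remainder by the average instead:
\[
|R_n|\leqslant \frac{C}{n}\sum_{t=1}^n\|\hat U_t-U_t\|^2\leqslant 2C\sum_{j=1}^d\Big(\sup_{x}|\hat F_{nj}(x)-F_j(x)|^2+\frac{M_{0j}^2}{n}\sum_{t=1}^n(\hat\varepsilon_{tj}-\varepsilon_{tj})^2\Big)=O_p(n^{-1}).
\]
The last sum is $O_p(n^{-1})$ using only second moments of $G_{1,tj}$ and $\varepsilon_{tj}$. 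For the same reason, the remainders you call ``$o_p(n^{-1/2})$ uniformly'' in the second bracket are negligible only after averaging over $t$.

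\emph{The time-weighted average of $a_{tj}$.} No strengthening of Assumption~\ref{A3}(v) is needed, and not even uniformity in $s$. Under $\mathcal{H}_0$ the $a_{tj}$ do not depend on $n$, and $\sup_n\mathbb{E}\|\bar a_{nj}-a_j\|^{2+\delta}<\infty$ by Assumption~\ref{A3}(iii). Hence $\{\|\bar a_{tj}-a_j\|\}_{t\geqslant1}$ is uniformly integrable, and Assumption~\ref{A3}(v) upgrades to $\mathbb{E}\|\bar a_{tj}-a_j\|\to0$ as $t\to\infty$. Summation by parts then gives
\[
\mathbb{E}\Big\|\frac1n\sum_{t=1}^n(a_{tj}-a_j)\psi_l(t/n)\Big\|\leqslant|\psi_l(1)|\,\mathbb{E}\|\bar a_{nj}-a_j\|+\frac{\|\dot\psi_l\|_\infty}{n^2}\sum_{t=1}^{n-1}t\,\mathbb{E}\|\bar a_{tj}-a_j\|\longrightarrow0
\]
by a Ces\`aro argument, and Markov's inequality finishes (likewise for $b_{tj}$).

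Your fallback via an exact Chen--Fan-type cancellation is not a separate route. $\mathbb{E}[\partial_j\pi_k(U_t)f_j(\varepsilon_{tj})]$ is generally nonzero, so after centering what remains is precisely $n^{-1}\sum_t(a_{tj}-a_j)\psi_l(t/n)$. The $\hat F_{nj}$-drift dies on its own, because its only time variation is through the i.i.d.\ $W_{tj}$ and $\psi_l(t/n)$. The $a_{tj}$-drift, by contrast, must be killed by exactly this argument; in the static case plain averaging suffices.

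Two small slips remain. The Hoeffding projection onto $U_s$ is $O_p(n^{-1/2})\cdot O(n^{-1})$, with no extra factor $n$. And to match the stacking of $\hat\theta_n(K,L)$ and the covariance $I_L\otimes\Sigma_K$, you need $\psi(t/n)\otimes\pi(U_t)$, not $\pi(U_t)\otimes\psi(t/n)$.
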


The expansion in \eqref{hat theta kl expansion} establishes an oracle equivalence: under $\mathcal{H}_0$, replacing $U_t$ with $\hat{U}_t$ affects each coefficient only by a term of order $o_p(n^{-1/2})$. Recall that in \autoref{A2}, we allow a wide range of estimators for $\alpha_0$ by requiring only $\sqrt{n}(\hat{\alpha}_n-\alpha_0)=O_p(1)$. Such a rate condition neither requires an asymptotic linear representation nor implies asymptotic normality. For example, some Lasso-type estimators can be $\sqrt{n}$-consistent while having non-Gaussian limiting distributions \citep{knight2000asymptotics,potscher2009distribution}. Hence, our established theory also accommodates such penalized estimators, provided that they are $\sqrt{n}$-consistent. Furthermore, the Kronecker structure $I_L\otimes\Sigma_K$ implies that the coefficient vectors associated with different temporal bases are asymptotically independent, with $\Sigma_K$ capturing the contemporaneous dependence. We require $\Sigma_K$ to be positive definite to ensure that the limiting distribution is nondegenerate, permitting the subsequent construction of a standardized quadratic-form test statistic. 

Inspired by \autoref{Th1}, we derive an asymptotic $\chi^2$ test by first estimating the unknown covariance matrix $\Sigma_K$. Let $\hat{\mu}_{n,K} = n^{-1}\sum_{t=1}^n\pi(\hat{U}_t)$. We propose to estimate $\Sigma_K$ via
\begin{align*}
\hat{\Sigma}_{n,K} = \frac{1}{n}\sum_{t=1}^n(\pi(\hat{U}_t) - \hat{\mu}_{n,K})(\pi(\hat{U}_t) - \hat{\mu}_{n,K})'.
\end{align*}
Then, we standardize $\hat\theta_n(K,L)$ by $\hat{\Pi}_n(K,L):=((\hat\Sigma_{n,K}^{-1/2}\hat{\theta}_{n,1})',(\hat\Sigma_{n,K}^{-1/2}\hat{\theta}_{n,2})',\ldots,(\hat\Sigma_{n,K}^{-1/2}\hat{\theta}_{n,L})')'$. The final test statistic is constructed as the squared  Frobenius norm of $\hat{\Pi}_n(K,L)$:
\begin{align*}
\hat{\Psi}_n^2(K,L):= \|\hat{\Pi}_n(K,L)\|^2 = \sum_{l=1}^L \hat{\theta}_{n,l}'\hat{\Sigma}_{n,K}^{-1}\hat{\theta}_{n,l}.
\end{align*}
The next theorem formally states the asymptotic properties of $\hat{\Psi}_n^2(K,L)$ under the null hypothesis.

\begin{theorem}
\label{Th2}
Suppose Assumptions \ref{A1}--\ref{A5} hold and $\Sigma_K$ is positive definite. Then, under $\mathcal{H}_0$ in \eqref{oriH_0}, as $n\rightarrow\infty$, $n\hat{\Psi}_n^2(K,L)\stackrel{d}\longrightarrow \chi^2_{KL}$.
\end{theorem}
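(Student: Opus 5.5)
Theorem~\ref{Th1} already supplies the joint limit $\sqrt{n}\hat{\theta}_n(K,L)\stackrel{d}\longrightarrow \mathcal{N}_{KL}(0_{KL},I_L\otimes \Sigma_K)$ under $\mathcal{H}_0$. The plan is therefore to show consistency of the covariance estimator, $\hat\Sigma_{n,K}\stackrel{p}\longrightarrow\Sigma_K$, and then conclude by the continuous mapping theorem and Slutsky's lemma. Write
\begin{align*}
n\hat{\Psi}_n^2(K,L)=\big(\sqrt{n}\hat\theta_n(K,L)\big)'\big(I_L\otimes\hat\Sigma_{n,K}^{-1}\big)\big(\sqrt{n}\hat\theta_n(K,L)\big).
\end{align*}
Since $\Sigma_K$ is positive definite, matrix inversion is continuous at $\Sigma_K$. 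Hence, once $\hat\Sigma_{n,K}\stackrel{p}\longrightarrow\Sigma_K$, we get $I_L\otimes\hat\Sigma_{n,K}^{-1}\stackrel{p}\longrightarrow I_L\otimes\Sigma_K^{-1}$. Then $n\hat\Psi_n^2(K,L)$ converges in distribution to $Z'(I_L\otimes\Sigma_K^{-1})Z$ with $Z\sim\mathcal{N}_{KL}(0,I_L\otimes\Sigma_K)$. Putting $Z=(I_L\otimes\Sigma_K^{1/2})W$ with $W\sim\mathcal{N}_{KL}(0,I_{KL})$, this limit equals $W'W\sim\chi^2_{KL}$.

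The substantive step is $\hat\Sigma_{n,K}\stackrel{p}\longrightarrow\Sigma_K$, which I split in two.

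\textbf{Infeasible version.} Define $\tilde\mu_{n,K}=n^{-1}\sum_t\pi(U_t)$ and $\tilde\Sigma_{n,K}=n^{-1}\sum_t(\pi(U_t)-\tilde\mu_{n,K})(\pi(U_t)-\tilde\mu_{n,K})'$. Under $\mathcal{H}_0$ the $U_t$ are i.i.d. with density $\bar c$, and $\pi$ is bounded by \autoref{A5}. The weak law of large numbers then gives $\tilde\mu_{n,K}\stackrel{p}\longrightarrow\mu_K$ and $n^{-1}\sum_t\pi(U_t)\pi(U_t)'\stackrel{p}\longrightarrow\int\pi\pi'\bar c\,du$, hence $\tilde\Sigma_{n,K}\stackrel{p}\longrightarrow\Sigma_K$.

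\textbf{Feasible versus infeasible.} Here I need $n^{-1}\sum_t\|\pi(\hat U_t)-\pi(U_t)\|=o_p(1)$. Boundedness of $\pi$ turns this into control of the outer products as well, since
\begin{align*}
\|\pi(\hat U_t)\pi(\hat U_t)'-\pi(U_t)\pi(U_t)'\|\leqslant 2\sup_u\|\pi(u)\|\cdot\|\pi(\hat U_t)-\pi(U_t)\|.
\end{align*}
By the mean value theorem and the bounded gradient in \autoref{A5}, it then suffices to show $\max_{t,j}|\hat U_{tj}-U_{tj}|=o_p(1)$. Decompose
\begin{align*}
\hat U_{tj}-U_{tj}=\big[\hat F_{nj}(\hat\varepsilon_{tj})-F_j(\hat\varepsilon_{tj})\big]+\big[F_j(\hat\varepsilon_{tj})-F_j(\varepsilon_{tj})\big].
\end{align*}

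The expected obstacle is controlling these two brackets uniformly, because the residuals $\hat\varepsilon_{tj}$ depend on $\hat\alpha_n$. For the first bracket I would use a uniform law for the residual empirical process, $\sup_x|\hat F_{nj}(x)-F_j(x)|=o_p(1)$. This is presumably already established in the course of proving Theorem~\ref{Th1}, where the stronger $O_p(n^{-1/2})$ expansion is needed, so it can be taken from there. For the second bracket, write $\hat\varepsilon_{tj}=(\varepsilon_{tj}-m_{tj}(\hat\alpha_n))/s_{tj}(\hat\alpha_n)$. A mean-value expansion with the envelopes $G_{1,tj}$ of \autoref{A3}(iii) and $\sqrt{n}(\hat\alpha_n-\alpha_0)=O_p(1)$ gives
\begin{align*}
|F_j(\hat\varepsilon_{tj})-F_j(\varepsilon_{tj})|\lesssim M_{0j}\,G_{1,tj}\,\|\hat\alpha_n-\alpha_0\|,
\end{align*}
using $(1+|x|)f_j(x)\leqslant M_{0j}$ from \autoref{A3}(iv) to absorb the multiplicative factor $|\varepsilon_{tj}|$ coming from the scale perturbation. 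This bound is $O_p(n^{-1/2})$ in average, which is enough, since only the average over $t$ is needed and $n^{-1}\sum_tG_{1,tj}=O_p(1)$ by the moment bound in \autoref{A3}(iii). Combining the two parts yields $\hat\Sigma_{n,K}\stackrel{p}\longrightarrow\Sigma_K$ and completes the argument.
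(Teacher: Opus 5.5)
Your proposal is correct and follows essentially the paper's route. The paper also combines Theorem~\ref{Th1} with consistency of $\hat\Sigma_{n,K}$ (its Proposition~\ref{pro21}) and concludes by Slutsky's theorem and the continuous mapping theorem. The differences are minor: the paper proves $\hat\Sigma_{n,K}\stackrel{p}\longrightarrow\Sigma_K$ without invoking $\mathcal{H}_0$ (via Riemann sums, so the result can be reused for Theorems~\ref{Th3}--\ref{Th4}), and it controls $n^{-1}\sum_t\|\hat U_t-U_t\|$ by Cauchy--Schwarz from the $L^2$ bound supplied by Corollary~\ref{coro1} and Lemma~\ref{Lemma2}, whereas you use the i.i.d.\ law of large numbers under the null and an $L^1$ bound.
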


\autoref{Th2} indicates that the $n$-scaled smooth test statistic $n\hat{\Psi}_n^2(K,L)$ is asymptotically $ \chi^2$-distributed under the null hypothesis of copula constancy, with critical values readily available. For a given significance level $\alpha\in(0,1/2)$, we reject $\mathcal{H}_0$ whenever $n\hat \Psi_{n}^2(K,L)>\chi_{KL, {1-\alpha}}^{2}$, where $\chi_{s, {1-\alpha}}^{2}$ is the $(1-\alpha)$-percentile of the $\chi_s^{2}$ distribution.

\subsection{Asymptotic power}
\label{asymptotic power}

Next, we investigate the asymptotic power properties of $\hat{\Psi}_n^2(K,L)$. We first consider fixed alternatives, followed by a sequence of local alternatives that converges to the null at the parametric rate $n^{-1/2}$.

\subsubsection{Power against fixed alternatives}
\label{Power against fixed alternatives}

We first analyze the asymptotic properties of our test under the following fixed alternatives:
\begin{align}
\begin{split}
\mathcal{H}_{1a}:&\ C(\cdot;\tau)\neq\bar{C}(\cdot)\,\,\, \text{for some}\,\,\,\tau\in[0,1],\\
&\ \text{and there exist $1\leqslant k\leqslant K$ and $1\leqslant l\leqslant L$, such that $\theta_{kl}\neq 0$.}
\end{split}
\label{adjH_1}
\end{align}
Note that in \eqref{adjH_1}, the additional condition on $\theta_{kl}$ requires that the departure have a nonzero projection onto at least one of the considered directions. Thus, $\mathcal{H}_{1a}$ consists of fixed alternatives that are detectable under the given testing orders $K$ and $L$.

Define $\theta_l := (\theta_{1l},\theta_{2l},\ldots,\theta_{Kl})'$, $\Pi(K,L):=((\Sigma_K^{-1/2}\theta_1)',(\Sigma_K^{-1/2}\theta_2)',\ldots,(\Sigma_K^{-1/2}\theta_L)')'$, and 
\begin{align*}
\Psi^2(K,L):=\left\|\Pi(K,L)\right\|^2=\sum_{l=1}^L\theta_l'\Sigma_K^{-1}\theta_l.
\end{align*}
The following theorem shows that our proposed test $\hat{\Psi}_n^2(K,L)$ is consistent against $\mathcal{H}_{1a}$.

\begin{theorem}
\label{Th3}
Suppose Assumptions \ref{A1}--\ref{A5} hold and $\Sigma_K$ is positive definite. Then, under $\mathcal{H}_{1a}$ in \eqref{adjH_1}, as $n\rightarrow\infty$, $\hat{\Psi}_n^2(K,L)\stackrel{p}\longrightarrow \Psi^2(K,L)>0$.
\end{theorem}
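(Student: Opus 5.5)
To prove Theorem~\ref{Th3}, I will show two consistency statements and then combine them by the continuous mapping theorem. The first is $\hat{\theta}_{kl}\stackrel{p}\longrightarrow\theta_{kl}$ for all $1\leqslant k\leqslant K$, $1\leqslant l\leqslant L$. The second is $\hat{\Sigma}_{n,K}\stackrel{p}\longrightarrow\Sigma_K$. Since $\Sigma_K$ is positive definite, $\hat\Sigma_{n,K}^{-1}\stackrel{p}\longrightarrow\Sigma_K^{-1}$, and so
\begin{align*}
\hat{\Psi}_n^2(K,L)=\sum_{l=1}^L\hat{\theta}_{n,l}'\hat{\Sigma}_{n,K}^{-1}\hat{\theta}_{n,l}\stackrel{p}\longrightarrow\sum_{l=1}^L\theta_l'\Sigma_K^{-1}\theta_l=\Psi^2(K,L).
\end{align*}
Strict positivity then follows because some $\theta_{kl}\neq0$ under $\mathcal{H}_{1a}$ and $\Sigma_K^{-1}$ is positive definite.

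For the coefficients, I split
\begin{align*}
\hat\theta_{kl}-\theta_{kl}=\frac1n\sum_{t=1}^n\{\pi_k(\hat U_t)-\pi_k(U_t)\}\psi_l(t/n)+\frac1n\sum_{t=1}^n\{\pi_k(U_t)\psi_l(t/n)-\mathbb{E}[\pi_k(U_t)]\psi_l(t/n)\}+\Big(\frac1n\sum_{t=1}^n\mathbb{E}[\pi_k(U_t)]\psi_l(t/n)-\theta_{kl}\Big).
\end{align*}

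\emph{Third term.} By the definition of $\theta_{kl}$ in \eqref{expansion Q}, $\mathbb{E}[\pi_k(U_t)]=\int\pi_k(u)c(u;t/n)du=:g_k(t/n)$. The map $\tau\mapsto g_k(\tau)\psi_l(\tau)$ is bounded and piecewise continuous with finitely many jumps, by Assumptions~\ref{A4} and \ref{A5}; boundedness of $\pi_k$ makes $g_k$ inherit the piecewise continuity of $c(\cdot;\tau)$ in the $\mathcal{L}_1$ sense. Hence the Riemann sum converges to $\int_0^1 g_k(\tau)\psi_l(\tau)d\tau=\theta_{kl}$.

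\emph{Second term.} This is an average of independent, uniformly bounded, mean-zero variables, so its variance is $O(n^{-1})$ and Chebyshev gives $o_p(1)$.

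\emph{First term.} This is the main obstacle. Consistency only requires it to be $o_p(1)$, not $o_p(n^{-1/2})$, and note that the oracle argument of Theorem~\ref{Th1} used $\mathcal{H}_0$. I use the mean value theorem with the bounded gradient of $\pi_k$ from Assumption~\ref{A5}, which bounds the term by
\begin{align*}
C\,\frac1n\sum_{t}\sum_j|\hat U_{tj}-U_{tj}|\leqslant C\sum_j\max_t|\hat U_{tj}-U_{tj}|.
\end{align*}
I then decompose
\begin{align*}
\hat U_{tj}-U_{tj}=\{\hat F_{nj}(\hat\varepsilon_{tj})-F_j(\hat\varepsilon_{tj})\}+\{F_j(\hat\varepsilon_{tj})-F_j(\varepsilon_{tj})\}.
\end{align*}

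\emph{Marginal-estimation piece.} Under Assumption~\ref{A3}(i) the marginals are time-invariant and i.i.d. even under the alternative. The residual empirical process $\sup_x|\hat F_{nj}(x)-F_j(x)|$ therefore equals $\sup_x|F_{nj}(x)-F_j(x)|+O_p(n^{-1/2})$ by the standard residual-empirical-process argument. This uses $\sqrt n$-consistency of $\hat\alpha_n$ (Assumption~\ref{A2}), the derivative bounds in Assumption~\ref{A3}(ii)--(iii), and the density bounds in (iv); the copula plays no role, so the argument does not need $\mathcal{H}_0$. Glivenko--Cantelli then makes this piece $o_p(1)$ uniformly.

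\emph{Residual piece.} For the second bracket I write $\hat\varepsilon_{tj}=\{\varepsilon_{tj}-m_{tj}(\hat\alpha_n)\}/s_{tj}(\hat\alpha_n)$. I use $M_{0j}<\infty$ (which gives a bound on $|x|f_j(x)$) together with a Taylor expansion to bound $|F_j(\hat\varepsilon_{tj})-F_j(\varepsilon_{tj})|\leqslant C\|\hat\alpha_n-\alpha_0\|G_{1,tj}$. Averaging over $t$ then gives $O_p(n^{-1/2})$ thanks to the moment bound on $G_{1,tj}$. Using the average rather than the maximum avoids needing a maximal inequality.

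\emph{Covariance matrix.} For $\hat\Sigma_{n,K}$ I apply the same three-way decomposition with $\psi_l\equiv1$ to $n^{-1}\sum_t\pi(\hat U_t)$ and $n^{-1}\sum_t\pi(\hat U_t)\pi(\hat U_t)'$. The products have bounded gradients by Assumption~\ref{A5}, and the limits are $\int_0^1\int\pi\,c(u;\tau)du\,d\tau=\mu_K$ and $\int\pi\pi'\bar c\,du$, because $\bar c=\int_0^1c(\cdot;\tau)d\tau$. Therefore $\hat\Sigma_{n,K}\stackrel{p}\longrightarrow\Sigma_K$ even under the alternative, which completes the argument.
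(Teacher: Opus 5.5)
Your proposal is correct and follows essentially the same route as the paper: the same three-way decomposition of $\hat\theta_{kl}-\theta_{kl}$ (a bounded-gradient bound on the pseudo-observation term, Chebyshev on the centered term, and a Riemann-sum limit for the mean term), then consistency of $\hat\Sigma_{n,K}$ without invoking $\mathcal{H}_0$ as in Proposition~\ref{pro21}, and finally the continuous mapping theorem. The only variation is that you bound $n^{-1}\sum_t|F_j(\hat\varepsilon_{tj})-F_j(\varepsilon_{tj})|$ directly by $C\|\hat\alpha_n-\alpha_0\|\,n^{-1}\sum_t G_{1,tj}$, exploiting $M_{0j}<\infty$ as in Proposition~\ref{pro1}(i), whereas the paper applies Cauchy--Schwarz and the squared-residual bound of Lemma~\ref{Lemma2}; both arguments remain valid under the alternative.
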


\autoref{Th3} implies that, under $\mathcal{H}_{1a}$, $n\hat{\Psi}_n^2(K,L)$ diverges to infinity in probability. In particular, we have
\begin{align*}
\mathbb{P}\left\{n\hat{\Psi}_n^2(K,L)>\chi^2_{KL,1-\alpha}\right\}\rightarrow1,\,\,\,n\rightarrow\infty.
\end{align*}
Hence, the proposed test is consistent against all fixed alternatives in $\mathcal{H}_{1a}$. 

It is worth noting that \eqref{adjH_1} does not encompass all possible deviations in \eqref{oriH_1}. Specifically, \eqref{oriH_1} may hold even when $\theta_{kl}=0$ for all $1\leqslant k\leqslant K$ and $1\leqslant l\leqslant L$. Thus, with fixed $K$ and $L$, our proposed test is directional rather than omnibus. Nevertheless, we do not view this limitation as a severe drawback. The reason is that smooth tests are generally designed to improve power by concentrating on directions with the most pronounced departures, while excluding directions that carry little signal. This trade-off often enables smooth tests to achieve higher power than conventional omnibus tests against a broad range of alternatives, although it also means that they are no longer consistent against every fixed alternative in $\mathcal{H}_1$. Following the existing literature, a theoretically appealing way to achieve the omnibus property is to select the testing orders $K$ and $L$ using data-driven procedures \citep{inglot1997data,kallenberg1997data,kallenberg1999data,janic2000data}. We leave this issue to \autoref{Data-driven selections} and focus primarily on the case of fixed $K$ and $L$ in this section.

\subsubsection{Power against local alternatives}
\label{Power against local alternatives}

Now, we consider the power performance of our proposed test against the following sequence of local alternatives:
\begin{align}
\mathcal{H}_{1L}: c_n(u;\tau)=\bar c(u)+n^{-1/2}r(u;\tau),
\label{H_1L}
\end{align}
where $r:[0,1]^d\times [0,1]\rightarrow \mathbb{R}$ satisfies the following assumption.

\begin{assumption}
\label{A6}
(i) There exists a function $\zeta:[0,1]^d\times[0,1]\rightarrow\mathbb{R}$ such that $r(u;\tau)=\bar c(u)\zeta(u;\tau)$, where $\zeta(u;\tau)$ is bounded on $[0,1]^d\times [0,1]$ and $\int_0^1\int_{[0,1]^d}\zeta^2(u;\tau)\bar c(u)dud\tau>0$. (ii) The map $\tau\mapsto\zeta(\cdot;\tau)$ is continuous except at finitely many points,  at each of which finite left-
and right-hand limits exist. (iii) $\int_0^1r(u;\tau)d\tau=0$ 
for all $u\in[0,1]^d$, and for every $j=1,2,\ldots,d$, $\int_{[0,1]^{d-1}}r(u;\tau)du_{-j}=0$ for all $(u_j,\tau)\in[0,1]\times[0,1]$, where $du_{-j}$ denotes integration with respect to all components of $u$ except $u_j$.
\end{assumption}

\autoref{A6} imposes regularity and normalization conditions on the perturbation $r(u;\tau)$. In \autoref{A6}(i), the boundedness of $\zeta(u;\tau)$ guarantees the nonnegativity of $c_n(u;\tau)$ as long as $n$ is large enough and facilitates establishing contiguity between probability measures under $\mathcal{H}_0$ and $\mathcal{H}_{1L}$ \citep[p.88, Lemma~6.4]{Vaart_1998}. \autoref{A6}(ii) permits finitely many discontinuities in the temporal evolution of the local perturbation, while providing sufficient regularity for the relevant Riemann-sum approximations. \autoref{A6}(iii) imposes some necessary centering and marginal restrictions. Specifically, the first restriction ensures that $\int_0^1c_n(u;\tau)d\tau=\bar c(u)$, whereas the second preserves the uniform marginal density of each component $U_{tj}$. Taken together, these restrictions ensure that $c_n(u;\tau)$ is a valid copula density for every $\tau\in[0,1]$ and all sufficiently large $n$.

For $l=1,2,\ldots,L$, define $\Delta_l = (\Delta_{1l},\Delta_{2l},\ldots,\Delta_{Kl})'$ with $\Delta_{kl}$ given by
\begin{align*}
\Delta_{kl} = \int_0^1\int_{[0,1]^d}r(u;\tau)\pi_k(u)\psi_l(\tau)dud\tau.
\end{align*}
The next theorem establishes the asymptotic properties of $\hat{\Psi}_n^2(K,L)$ under the sequence of local alternatives in \eqref{H_1L}.

\begin{theorem}
\label{Th4}
Suppose Assumptions \ref{A1}--\ref{A6} hold and $\Sigma_K$ is positive definite. Then, under $\mathcal{H}_{1L}$ in \eqref{H_1L}, as $n\rightarrow\infty$,
\begin{align*}
n\hat{\Psi}_n^2(K,L)\stackrel{d}\longrightarrow \chi_{KL}^2\left(\sum_{l=1}^L\Delta_l'\Sigma_K^{-1}\Delta_l\right),
\end{align*}
where $\chi_s^2(c)$ denotes the noncentral $\chi^2$ distribution with $s$ degrees of freedom and nonnegative noncentrality parameter $c$.
\end{theorem}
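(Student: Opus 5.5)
We would prove Theorem~\ref{Th4} by Le Cam's third lemma, combined with the null results already established in Theorems~\ref{Th1} and~\ref{Th2}.

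\textbf{Step 1: reduction to the oracle statistic.} Let $P_n$ and $Q_n$ be the joint laws of the data under $\mathcal{H}_0$ (copula density $\bar c$) and under $\mathcal{H}_{1L}$, with the same marginals $F_j$ and the same dynamics. Because $\varepsilon_t$ is independent of $\mathcal{F}_{t-1}$ and the marginals are common, the log-likelihood ratio reduces to
\begin{align*}
\Lambda_n=\sum_{t=1}^n\log\bigl(1+n^{-1/2}\zeta(U_t;t/n)\bigr).
\end{align*}
Since $\zeta$ is bounded, a second-order Taylor expansion gives
\begin{align*}
\Lambda_n=n^{-1/2}\sum_{t=1}^n\zeta(U_t;t/n)-\tfrac{1}{2}n^{-1}\sum_{t=1}^n\zeta^2(U_t;t/n)+o_p(1)
\end{align*}
under $P_n$. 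Assumption~\ref{A6}(iii) gives $\mathbb{E}_{P_n}\zeta(U_t;t/n)=\int r(u;t/n)\,du=0$, because integrating out all but one coordinate kills $r$. The piecewise continuity in Assumption~\ref{A6}(ii) makes the Riemann sums converge, so $n^{-1}\sum_t\mathbb{E}\zeta^2\to\sigma^2:=\int_0^1\!\int\zeta^2\bar c\,du\,d\tau$. A Lindeberg CLT for independent, non-identically distributed bounded summands then yields $\Lambda_n\stackrel{d}\longrightarrow\mathcal{N}(-\sigma^2/2,\sigma^2)$. By Le Cam's first lemma \citep[Lemma~6.4]{Vaart_1998}, $Q_n$ is contiguous to $P_n$. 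Consequently, every $o_p(1)$ statement under $\mathcal{H}_0$ transfers to $\mathcal{H}_{1L}$:
\begin{itemize}
\item the expansion \eqref{hat theta kl expansion} of Theorem~\ref{Th1};
\item the consistency $\hat\Sigma_{n,K}\stackrel{p}\longrightarrow\Sigma_K$, used in Theorem~\ref{Th2}.
\end{itemize}
This contiguity step is what lets us avoid redoing the delicate residual and pseudo-observation analysis under the alternative.

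\textbf{Step 2: joint limit.} Let $T_n:=n^{-1/2}\sum_t\pi(U_t)\otimes\psi_l(t/n)$, stacked over $l=1,\dots,L$ in the order of $\hat\theta_n(K,L)$. Consider the pair $(T_n,\Lambda_n)$ under $P_n$ and apply the Cramér--Wold device with a Lindeberg CLT. The summands are bounded by Assumptions~\ref{A5} and~\ref{A6}, and the means are zero because $\int\pi_k\bar c\,du$ times $n^{-1}\sum_t\psi_l(t/n)$ is $O(n^{-1/2})$ after scaling; this follows from $\int\psi_l=0$ and the Riemann error $O(n^{-1})$ for smooth $\psi_l$. The variance of $T_n$ tends to $I_L\otimes\Sigma_K$, as in Theorem~\ref{Th1}. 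The cross-covariance is
\begin{align*}
n^{-1}\sum_t\psi_l(t/n)\int\pi_k(u)\zeta(u;t/n)\bar c(u)\,du\to\int_0^1\!\int\pi_k r\,\psi_l\,du\,d\tau=\Delta_{kl},
\end{align*}
again by Riemann-sum convergence under Assumption~\ref{A6}(ii). Here the centering by $\mu_K$ drops out, because $\int r\,du=0$.

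\textbf{Step 3: conclusion.} Le Cam's third lemma gives $\sqrt n\hat\theta_n(K,L)\stackrel{d}\longrightarrow\mathcal{N}\bigl((\Delta_1',\dots,\Delta_L')',\,I_L\otimes\Sigma_K\bigr)$ under $Q_n$. Combining this with $\hat\Sigma_{n,K}\stackrel{p}\longrightarrow\Sigma_K$ (by contiguity) and the continuous mapping theorem, $\sqrt n\hat\Pi_n(K,L)$ converges to $\mathcal{N}\bigl(\Sigma_K^{-1/2}\Delta_l\text{ stacked},\,I_{KL}\bigr)$. Its squared norm is therefore $\chi^2_{KL}\bigl(\sum_l\Delta_l'\Sigma_K^{-1}\Delta_l\bigr)$.

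\textbf{Main obstacle.} The hard part is making Step~1 rigorous in the non-stationary, time-varying setting. The likelihood-ratio reduction must correctly handle the conditioning on $\mathcal{F}_{t-1}$ and $X_t$: the factors from the dynamics and from any exogenous laws cancel, since under both measures they are identical. The quadratic LAN expansion, including the $o_p(1)$ cubic remainder, also needs verification. The remainder is controlled by $\sup|\zeta|<\infty$, which gives $\sum_t|n^{-1/2}\zeta|^3=O(n^{-1/2})$.
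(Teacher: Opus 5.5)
Your proof is correct, but it reaches the noncentral limit by a different route than the paper. Both arguments rest on contiguity to transfer the delicate oracle expansion \eqref{hat theta kl expansion} and the consistency $\hat\Sigma_{n,K}\stackrel{p}\longrightarrow\Sigma_K$ from $\mathcal{H}_0$ to $\mathcal{H}_{1L}$.

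\textbf{Contiguity.} The paper obtains it more cheaply than your LAN argument. Since the $U_t$ are independent under $\mathbb{P}_n$ and $\mathbb{E}_{\mathbb{P}_n}\zeta(U_t;t/n)=0$, one has $\mathbb{E}_{\mathbb{P}_n}(L_n^2)=\prod_{t=1}^n\bigl(1+n^{-1}\mathbb{E}_{\mathbb{P}_n}\zeta^2(U_t;t/n)\bigr)\leqslant e^{M^2}$, where $M$ bounds $|\zeta|$. Cauchy--Schwarz then gives $\mathbb{Q}_n(\mathscr{B}_n)\leqslant e^{M^2/2}\,\mathbb{P}_n^{1/2}(\mathscr{B}_n)$. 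No quadratic expansion of $\Lambda_n$ and no Le Cam's first lemma are needed.

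\textbf{Limit under the alternative.} The paper does not use Le Cam's third lemma. It uses the fact that the $U_t$ remain independent under $\mathcal{H}_{1L}$, with explicit densities $c_n(\cdot;t/n)$. It computes the $\mathbb{Q}_n$-mean directly, $n^{-1/2}\sum_t\mathbb{E}_{\mathbb{Q}_n}\pi_k(U_t)\psi_l(t/n)=\Delta_{kl}+o(1)$. It then applies a Lindeberg CLT under $\mathbb{Q}_n$ to the recentred bounded summands, whose covariance is $\Sigma_K+o(1)$ uniformly in $t$.

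\textbf{Comparison.} The paper's route is shorter and more elementary. Yours requires extra work: the cubic-remainder bound, the law of large numbers for $n^{-1}\sum_t\zeta^2(U_t;t/n)$, and a joint Cram\'er--Wold CLT for $(T_n,\Lambda_n)$ under $\mathbb{P}_n$. In exchange, it identifies the shift $\Delta_{kl}$ structurally, as the asymptotic covariance of $T_n$ with the score $n^{-1/2}\sum_t\zeta(U_t;t/n)$. It also never requires moment computations under the alternative, so it would carry over to settings where a direct CLT under $\mathcal{H}_{1L}$ is not readily available. Your explicit discussion of why the dynamic and exogenous factors cancel in the likelihood ratio is a point the paper takes for granted.

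A minor correction: in Step 2 the $\mathbb{P}_n$-means of $T_n$ are $O(n^{-1/2})$ rather than exactly zero, which is all you need.
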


According to \autoref{Th4}, if $\Delta_{kl}\neq0$ for at least one pair $(k,l)$, the noncentrality parameter in the limiting distribution is then strictly positive. Consequently, the test based on $n\hat{\Psi}_n^2(K,L)$ has nontrivial asymptotic power against $\mathcal{H}_{1L}$ in \eqref{H_1L}. It is worth noting that our test can detect local alternatives that approach the null at rate $n^{-1/2}$. In comparison, the detection rate established in \citet{lu2025adaptive} is $n^{-1/2}h^{-1/4}$, where $h=cn^{-b}$ with $c\in(0,\infty)$ and $b\in(0,1/2)$. Since $n^{-1/2}h^{-1/4}=c^{-1/4}n^{-1/2+b/4}$, the rate converges to zero more slowly than $n^{-1/2}$. Therefore, compared with \citet{lu2025adaptive}, our test achieves a faster detection rate and is more sensitive to weak local departures from copula constancy.

\section{Data-driven selections of $K$ and $L$}
\label{Data-driven selections}
In this section, we propose a data-driven selection procedure to automatically determine the truncation orders $K$ and $L$. It is well known that the truncation orders significantly affect the performance of smooth test statistics \citep{ledwina1994data,kallenberg1995consistency,kallenberg1995data,kallenberg1997data,kallenberg1999data,inglot1997data,janic2000data}. On the one hand, large $K$ and $L$ may introduce directions that contribute little but only inflate the degrees of freedom of the test statistic, leading to potential size distortion. On the other hand, if $K$ and $L$ are too small, we risk $\theta_{kl}=0$ for $k=1,2,\ldots,K$ and $l=1,2,\ldots,L$, rendering the test statistic powerless. Therefore, choosing appropriate truncation orders is crucial to the applicability of our proposed smooth test.

In the existing literature, the data-driven selection procedure for Neyman's smooth tests was first proposed by \citet{ledwina1994data} to test uniformity based on Schwarz's selection criterion. The basic idea of this selection rule is to enable Neyman's smooth tests to automatically focus on the ``right'' directions. Desirable theoretical properties and numerical results based on this selection rule are further investigated by \citet{kallenberg1995consistency,kallenberg1995data} and \citet{inglot1996asymptotic}. Later, modified Schwarz's selection rules were proposed to circumvent the computational burden associated with maximizing the log-likelihood in \citet{ledwina1994data}; see also \citet{kallenberg1997data}, \citet{inglot1997data}, \citet{kallenberg1999data}, and \citet{janic2000data}.

For the convenience of implementation, in this section, we propose to use the product basis functions considered in \eqref{product bases}:
\begin{align*}
\pi_{\mathbf{k}}(u)
=
\prod_{j=1}^d\varphi_{k_j}(u_j),
\,\,\,
\mathbf{k}
=
(k_1,k_2,\ldots,k_d)'
\in
\mathbb{N}_0^d,
\end{align*}
where $\{\varphi_k\}_{k=0}^{\infty}$ is a complete orthonormal system of $\mathcal{L}_2([0,1])$ satisfying $\varphi_0(z)\equiv1$ and $\int_0^1\varphi_i(z)\varphi_j(z)dz=\delta_{ij}$. Obviously, $\{\pi_{\mathbf{k}}\}_{\mathbf{k}\in\mathbb{N}_0^d}$ forms a complete orthonormal system of $\mathcal{L}_2([0,1]^d)$. Define
\begin{align*}
\|\mathbf{k}\|_0
=
\sum_{j=1}^d1(k_j>0),
\,\,\,
\|\mathbf{k}\|_1
=
\sum_{j=1}^d k_j.
\end{align*}
We now reconstruct $\pi(u)$ as a vector consisting of all $\pi_{\mathbf{k}}(u)$ satisfying $\|\mathbf{k}\|_0\geqslant2$ and $\|\mathbf{k}\|_1\leqslant P$, where $P\geqslant2$ is a positive integer to be selected. Recall that if $\|\mathbf{k}\|_0\leqslant1$, then
\begin{align*}
\int_0^1\int_{[0,1]^d}
c(u;\tau)\pi_{\mathbf{k}}(u)\psi_l(\tau)
dud\tau
\equiv0,
\end{align*}
so the corresponding component cannot distinguish $\mathcal{H}_1$ from $\mathcal{H}_0$. Hence, the restriction $\|\mathbf{k}\|_0\geqslant2$ is imposed to exclude unnecessary testing components. Note that when the product basis is used, the dimension of $\pi(u)$ depends on $P$ and is given by $K=K(P)=C_{P+d}^d-1-dP$. We propose to select $P$ and $L$ over $2\leqslant P\leqslant D_P(n)$ and $1\leqslant L\leqslant D_L(n)$, respectively, where $D_P(n)$ and $D_L(n)$ diverge slowly as $n\rightarrow\infty$ to ensure that the resulting data-driven test is asymptotically consistent against any fixed alternative.

Following the existing literature, we propose to adopt a modified Schwarz selection rule to determine $(P,L)$. Let
\begin{align*}
\mathcal{M}_n
=
\operatorname*{arg\,max}_{\substack{
2\leqslant P\leqslant D_P(n)\\
1\leqslant L\leqslant D_L(n)
}}
\left\{
n\left\|\hat{\theta}_n(K(P),L)\right\|^2
-
K(P)L\log n
\right\}.
\end{align*}
We define $(\hat{P},\hat{L})$ as the element of $\mathcal{M}_n$ having the smallest value of $K(P)L$. If there is still more than one such element, we select the one with the smallest value of $P$. The data-driven test statistic is then given by $\hat{\Psi}_n^2(K(\hat{P}),\hat{L})$. Define
\begin{align*}
&B_{\varphi,0k}
=
\sup_{z\in[0,1]}|\varphi_k(z)|,
\,\,\,
B_{\varphi,1k}
=
\sup_{z\in[0,1]}|\dot{\varphi}_k(z)|,
\,\,\,
B_{\varphi,2k}
=
\sup_{z\in[0,1]}|\ddot{\varphi}_k(z)|,\\
&B_{\psi,0l}
=
\sup_{\tau\in[0,1]}|\psi_l(\tau)|,\,\,\,B_{\psi,1l}
=
\sup_{\tau\in[0,1]}|\dot\psi_l(\tau)|,\,\,\,B_{\psi,2l}
=
\sup_{\tau\in[0,1]}|\ddot\psi_l(\tau)|.
\end{align*}
To ensure the validity of the selection procedure, we need to impose several additional regularity conditions on the basis functions, $D_P(n)$, and $D_L(n)$, which are summarized below.

\begin{assumption}
\label{AD1}
(i) $\{\varphi_k\}_{k=0}^\infty$ are complete orthonormal bases of $\mathcal{L}_2([0,1])$ with $\varphi_0(z)\equiv1$ and $\int_0^1\varphi_i(z)\varphi_{j}(z)dz=\delta_{ij}$. In addition, each $\varphi_k(z)$ is twice differentiable on $[0,1]$. (ii) For some nonnegative constants $r_{\varphi,i}$ and $r_{\psi,i}$ with $i=0,1,2$, $B_{\varphi,ik}=O(k^{r_{\varphi,i}})$, $B_{\psi,il}=O(l^{r_{\psi,i}})$. (iii) As $n\rightarrow\infty$, $D_P(n)\rightarrow\infty$, $D_L(n)\rightarrow\infty$, and 
\begin{align}
D_P^{\rho_1}(n)D_L^{r_{\psi,0}}(n)
=o(\sqrt{\log n}),\,\,\,
D_P^{2\rho_0}(n)\log\!\left(D_P^d(n)D_L(n)\right)=o(\log n),
\label{rate DP DL}
\end{align}
where $\rho_0 = dr_{\varphi,0}$ and $\rho_1 = r_{\varphi,1} + (d-1)r_{\varphi,0}$.
\end{assumption}

To provide some intuition for the convergence rate in \eqref{rate DP DL}, we set $\varphi_j(z)=\psi_j(z)=g_j(z)$ for illustration, where $\{g_j\}_{j=0}^{\infty}$ may be chosen as either the Legendre polynomials in \eqref{Le} or the trigonometric bases in \eqref{triangonal}. Specifically, for the Legendre polynomials in \eqref{Le}, we have $r_{\varphi,0}=r_{\psi,0}=1/2$, $r_{\varphi,1}=r_{\psi,1}=5/2$, and $r_{\varphi,2}=r_{\psi,2}=9/2$. Then, \eqref{rate DP DL} reduces to $D_P^{d+4}(n)D_L(n)=o(\log n)$. On the other hand, when $\{g_j\}_{j=0}^{\infty}$ is chosen as the trigonometric bases in \eqref{triangonal}, we have $r_{\varphi,0}=r_{\psi,0}=0$, $r_{\varphi,1}=r_{\psi,1}=1$, and $r_{\varphi,2}=r_{\psi,2}=2$. \eqref{rate DP DL} then simplifies to $D_P(n)=o(\sqrt{\log n})$ and $\log D_L(n)=o(\log n)$. Hence, the trigonometric bases are particularly attractive from a theoretical perspective because the rate restrictions on $D_P(n)$ and $D_L(n)$ do not depend on the dimension $d$. This dimension-free property facilitates their application to multivariate settings and provides an important advantage over the Legendre polynomials.

Next, we establish the following theorem, which shows that our data-driven procedure selects the least complex testing model under the null hypothesis while being consistent against any fixed alternative. 

\begin{theorem}
\label{ThD}
Suppose Assumptions \ref{A1}--\ref{A5} and \ref{AD1} hold.

\textup{(i)} Under $\mathcal{H}_0$ in \eqref{oriH_0}, as $n\rightarrow\infty$, $\mathbb{P}\{(\hat{P},\hat{L})=(2,1)\}\rightarrow1$. If in addition $\Sigma_{K(2)}$ is positive definite, then $n\hat{\Psi}_n^2(K(\hat{P}),\hat{L})
\stackrel{d}{\longrightarrow}
\chi_{d(d-1)/2}^2.$

\textup{(ii)} Under $\mathcal{H}_1$ in \eqref{oriH_1}, suppose $$\mathbb{P}\left\{
\min_{2\leqslant P\leqslant D_P(n)}
\lambda_{\min}(\hat{\Sigma}_{n,K(P)})>0\right\}\rightarrow1,\,\,\,n\rightarrow\infty.$$ Then, as $n\rightarrow\infty$, for every $x\in\mathbb{R}$, $\mathbb{P}\{
n\hat{\Psi}_n^2(K(\hat{P}),\hat{L})\leqslant x\}\rightarrow0$.
\end{theorem}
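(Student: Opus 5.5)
Part (i) rests on a uniform version of the oracle expansion in Theorem~\ref{Th1}, taken over the growing family of models indexed by $2\leqslant P\leqslant D_P(n)$ and $1\leqslant L\leqslant D_L(n)$. Under $\mathcal{H}_0$, I would redo the decomposition behind \eqref{hat theta kl expansion} coordinatewise: a parametric-estimation term from $\hat\alpha_n$, an empirical-marginal term from $\hat F_{nj}$, and a second-order remainder. The dependence on $(\mathbf{k},l)$ is tracked through $B_{\varphi,ik}$ and $B_{\psi,il}$. For the product basis, $\sup|\pi_{\mathbf{k}}|\lesssim P^{\rho_0}$ and $\sup\|\partial\pi_{\mathbf{k}}\|\lesssim P^{\rho_1}$. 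The first-order terms vanish under $\mathcal{H}_0$ exactly as in Theorem~\ref{Th1}, because the $\tau$-integral of $\psi_l$ is zero and the copula is constant. Their finite-$n$ remnants, together with the remainders, should then be bounded by
\begin{align*}
\max_{P,L}\ \sqrt{n}\,\bigl\|\hat\theta_n(K(P),L)-\tilde\theta_n(K(P),L)\bigr\| = O_p\bigl(D_P^{\rho_1}D_L^{r_{\psi,0}}\cdot(\text{dimension factor})\,n^{-1/2+\epsilon}\bigr)+o_p\bigl(\sqrt{\log n}\bigr),
\end{align*}
where $\tilde\theta_n$ is the infeasible version built from $U_t$. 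The first condition in \eqref{rate DP DL} is designed to make this error $o_p(\sqrt{\log n})$.

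For the infeasible coefficients $\sqrt n\tilde\theta_{\mathbf{k}l}$, which are centered sums of independent bounded variables with bound $\lesssim P^{\rho_0}$, I would apply Bernstein's inequality together with a union bound over the at most $D_P^dD_L$ indices. This gives $\max_{\mathbf{k},l}|\sqrt n\tilde\theta_{\mathbf{k}l}|^2=O_p\bigl(D_P^{2\rho_0}\log(D_P^dD_L)\bigr)=o_p(\log n)$ by the second condition in \eqref{rate DP DL}. Now compare any candidate $(P,L)\neq(2,1)$, which has $K(P)L> K(2)$. Its criterion value minus that of $(2,1)$ is at most
\begin{align*}
\sum_{\text{new }(\mathbf{k},l)} n\hat\theta_{\mathbf{k}l}^2-(K(P)L-K(2))\log n .
\end{align*}
Each new term is $o_p(\log n)$ uniformly, so this difference is negative with probability tending to one, uniformly in $(P,L)$. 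Hence $\mathbb{P}\{(\hat P,\hat L)=(2,1)\}\to1$. On that event the statistic equals $n\hat\Psi_n^2(K(2),1)$, and Theorem~\ref{Th2} gives the $\chi^2_{K(2)}$ limit. Here $K(2)=C_{d+2}^d-1-2d=d(d-1)/2$.

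For part (ii), under $\mathcal{H}_1$ we have $Q>0$, and completeness of the product basis yields some fixed $(\mathbf{k}^*,l^*)$ with $\|\mathbf{k}^*\|_0\geqslant2$ and $\theta_{\mathbf{k}^*l^*}\neq0$. Coefficients with $\|\mathbf{k}\|_0\leqslant1$ vanish, as noted in the paper. Let $(P^*,L^*)$ be the fixed order containing this index. Since $D_P,D_L\to\infty$, it is eventually a candidate. By the consistency argument of Theorem~\ref{Th3}, $n\|\hat\theta_n(K(P^*),L^*)\|^2-K(P^*)L^*\log n\geqslant cn$ with probability tending to one. Therefore the maximized criterion satisfies
\begin{align*}
n\|\hat\theta_n(K(\hat P),\hat L)\|^2\geqslant cn+K(\hat P)\hat L\log n\to\infty .
\end{align*}
To pass from the unweighted norm to $\hat\Psi_n^2$, I would use $\hat\Psi_n^2\geqslant\|\hat\theta_n\|^2/\lambda_{\max}(\hat\Sigma_{n,K})$. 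The bound $\lambda_{\max}(\hat\Sigma_{n,K(P)})\leqslant\operatorname{tr}\hat\Sigma_{n,K(P)}\lesssim K(P)D_P^{2\rho_0}$ grows only polylogarithmically. The hypothesis on $\lambda_{\min}$ ensures the statistic is well defined. Together these give $n\hat\Psi_n^2\geqslant cn/\mathrm{polylog}(n)\to\infty$.

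The main obstacle is the uniform remainder bound in part (i). It requires revisiting the proof of Theorem~\ref{Th1} with explicit constants, in particular controlling the empirical-process term for $\hat F_{nj}$ uniformly over growing function classes, with Lipschitz constants $P^{\rho_1}$ for $\pi_{\mathbf{k}}$ and $B_{\psi,0l}\lesssim L^{r_{\psi,0}}$ for $\psi_l$. I would first try a chaining or bracketing bound applied to each fixed coordinate, followed by a union bound.
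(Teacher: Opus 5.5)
Your skeleton for both parts matches the paper's. You bound $\mathbb{P}\{(\hat P,\hat L)\neq(2,1)\}$ by the probability that some $|\hat Z_{\mathbf{k}l}|$, $(\mathbf{k},l)\in\mathscr{A}_n$, exceeds $\sqrt{\log n}$. You then pass to the infeasible coefficients, apply Bernstein with a union bound, and in (ii) use a fixed detecting index together with $\lambda_{\max}\leqslant\operatorname{tr}$.

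The gap is the feasible-to-infeasible step. You leave it as the ``main obstacle'' and plan to get it from a uniform-in-$(\mathbf{k},l)$ version of the oracle expansion of Theorem~\ref{Th1}, with an error carrying a factor $n^{-1/2+\epsilon}$. That rate is not available under the stated assumptions. Even for a fixed index, the cancellation in Theorem~\ref{Th1} rests on facts that come with no rate:
\begin{itemize}
\item $\bar a_{nj}\stackrel{p}\longrightarrow a_j$ and $\bar b_{nj}\stackrel{p}\longrightarrow b_j$ (Assumption~\ref{A3}(v));
\item the uniform $o_p(1)$ in Lemma~\ref{Lemma1};
\item the equicontinuity modulus of $\mathbb{G}_{nj}$ at scale $\max_t|\hat\varepsilon_{tj}-\varepsilon_{tj}|=o_p(1)$ (Lemma~\ref{Lemma3}).
\end{itemize}
In addition, the summation-by-parts bound on $n^{-1}\sum_t(a_{tj}-a_j)\psi_l(t/n)$ picks up $\|\dot\psi_l\|_\infty$, which grows over $l\leqslant D_L(n)$. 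Chaining over growing classes does not fix rateless deterministic convergences.

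The missing idea is that no cancellation is needed. Since the threshold $\sqrt{\log n}$ diverges, it suffices that $\max_{(\mathbf{k},l)\in\mathscr{A}_n}|\hat Z_{\mathbf{k}l}-\tilde Z_{\mathbf{k}l}|=o_p(\sqrt{\log n})$. The mean value theorem on $[0,1]^d$ gives
\begin{align*}
\max_{(\mathbf{k},l)\in\mathscr{A}_n}\left|\hat Z_{\mathbf{k}l}-\tilde Z_{\mathbf{k}l}\right|\leqslant B_{\pi,1}(D_P(n))\,B_{\psi,0}(D_L(n))\cdot\frac{1}{\sqrt n}\sum_{t=1}^n\left\|\hat U_t-U_t\right\|.
\end{align*}
The last factor is $O_p(1)$ by \eqref{bound U1}, i.e., by Corollary~\ref{coro1} and Lemma~\ref{Lemma2}. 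It does not depend on the index, so uniformity is free. The error is therefore $O_p(D_P^{\rho_1}(n)D_L^{r_{\psi,0}}(n))=o_p(\sqrt{\log n})$, which is exactly what the first condition in \eqref{rate DP DL} is calibrated for. With your extra $n^{-1/2+\epsilon}$ that condition would be nearly vacuous, which shows its role was misread. The paper uses the second-order expansion \eqref{Taylor hat theta kl} instead; this adds an $O_p(n^{-1/2}D_P^{\rho_2}D_L^{r_{\psi,0}})$ term, handled by Proposition~\ref{proD0}.

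A smaller omission: $\tilde Z_{\mathbf{k}l}$ is not a centered sum. Under $\mathcal{H}_0$, $\mu_{\mathbf{k}}=\int_{[0,1]^d}\pi_{\mathbf{k}}(u)\bar c(u)du$ is generally nonzero. Before applying Bernstein you must remove $\mu_{\mathbf{k}}n^{-1/2}\sum_t\psi_l(t/n)$. By \eqref{bound sum psi} this is at most $n^{-1/2}B_{\pi,0}(D_P(n))B_{\psi,1}(D_L(n))=o(\sqrt{\log n})$, uniformly in $(\mathbf{k},l)$. The Bernstein step itself also needs:
\begin{itemize}
\item a uniform variance bound, via $D_L^{r_{\psi,0}+r_{\psi,1}}(n)=O(n)$;
\item $\sqrt{\log n/n}\,B_{\pi,0}B_{\psi,0}=o(1)$.
\end{itemize}
All of these follow from \eqref{rate DP DL} and are recorded in Proposition~\ref{proD0}. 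Your part (ii) is the paper's argument and is fine as written.
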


Following \autoref{ThD}(i), for a given significance level $\alpha\in(0,1/2)$, we reject $\mathcal{H}_0$ whenever $n\hat{\Psi}_{n}^2(K(\hat{P}),\hat{L})>\chi_{d(d-1)/2,1-\alpha}^{2}$. In finite samples, however, the event $\{(\hat{P},\hat{L})\neq(2,1)\}$ may still occur with small frequency, causing the null distribution of the data-driven test statistic to deviate from the limiting $\chi^2$ distribution. Such a selection uncertainty often necessitates a finite-sample approximation to the null distribution in conventional data-driven smooth tests \citep{kallenberg1995data,kallenberg1997data,kallenberg1999data,inglot1997data,janic2000data}. Nevertheless, as shown in the next section, for our proposed data-driven test, the empirical frequency of $\{(\hat{P},\hat{L})\neq(2,1)\}$ under $\mathcal{H}_0$ is sufficiently low and does not distort the empirical size. Consequently, $\chi_{d(d-1)/2,1-\alpha}^{2}$ can be used directly as the critical value without any additional finite-sample correction.

\section{Monte Carlo simulations}
\label{Simulations}

In this section, we conduct numerical simulations to verify the theoretical results. We set the sample size $n\in\{200,500,800\}$, and fix the significance level at $\alpha=0.05$. All empirical rejection rates are computed based on 1000 replications. For $j=1,2,\ldots,d$, we take $F_j(\cdot) = \Phi(\cdot)$, the distribution function of the standard normal. For both the directional test $\hat{\Psi}_n^2(K,L)$ and the data-driven test $\hat{\Psi}_n^2(K(\hat{P}),\hat{L})$, we consider the product construction \eqref{product bases} for $\{\pi_k(u)\}_{k=0}^\infty$ and choose the trigonometric bases  \eqref{triangonal} for $\varphi_j(z)$ and $\psi_j(z)$ due to the dimension-free property. 

To examine how the numerical performance of our proposed method differs from other existing tests, we compare our smooth tests with (i) the kernel-based test of \citet{lu2025adaptive}:
\begin{align*}
\hat{M}_n: = \frac{nh^{1/2}\hat{Q} - \hat{B}}{\sqrt{\hat{V}}},
\end{align*}
where 
\begin{align*}
\hat{Q} = \frac{1}{n}\sum_{t=1}^n\int_{[0,1]^d}\left(\hat{C_t}(u) - \hat{C}_0(u)\right)^2W(u)du
\end{align*}
is the main building block, $W(u)$ is a nonnegative weight function,  $\hat{C}_t(u) = \sum_{s=1}^nw_h(s,t)1(\hat{U}_s\leqslant u)$, $\hat{C}_0(u) = n^{-1}\sum_{t=1}^n1(\hat{U}_t\leqslant u)$, $w_h(s,t) = (\sum_{r=1}^nK_h^*(r,t))^{-1}K_h^*(s,t)$, 
\begin{align*}
K_h^*(s,t)= \begin{cases}
K_h(\frac{s-t}{n}) + K_h(\frac{1-s-t}{n}),\,& 1\leqslant t\leqslant \lfloor nh\rfloor,\\
K_h(\frac{s-t}{n}),\, &\lfloor nh \rfloor + 1\leqslant t \leqslant n - \lfloor nh \rfloor,\\
K_h(\frac{s-t}{n}) + K_h(\frac{2n+1-s-t}{n}),\,&n - \lfloor nh \rfloor + 1\leqslant t \leqslant n,
\end{cases}
\end{align*}
$K_h(x) = K(x/h)/h$, with $K(\cdot)$ and $h$ denoting the kernel function and the bandwidth, respectively, and $\hat{B}$ and $\hat{V}$ are the estimates of bias and variance given in (2.13) and (2.14) of \citet{lu2025adaptive}; and (ii) the KS-type and CvM-type tests considered by \citet{nasri2022change}:
\begin{align*}
\mathcal{T}_n = \max_{1\leqslant j\leqslant n}\max_{1\leqslant t\leqslant n}\left|\mathbb{A}_n(j/n,\hat{U}_t)\right|,\,\,\, \mathcal{S}_n = \max_{1\leqslant j\leqslant n}\frac{1}{n}\sum_{t=1}^n\mathbb{A}_n^2(j/n,\hat{U}_t)
\end{align*}
where $\mathbb{A}_n(s,u) = n^{-1/2}\sum_{t=1}^{\lfloor ns \rfloor}(1(\hat{U}_t\leqslant u) - \hat{C}_0(u))$. Following \citet{lu2025adaptive}, we use the Epanechnikov kernel for $K(\cdot)$ and set $W(u)=\prod_{j=1}^d\phi(u_j;0.4,0.1)$, where $\phi(x;\mu,\sigma)$ denotes the density of $\mathcal{N}(\mu,\sigma^2)$. For the bandwidth, \citet{lu2025adaptive} adopt the rule-of-thumb choice $h=n^{-1/5}/\sqrt{12}$. Here, we set $h=cn^{-1/5}/\sqrt{12}$ with $c\in\{0.5,1,2\}$ to examine how the performance of $\hat{M}_n$ varies across different bandwidth choices. The corresponding test statistics are denoted by $\hat{M}_n(c)$. Under some regularity conditions, \citet{lu2025adaptive} show that, as $n\rightarrow\infty$, $\hat{M}_n\stackrel{d}\longrightarrow\mathcal{N}(0,1)$. Thus, we use the one-sided critical value $1.6449$ for the test. The limiting distributions of $\mathcal{T}_n$ and $\mathcal{S}_n$ are non-pivotal, and a bootstrap procedure is needed to determine their critical values. We use $B=200$ bootstrap replications in our simulations.

In the following, we first consider the two-dimensional case in \autoref{Simulation 1}. Then, in \autoref{Simulation 2}, we consider the case with $d=4$, where the temporal changes in the copula may be either dense or sparse. 

\subsection{Simulation 1}
\label{Simulation 1}
We consider a location-scale model and a dynamic model to generate $Y_t$:
\begin{align*}
&\text{LS1:} \,\,\,Y_{t1} = 2 + \sqrt{5}\varepsilon_{t1},\,\,\, Y_{t2} = 1 + \sqrt{3}\varepsilon_{t2};\\
& \text{DY1:}\,\,\, Y_{t1} = 0.3 + 0.4Y_{t-1,1} + 0.2\varepsilon_{t1} + 0.1\varepsilon_{t-1,1},\,\,\, Y_{t2} = 2X_t + \sigma_{t2}\varepsilon_{t2},\\
&\qquad\,\,\,\,\,\,\,\,\, \sigma_{t2}^2 = 0.2 + 0.4\sigma_{t-1,2}^2 + 0.4\sigma_{t-1,2}^2\varepsilon_{t-1,2}^2,\,\,\, X_t\stackrel{i.i.d}\sim U[0,1],\,\,\, X_t\perp \varepsilon_t.
\end{align*}
Let $C_{\text{Ct}}^{(d)}(u;\tau)$ and $C_{\text{Gs}}^{(d)}(u;\tau)$ denote the $d$-dimensional Clayton copula and Gaussian copula with Kendall rank correlation coefficient $\tau$. We consider the following data-generating processes (DGPs) for the innovation copula:
\begin{align*}
\text{DGP 1.1: } C_t(u) =& C_{\text{Ct}}^{(2)}(u;0.7);\\
\text{DGP 1.2: } C_t(u) =& C_{\text{Gs}}^{(2)}(u;0.7);\\
\text{DGP 1.3: } C_t(u) =& C_{\text{Ct}}^{(2)}(u;0.3)1(t/n\leqslant 0.5) + C_{\text{Ct}}^{(2)}(u;0.7)1(t/n> 0.5);\\
\text{DGP 1.4: } C_t(u) =& C_{\text{Ct}}^{(2)}(u;0.3)1(t/n\leqslant 0.4) + C_{\text{Gs}}^{(2)}(u;0.7)1(t/n> 0.4);\\
\text{DGP 1.5: } C_t(u) =& C_{\text{Gs}}^{(2)}(u;0)1(t/n\leqslant 0.3) \\
&+ C_{\text{Gs}}^{(2)}(u;0.3)1(0.3<t/n\leqslant 0.7) + C_{\text{Gs}}^{(2)}(u;0.7)1(t/n> 0.7);\\
\text{DGP 1.6: } C_t(u) =& C_{\text{Ct}}^{(2)}(u;\tau_t)\,\,\,\tau_t = 0.4\exp(-10(t/n - 0.5)^2) + 0.3;\\
\text{DGP 1.7: } C_t(u) =& C_{\text{Gs}}^{(2)}(u;\tau_t)\,\,\,\tau_t = 0.8\sin(4\pi t/n).
\end{align*}
Clearly, DGPs 1.1--1.2 fall under $\mathcal{H}_0$, whereas DGPs 1.3--1.7 fall under $\mathcal{H}_1$. In particular, DGPs 1.3--1.4 involve a single abrupt change; DGP 1.5 involves multiple abrupt changes; DGP 1.6 exhibits a smooth change in the Kendall rank correlation coefficient; and DGP 1.7 displays periodic variation.

For notational simplicity, we use the shorthand
\begin{align*}
\varphi_{k_1}\cdots\varphi_{k_d}:=\prod_{j=1}^d\varphi_{k_j}(u_j).
\end{align*}
Then, for the directional test $\hat{\Psi}_n^2(K,L)$, we take $L\in\{3,4\}$, and consider the following product bases:
\begin{align*}
&\pi_{(1)}(u) = (\varphi_1\varphi_1,\varphi_1\varphi_2,\varphi_2\varphi_1,\varphi_2\varphi_2)', \,\,\,\pi_{(2)}(u) = (\varphi_1\varphi_1,\varphi_2\varphi_2,\varphi_3\varphi_3)'.
\end{align*}
For notational brevity, we use $\hat{\Psi}^2_{n,\pi_{(i)}}(L)$ to denote the directional test associated with $\pi_{(i)}(u)$ and $L$. For the data-driven test, we take $D_P(n) = 6$ and $D_L(n) = 8$. The numerical results are summarized in Tables \ref{DGPs1-1--1-7 LS1}--\ref{DGPs1-1--1-7 DY1}.

We first examine the empirical size. Under DGPs 1.1--1.2, both the directional and data-driven tests perform comparably to the existing tests, with empirical rejection rates close to the nominal level. Turning to empirical power, our proposed tests yield substantial finite-sample power gains over the existing methods under DGPs 1.3--1.7. For example, under DGPs 1.3--1.5 with $n=200$, our proposed tests attain empirical power exceeding 0.85, whereas the empirical power of $\hat{M}_n$, $\mathcal{T}_n$, and $\mathcal{S}_n$ remains below 0.6. In addition, under DGP 1.6, our proposed tests deliver nontrivial empirical power even when $n=200$. In contrast, the empirical power of $\hat{M}_n$, $\mathcal{T}_n$, and $\mathcal{S}_n$ remains close to the nominal level, indicating that these tests have little power against the smooth alternative. Furthermore, the empirical power of $\hat{M}_n$ varies noticeably across different bandwidth choices, highlighting its sensitivity to bandwidth selection in finite samples.

\begin{table}[htbp]
\renewcommand\arraystretch{0.69}
\centering
\caption{Empirical rejection rates under DGPs 1.1–1.7 with model LS1.} 
\resizebox{\linewidth}{!}{
\begin{tabular}{cccccccccccc}
\hline
DGP                     & $n$ & $\hat{\Psi}_{n,\pi_{(1)}}^2(3)$ & $\hat{\Psi}_{n,\pi_{(1)}}^2(4)$ & $\hat{\Psi}_{n,\pi_{(2)}}^2(3)$ & $\hat{\Psi}_{n,\pi_{(2)}}^2(4)$ & $\hat{\Psi}_{n}^2(K(\hat{P}),\hat{L})$ & $\hat{M}_n(0.5)$ & $\hat{M}_n(1)$ & $\hat{M}_n(2)$ & $\mathcal{T}_n$ & $\mathcal{S}_n$ \\ \hline
\multirow[t]{3}{*}{DGP 1.1} & 200 & 0.0440                    & 0.0460                    & 0.0500                    & 0.0580                    & 0.0450                                 & 0.0570            & 0.0560          & 0.0650          & 0.0520          & 0.0510          \\
                        & 500 & 0.0530                    & 0.0460                    & 0.0430                    & 0.0430                    & 0.0700                                 & 0.0760            & 0.0760          & 0.0670          & 0.0460          & 0.0470          \\
                        & 800 & 0.0430                    & 0.0430                    & 0.0490                    & 0.0520                    & 0.0520                                 & 0.0510            & 0.0560          & 0.0570          & 0.0530          & 0.0490          \\ \hline
\multirow[t]{3}{*}{DGP 1.2} & 200 & 0.0550                    & 0.0390                    & 0.0540                    & 0.0460                    & 0.0490                                 & 0.0660            & 0.0550          & 0.0690          & 0.0510          & 0.0530          \\
                        & 500 & 0.0420                    & 0.0400                    & 0.0480                    & 0.0500                    & 0.0480                                 & 0.0650            & 0.0720          & 0.0670          & 0.0510          & 0.0590          \\
                        & 800 & 0.0550                    & 0.0580                    & 0.0640                    & 0.0620                    & 0.0640                                 & 0.0710            & 0.0640          & 0.0770          & 0.0460          & 0.0420          \\ \hline
\multirow[t]{3}{*}{DGP 1.3} & 200 & 0.9370                    & 0.9100                    & 0.9560                    & 0.9330                    & 0.9450                                 & 0.1740            & 0.2100          & 0.2810          & 0.2090          & 0.1970          \\
                        & 500 & 1.0000                    & 1.0000                    & 1.0000                    & 1.0000                    & 1.0000                                 & 0.3300            & 0.4320          & 0.5490          & 0.4970          & 0.4290          \\
                        & 800 & 1.0000                    & 1.0000                    & 1.0000                    & 1.0000                    & 1.0000                                 & 0.5340            & 0.6560          & 0.7690          & 0.7410          & 0.6560          \\ \hline
\multirow[t]{3}{*}{DGP 1.4} & 200 & 0.9520                    & 0.9390                    & 0.9710                    & 0.9690                    & 0.8750                                 & 0.1320            & 0.1800          & 0.2220          & 0.2140          & 0.2060          \\
                        & 500 & 1.0000                    & 1.0000                    & 1.0000                    & 1.0000                    & 0.9980                                 & 0.2830            & 0.3580          & 0.4620          & 0.5260          & 0.4800          \\
                        & 800 & 1.0000                    & 1.0000                    & 1.0000                    & 1.0000                    & 1.0000                                 & 0.4360            & 0.5580          & 0.6480          & 0.7120          & 0.6580          \\ \hline
\multirow[t]{3}{*}{DGP 1.5} & 200 & 0.9840                    & 0.9760                    & 0.9900                    & 0.9830                    & 0.9060                                 & 0.3590            & 0.4570          & 0.5470          & 0.3350          & 0.3630          \\
                        & 500 & 1.0000                    & 1.0000                    & 1.0000                    & 1.0000                    & 1.0000                                 & 0.6980            & 0.7980          & 0.8760          & 0.7010          & 0.7320          \\
                        & 800 & 1.0000                    & 1.0000                    & 1.0000                    & 1.0000                    & 1.0000                                 & 0.9060            & 0.9680          & 0.9810          & 0.9100          & 0.9160          \\ \hline
\multirow[t]{3}{*}{DGP 1.6} & 200 & 0.6260                    & 0.5870                    & 0.6920                    & 0.6400                    & 0.5000                                 & 0.0990            & 0.1230          & 0.1320          & 0.0680          & 0.0590          \\
                        & 500 & 0.9920                    & 0.9850                    & 0.9970                    & 0.9920                    & 0.8990                                 & 0.1430            & 0.1990          & 0.2530          & 0.0760          & 0.0900          \\
                        & 800 & 1.0000                    & 1.0000                    & 1.0000                    & 1.0000                    & 0.9850                                 & 0.2240            & 0.2950          & 0.3690          & 0.0820          & 0.0980          \\ \hline
\multirow[t]{3}{*}{DGP 1.7} & 200 & 1.0000                    & 1.0000                    & 1.0000                    & 1.0000                    & 1.0000                                 & 0.9910            & 0.9980          & 0.9690          & 0.2600          & 0.4650          \\
                        & 500 & 1.0000                    & 1.0000                    & 1.0000                    & 1.0000                    & 1.0000                                 & 1.0000            & 1.0000          & 1.0000          & 0.7770          & 0.9600          \\
                        & 800 & 1.0000                    & 1.0000                    & 1.0000                    & 1.0000                    & 1.0000                                 & 1.0000            & 1.0000          & 1.0000          & 0.9800          & 1.0000          \\ \hline
\end{tabular}}
\label{DGPs1-1--1-7 LS1}
\end{table}

\begin{table}[htbp]
\renewcommand\arraystretch{0.69}
\centering
\caption{Empirical rejection rates under DGPs 1.1–1.7 with model DY1.} 
\resizebox{\linewidth}{!}{
\begin{tabular}{cccccccccccc}
\hline
DGP                     & $n$ & $\hat{\Psi}_{n,\pi_{(1)}}^2(3)$ & $\hat{\Psi}_{n,\pi_{(1)}}^2(4)$ & $\hat{\Psi}_{n,\pi_{(2)}}^2(3)$ & $\hat{\Psi}_{n,\pi_{(2)}}^2(4)$ & $\hat{\Psi}_{n}^2(K(\hat{P}),\hat{L})$ & $\hat{M}_n(0.5)$ & $\hat{M}_n(1)$ & $\hat{M}_n(2)$ & $\mathcal{T}_n$ & $\mathcal{S}_n$ \\ \hline
\multirow[t]{3}{*}{DGP 1.1} & 200 & 0.0470                    & 0.0500                    & 0.0460                    & 0.0390                    & 0.0640                                 & 0.0470            & 0.0540          & 0.0640          & 0.0520          & 0.0450          \\
                        & 500 & 0.0650                    & 0.0670                    & 0.0550                    & 0.0600                    & 0.0550                                 & 0.0740            & 0.0750          & 0.0700          & 0.0640          & 0.0590          \\
                        & 800 & 0.0530                    & 0.0540                    & 0.0510                    & 0.0480                    & 0.0490                                 & 0.0580            & 0.0690          & 0.0760          & 0.0510          & 0.0480          \\ \hline
\multirow[t]{3}{*}{DGP 1.2} & 200 & 0.0490                    & 0.0520                    & 0.0450                    & 0.0510                    & 0.0630                                 & 0.0670            & 0.0820          & 0.1000          & 0.0500          & 0.0520          \\
                        & 500 & 0.0380                    & 0.0530                    & 0.0480                    & 0.0560                    & 0.0540                                 & 0.0650            & 0.0610          & 0.0760          & 0.0460          & 0.0460          \\
                        & 800 & 0.0460                    & 0.0470                    & 0.0430                    & 0.0540                    & 0.0500                                 & 0.0530            & 0.0600          & 0.0590          & 0.0430          & 0.0410          \\ \hline
\multirow[t]{3}{*}{DGP 1.3} & 200 & 0.9990                    & 0.9980                    & 1.0000                    & 1.0000                    & 0.9970                                 & 0.2010            & 0.2770          & 0.3530          & 0.3230          & 0.2670          \\
                        & 500 & 1.0000                    & 1.0000                    & 1.0000                    & 1.0000                    & 1.0000                                 & 0.4530            & 0.6030          & 0.6880          & 0.7190          & 0.6350          \\
                        & 800 & 1.0000                    & 1.0000                    & 1.0000                    & 1.0000                    & 1.0000                                 & 0.6720            & 0.8010          & 0.8810          & 0.9060          & 0.8520          \\ \hline
\multirow[t]{3}{*}{DGP 1.4} & 200 & 0.9320                    & 0.9100                    & 0.9450                    & 0.9320                    & 0.8530                                 & 0.1390            & 0.1790          & 0.2320          & 0.2050          & 0.1870          \\
                        & 500 & 1.0000                    & 1.0000                    & 1.0000                    & 1.0000                    & 0.9990                                 & 0.2650            & 0.3570          & 0.4520          & 0.5120          & 0.4630          \\
                        & 800 & 1.0000                    & 1.0000                    & 1.0000                    & 1.0000                    & 1.0000                                 & 0.4130            & 0.5260          & 0.6390          & 0.7440          & 0.6890          \\ \hline
\multirow[t]{3}{*}{DGP 1.5} & 200 & 0.9830                    & 0.9620                    & 0.9930                    & 0.9820                    & 0.8800                                 & 0.3140            & 0.4050          & 0.5110          & 0.3100          & 0.3250          \\
                        & 500 & 1.0000                    & 1.0000                    & 1.0000                    & 1.0000                    & 1.0000                                 & 0.7110            & 0.8150          & 0.8940          & 0.7210          & 0.7320          \\
                        & 800 & 1.0000                    & 1.0000                    & 1.0000                    & 1.0000                    & 1.0000                                 & 0.9020            & 0.9620          & 0.9840          & 0.9050          & 0.9190          \\ \hline
\multirow[t]{3}{*}{DGP 1.6} & 200 & 0.6250                    & 0.5550                    & 0.6780                    & 0.6190                    & 0.4900                                 & 0.0890            & 0.1120          & 0.1290          & 0.0620          & 0.0610          \\
                        & 500 & 0.9870                    & 0.9840                    & 0.9910                    & 0.9880                    & 0.8890                                 & 0.1850            & 0.2370          & 0.2550          & 0.0920          & 0.0920          \\
                        & 800 & 0.9990                    & 0.9990                    & 1.0000                    & 1.0000                    & 0.9830                                 & 0.2080            & 0.2870          & 0.3570          & 0.0750          & 0.0950          \\ \hline
\multirow[t]{3}{*}{DGP 1.7} & 200 & 1.0000                    & 1.0000                    & 1.0000                    & 1.0000                    & 1.0000                                 & 0.9870            & 0.9920          & 0.9660          & 0.2660          & 0.4730          \\
                        & 500 & 1.0000                    & 1.0000                    & 1.0000                    & 1.0000                    & 1.0000                                 & 1.0000            & 1.0000          & 1.0000          & 0.7680          & 0.9530          \\
                        & 800 & 1.0000                    & 1.0000                    & 1.0000                    & 1.0000                    & 1.0000                                 & 1.0000            & 1.0000          & 1.0000          & 0.9690          & 0.9990          \\ \hline
\end{tabular}}
\label{DGPs1-1--1-7 DY1}
\end{table}

\subsection{Simulation 2}
\label{Simulation 2}

Now, we consider the four-dimensional case. Similar to \autoref{Simulation 1}, we consider a location-scale model and a dynamic model to generate $Y_t$:
\begin{align*}
&\text{LS2: } Y_{t1} = 0.5 + X_{t1} + 0.5X_{t2} + 0.8\varepsilon_{t1},\,\,\,Y_{t2} = -0.5 + 2X_{t1} - 0.5X_{t2} + \varepsilon_{t2},\\
&\qquad \,\,Y_{t3} = 1 - X_{t1} + 1.5X_{t2} + 1.2\varepsilon_{t3},\,\,\, Y_{t4} = 0.5X_{t1} - 1X_{t2}  + 0.7\varepsilon_{t4};\\
&\text{DY2: }
Y_{t1}=0.5+0.25Y_{t-1,1}+X_{t1}+0.5X_{t2}+\sigma_{t1}\varepsilon_{t1},\,\,\,
Y_{t2}=-0.5+0.4Y_{t-1,2}+2X_{t1}-0.5X_{t2}+\sigma_{t2}\varepsilon_{t2},\\
&\qquad\,\,\,
Y_{t3}=1+0.55Y_{t-1,3}-X_{t1}+1.5X_{t2}+\sigma_{t3}\varepsilon_{t3},\,\,\,
Y_{t4}=0.3Y_{t-1,4}+0.5X_{t1}-X_{t2}+\sigma_{t4}\varepsilon_{t4},\\
&\,\,\,\,\,\,\,\,\,\,\,\,\,\,\,\sigma_{t1}^2=0.0384+0.86\sigma_{t-1,1}^2+0.08\sigma_{t-1,1}^2\varepsilon_{t-1,1}^2,\,\,\,
\sigma_{t2}^2=0.08+0.80\sigma_{t-1,2}^2+0.12\sigma_{t-1,2}^2\varepsilon_{t-1,2}^2,\\
&\,\,\,\,\,\,\,\,\,\,\,\,\,\,\,\sigma_{t3}^2=0.0864+0.84\sigma_{t-1,3}^2+0.10\sigma_{t-1,3}^2\varepsilon_{t-1,3}^2,\,\,\,\sigma_{t4}^2=0.049+0.75\sigma_{t-1,4}^2+0.15\sigma_{t-1,4}^2\varepsilon_{t-1,4}^2,
\end{align*}
where $X_{t1},X_{t2}\stackrel{i.i.d.}\sim U[0,1]$ and are independent of $\varepsilon_t$. For the innovation copula, we consider 
the following DGPs:
\begin{align*}
\text{DGP 2.1: } C_t(u) 
=&\, C_{\mathrm{Ct}}^{(4)}(u;0.7);\\
\text{DGP 2.2: } C_t(u)
=&\, C_{\mathrm{Gs}}^{(2)}(u_{12};0.7)
    C_{\mathrm{Gs}}^{(2)}(u_{34};0.3);\\
\text{DGP 2.3: } C_t(u)
=&\, C_{\mathrm{Ct}}^{(2)}(u_{12};0.3)
    C_{\mathrm{Ct}}^{(2)}(u_{34};0.3)
    1(t/n\leqslant 0.5)\\
    &+ C_{\mathrm{Ct}}^{(2)}(u_{12};0.8)
    C_{\mathrm{Ct}}^{(2)}(u_{34};0.3)
    1(t/n>0.5);\\
\text{DGP 2.4: } C_t(u)
=&\, C_{\mathrm{Ct}}^{(4)}(u;0.3)
    1(t/n\leqslant 0.4)
  + C_{\mathrm{Gs}}^{(4)}(u;0.7)
    1(t/n>0.4);\\
\text{DGP 2.5: } C_t(u)
=&\, C_{\mathrm{Gs}}^{(2)}(u_{12};0)
    C_{\mathrm{Gs}}^{(2)}(u_{34};0.3)
    1(t/n\leqslant 0.3)\\
&+ C_{\mathrm{Gs}}^{(2)}(u_{12};0.3)
    C_{\mathrm{Gs}}^{(2)}(u_{34};0.3)
    1(0.3<t/n\leqslant 0.7)\\
&+ C_{\mathrm{Gs}}^{(2)}(u_{12};0.7)
    C_{\mathrm{Gs}}^{(2)}(u_{34};0.3)
    1(t/n>0.7);\\
\text{DGP 2.6: } C_t(u)
=&\, C_{\mathrm{Ct}}^{(4)}(u;\tau_t),\,\,\,
\tau_t=0.4\exp\left(-10(t/n-0.5)^2\right)+0.3;\\
\text{DGP 2.7: } C_t(u)
=&\, C_{\mathrm{Gs}}^{(2)}(u_{12};\tau_t)
    C_{\mathrm{Gs}}^{(2)}(u_{34};0.3),\,\,\,
\tau_t=0.8\sin(4\pi t/n),
\end{align*}
where $u = (u_{12}',u_{34}')'$ with $u_{12} = (u_1,u_2)'$ and $u_{34} = (u_3,u_4)'$. Note that DGPs 2.1--2.2 fall under $\mathcal{H}_0$, whereas DGPs 2.3--2.7 correspond to $\mathcal{H}_1$. In particular, DGPs 2.3, 2.5, and 2.7 represent sparse alternatives, under which Kendall's rank correlation changes only in a subset of component pairs. By contrast, DGPs 2.4 and 2.6 represent dense alternatives in which the dependence structure among all components changes jointly over time. We retain the same setups for $L$, $D_P(n)$, and $D_L(n)$ as in \autoref{Simulation 1}. The considered product bases are
\begin{align*}
&\pi_{(3)}(u) = (\varphi_1\varphi_1\varphi_0\varphi_0,\varphi_1\varphi_0\varphi_1\varphi_0,\varphi_1\varphi_0\varphi_0\varphi_1,\varphi_0\varphi_1\varphi_1\varphi_0,\varphi_0\varphi_1\varphi_0\varphi_1,\varphi_0\varphi_0\varphi_1\varphi_1)',\\
&\pi_{(4)}(u) = (\varphi_2\varphi_2\varphi_0\varphi_0,\varphi_2\varphi_0\varphi_2\varphi_0,\varphi_2\varphi_0\varphi_0\varphi_2,\varphi_0\varphi_2\varphi_2\varphi_0,\varphi_0\varphi_2\varphi_0\varphi_2,\varphi_0\varphi_0\varphi_2\varphi_2)'.
\end{align*}
We summarize the numerical results in Tables \ref{DGPs2-1--2-7 LS2}--\ref{DGPs2-1--2-7 DY2}.

Again, under the null hypothesis, both of our proposed tests deliver accurate empirical sizes comparable to those of the existing tests. Under the alternatives, however, the data-driven test $\hat{\Psi}_n^2(K(\hat{P}),\hat{L})$ generally performs better. For example, under DGP 2.7, where Kendall's rank correlation changes periodically only between the first two components, the directional tests based on $\pi_{(4)}$ suffer from severe power loss. In sharp contrast, the data-driven test maintains relatively stable power and uniformly outperforms the competing tests. Such a power enhancement can be attributed to the adaptive nature of the data-driven procedure: it automatically selects the truncation orders that capture the most pronounced deviations within the candidate sets, whereas $\hat{\Psi}_n^2(K,L)$ is restricted to prespecified directions. Moreover, under the dense alternatives DGPs 2.4 and 2.6, the data-driven test still performs comparably to the directional test while continuing to dominate the existing tests. Hence, for practical applications, the data-driven test $\hat{\Psi}_n^2(K(\hat{P}),\hat{L})$ is more recommended, as it automatically determines the truncation orders and adapts effectively to unknown patterns of copula changes.

To summarize, the simulation results highlight the practical usefulness of our proposed tests. We credit the good finite-sample performance of our asymptotic $\chi^2$ test to several distinctive features, which are (i) the transformation to jointly test the generalized Fourier coefficients, (ii) the elimination of the first-order estimation effect under the null hypothesis, and (iii) the avoidance of local smoothing parameters such as the bandwidth. Taken together, we believe our proposed tests offer a promising alternative to existing approaches for detecting copula structural changes in the SCOMDY model.

\begin{table}[htbp]
\renewcommand\arraystretch{0.69}
\centering
\caption{Empirical rejection rates under DGPs 2.1–2.7 with model LS2.} 
\resizebox{\linewidth}{!}{
\begin{tabular}{cccccccccccc}
\hline
DGP                      & $n$ & $\hat{\Psi}_{n,\pi_{(3)}}^2(3)$ & $\hat{\Psi}_{n,\pi_{(3)}}^2(4)$ & $\hat{\Psi}_{n,\pi_{(4)}}^2(3)$ & $\hat{\Psi}_{n,\pi_{(4)}}^2(4)$ & $\hat{\Psi}_{n}^2(K(\hat{P}),\hat{L})$ & $\hat{M}_n(0.5)$ & $\hat{M}_n(1)$ & $\hat{M}_n(2)$ & $\mathcal{T}_n$ & $\mathcal{S}_n$ \\ \hline
\multirow[t]{3}{*}{DGP 2.1} & 200 & 0.0480                          & 0.0490                          & 0.0410                          & 0.0430                          & 0.0390                                 & 0.0520           & 0.0500         & 0.0560         & 0.0560          & 0.0530          \\
                         & 500 & 0.0440                          & 0.0440                          & 0.0490                          & 0.0570                          & 0.0450                                 & 0.0620           & 0.0640         & 0.0800         & 0.0560          & 0.0630          \\
                         & 800 & 0.0360                          & 0.0500                          & 0.0510                          & 0.0470                          & 0.0480                                 & 0.0470           & 0.0540         & 0.0540         & 0.0400          & 0.0440          \\ \hline
\multirow[t]{3}{*}{DGP 2.2} & 200 & 0.0510                          & 0.0360                          & 0.0420                          & 0.0320                          & 0.0530                                 & 0.0640           & 0.0600         & 0.0630         & 0.0480          & 0.0490          \\
                         & 500 & 0.0390                          & 0.0480                          & 0.0360                          & 0.0410                          & 0.0450                                 & 0.0810           & 0.0770         & 0.0770         & 0.0410          & 0.0610          \\
                         & 800 & 0.0320                          & 0.0310                          & 0.0500                          & 0.0440                          & 0.0450                                 & 0.0640           & 0.0530         & 0.0650         & 0.0360          & 0.0440          \\ \hline
\multirow[t]{3}{*}{DGP 2.3} & 200 & 0.7010                          & 0.6350                          & 0.7370                          & 0.6600                          & 0.9090                                 & 0.0780           & 0.0880         & 0.1190         & 0.1620          & 0.1450          \\
                         & 500 & 0.9990                          & 0.9970                          & 0.9990                          & 0.9980                          & 1.0000                                 & 0.1320           & 0.1590         & 0.1950         & 0.4330          & 0.2780          \\
                         & 800 & 1.0000                          & 1.0000                          & 1.0000                          & 1.0000                          & 1.0000                                 & 0.2000           & 0.2510         & 0.3080         & 0.7210          & 0.4440          \\ \hline
\multirow[t]{3}{*}{DGP 2.4} & 200 & 0.7960                          & 0.7540                          & 0.8520                          & 0.8190                          & 0.9210                                 & 0.3430           & 0.4580         & 0.5670         & 0.7470          & 0.7040          \\
                         & 500 & 0.9990                          & 0.9990                          & 1.0000                          & 1.0000                          & 1.0000                                 & 0.7590           & 0.8730         & 0.9260         & 0.9860          & 0.9810          \\
                         & 800 & 1.0000                          & 1.0000                          & 1.0000                          & 1.0000                          & 1.0000                                 & 0.9420           & 0.9730         & 0.9910         & 1.0000          & 1.0000          \\ \hline
\multirow[t]{3}{*}{DGP 2.5} & 200 & 0.5670                          & 0.4850                          & 0.5070                          & 0.4210                          & 0.5500                                 & 0.1100           & 0.1410         & 0.1710         & 0.1970          & 0.1960          \\
                         & 500 & 0.9960                          & 0.9790                          & 0.9700                          & 0.9610                          & 0.9800                                 & 0.1750           & 0.2320         & 0.3130         & 0.4310          & 0.3710          \\
                         & 800 & 0.9990                          & 0.9990                          & 1.0000                          & 0.9990                          & 0.9990                                 & 0.2710           & 0.3500         & 0.4610         & 0.6840          & 0.5830          \\ \hline
\multirow[t]{3}{*}{DGP 2.6} & 200 & 0.5660                          & 0.4800                          & 0.4680                          & 0.4110                          & 0.5460                                 & 0.1810           & 0.2400         & 0.2820         & 0.0720          & 0.0960          \\
                         & 500 & 0.9810                          & 0.9630                          & 0.9570                          & 0.9270                          & 0.9770                                 & 0.4500           & 0.5660         & 0.6720         & 0.1440          & 0.2120          \\
                         & 800 & 1.0000                          & 1.0000                          & 0.9990                          & 0.9950                          & 1.0000                                 & 0.6440           & 0.7730         & 0.8560         & 0.2410          & 0.3410          \\ \hline
\multirow[t]{3}{*}{DGP 2.7} & 200 & 1.0000                          & 0.9990                          & 0.0400                          & 0.0510                          & 0.6760                                 & 0.3480           & 0.4180         & 0.3070         & 0.1370          & 0.1960          \\
                         & 500 & 1.0000                          & 1.0000                          & 0.0370                          & 0.0490                          & 1.0000                                 & 0.8410           & 0.9280         & 0.8680         & 0.4790          & 0.5750          \\
                         & 800 & 1.0000                          & 1.0000                          & 0.0430                          & 0.0480                          & 1.0000                                 & 0.9880           & 0.9990         & 0.9930         & 0.8030          & 0.8570          \\ \hline
\end{tabular}}
\label{DGPs2-1--2-7 LS2}
\end{table}

\begin{table}[htbp]
\renewcommand\arraystretch{0.69}
\centering
\caption{Empirical rejection rates under DGPs 2.1–2.7 with model DY2.} 
\resizebox{\linewidth}{!}{
\begin{tabular}{cccccccccccc}
\hline
DGP                      & $n$ & $\hat{\Psi}_{n,\pi_{(3)}}^2(3)$ & $\hat{\Psi}_{n,\pi_{(3)}}^2(4)$ & $\hat{\Psi}_{n,\pi_{(4)}}^2(3)$ & $\hat{\Psi}_{n,\pi_{(4)}}^2(4)$ & $\hat{\Psi}_{n}^2(K(\hat{P}),\hat{L})$ & $\hat{M}_n(0.5)$ & $\hat{M}_n(1)$ & $\hat{M}_n(2)$ & $\mathcal{T}_n$ & $\mathcal{S}_n$ \\ \hline
\multirow[t]{3}{*}{DGP 2.1} & 200 & 0.0440                          & 0.0390                          & 0.0540                          & 0.0530                          & 0.0510                                 & 0.0530           & 0.0640         & 0.0750         & 0.0450          & 0.0500          \\
                         & 500 & 0.0490                          & 0.0460                          & 0.0500                          & 0.0490                          & 0.0470                                 & 0.0460           & 0.0460         & 0.0530         & 0.0460          & 0.0450          \\
                         & 800 & 0.0430                          & 0.0560                          & 0.0550                          & 0.0470                          & 0.0420                                 & 0.0620           & 0.0660         & 0.0790         & 0.0550          & 0.0530          \\ \hline
\multirow[t]{3}{*}{DGP 2.2} & 200 & 0.0400                          & 0.0480                          & 0.0560                          & 0.0490                          & 0.0460                                 & 0.0730           & 0.0650         & 0.0660         & 0.0540          & 0.0500          \\
                         & 500 & 0.0520                          & 0.0610                          & 0.0430                          & 0.0370                          & 0.0450                                 & 0.0640           & 0.0750         & 0.0720         & 0.0470          & 0.0480          \\
                         & 800 & 0.0440                          & 0.0520                          & 0.0360                          & 0.0400                          & 0.0430                                 & 0.0640           & 0.0630         & 0.0610         & 0.0540          & 0.0510          \\ \hline
\multirow[t]{3}{*}{DGP 2.3} & 200 & 0.6870                          & 0.6100                          & 0.6840                          & 0.6230                          & 0.9140                                 & 0.0830           & 0.1090         & 0.1140         & 0.1480          & 0.1270          \\
                         & 500 & 1.0000                          & 0.9970                          & 0.9960                          & 0.9920                          & 1.0000                                 & 0.1350           & 0.1640         & 0.2050         & 0.4460          & 0.2980          \\
                         & 800 & 1.0000                          & 1.0000                          & 1.0000                          & 1.0000                          & 1.0000                                 & 0.1970           & 0.2490         & 0.3210         & 0.7160          & 0.4780          \\ \hline
\multirow[t]{3}{*}{DGP 2.4} & 200 & 0.7760                          & 0.7470                          & 0.8230                          & 0.7980                          & 0.9120                                 & 0.3450           & 0.4720         & 0.5520         & 0.7070          & 0.6530          \\
                         & 500 & 1.0000                          & 1.0000                          & 1.0000                          & 1.0000                          & 1.0000                                 & 0.7410           & 0.8570         & 0.9170         & 0.9890          & 0.9820          \\
                         & 800 & 1.0000                          & 1.0000                          & 1.0000                          & 1.0000                          & 1.0000                                 & 0.9290           & 0.9680         & 0.9810         & 0.9990          & 0.9970          \\ \hline
\multirow[t]{3}{*}{DGP 2.5} & 200 & 0.5320                          & 0.4680                          & 0.4690                          & 0.3960                          & 0.5460                                 & 0.1040           & 0.1120         & 0.1550         & 0.1700          & 0.1530          \\
                         & 500 & 0.9810                          & 0.9660                          & 0.9790                          & 0.9550                          & 0.9650                                 & 0.2020           & 0.2600         & 0.3300         & 0.4550          & 0.4010          \\
                         & 800 & 1.0000                          & 1.0000                          & 1.0000                          & 0.9980                          & 1.0000                                 & 0.2750           & 0.3480         & 0.4490         & 0.6850          & 0.5950          \\ \hline
\multirow[t]{3}{*}{DGP 2.6} & 200 & 0.5080                          & 0.4350                          & 0.4240                          & 0.3890                          & 0.4810                                 & 0.1740           & 0.2310         & 0.2630         & 0.0730          & 0.0860          \\
                         & 500 & 0.9850                          & 0.9650                          & 0.9420                          & 0.9120                          & 0.9800                                 & 0.4380           & 0.5740         & 0.6570         & 0.1480          & 0.2190          \\
                         & 800 & 0.9990                          & 0.9990                          & 0.9960                          & 0.9940                          & 1.0000                                 & 0.6610           & 0.7830         & 0.8500         & 0.2620          & 0.3670          \\ \hline
\multirow[t]{3}{*}{DGP 2.7} & 200 & 1.0000                          & 0.9990                          & 0.0370                          & 0.0460                          & 0.6250                                 & 0.3080           & 0.4020         & 0.2870         & 0.1360          & 0.1640          \\
                         & 500 & 1.0000                          & 1.0000                          & 0.0350                          & 0.0410                          & 1.0000                                 & 0.8310           & 0.9200         & 0.8770         & 0.4620          & 0.5850          \\
                         & 800 & 1.0000                          & 1.0000                          & 0.0370                          & 0.0430                          & 1.0000                                 & 0.9860           & 0.9960         & 0.9910         & 0.8300          & 0.8840          \\ \hline
\end{tabular}}
\label{DGPs2-1--2-7 DY2}
\end{table}

\section{Empirical applications}
\label{Empirical applications}

In this section, we evaluate the empirical performance of our proposed smooth directional test and data-driven test using two real-world datasets. In \autoref{Exchange rates}, we investigate the constancy of the innovation copula among three exchange rates considered by \citet{chen2006estimation}. Then, in \autoref{Stock returns}, we examine whether the innovation copula among the S\&P500, Cisco, and Intel returns, as analyzed by \citet{chan2009multivariate}, remains constant over time. Throughout this section, we retain the product construction \eqref{product bases} for $\{\pi_k(u)\}_{k=0}^\infty$. Both $\{\varphi_j(z)\}_{j=0}^\infty$ and $\{\psi_j(z)\}_{j=0}^\infty$ are chosen as the trigonometric bases in \eqref{triangonal} due to the dimension-free property.

\subsection{Exchange rates}
\label{Exchange rates}

We first analyze the exchange rate data considered by \citet{chen2006estimation}. In Section 6 of their paper, the authors examine the contemporaneous dependence among daily exchange rate returns for the Japanese yen, German mark, and French franc against the US dollar, denoted by JAP, DEM, and FRA, respectively. Let $S_{tj}$ denote the $j$th exchange rate on day $t$. For $j=1,2,\ldots,d$, the corresponding daily exchange rate return is constructed as $Y_{tj}=\log S_{tj}-\log S_{t-1,j}$. To account for serial dependence and conditional heteroskedasticity in each return series, the authors fit the following AR(1)--GARCH(1,1) model for the data:
\begin{align*}
Y_{tj}=c_j+\theta_{1j}Y_{t-1,j}+\sigma_{tj}\varepsilon_{tj},\,\,\,\sigma_{tj}^2=\omega_j+\theta_{2j}\sigma_{t-1,j}^2+\theta_{3j}\sigma_{t-1,j}^2\varepsilon_{t-1,j}^2,\,\,\,j=1,2,\ldots,d.
\end{align*}
The standardized innovations $\varepsilon_t=(\varepsilon_{t1},\ldots,\varepsilon_{td})'$ remove the marginal serial dynamics and volatility clustering from the exchange rate returns, while their copula captures the contemporaneous dependence across currencies. \citet{chen2006estimation} estimate and compare a wide range of parametric copula specifications for $\varepsilon_t$. Their analysis, however, requires the key assumption that the innovation copula remains constant throughout the sample period. We therefore revisit this assumption by applying our proposed directional and data-driven tests, along with competing tests, to examine whether the dependence structure among exchange rate innovations changes over time.

Following \citet{chen2006estimation}, we use daily exchange rate data for JAP, DEM, and FRA from April 28, 1986, to October 26, 1998, obtained from \url{https://www.federalreserve.gov/releases/H10/hist/}. The estimated coefficients of the AR(1)--GARCH(1,1) models are reported in \autoref{parameters exchange rates} and are numerically close to Table 1 of \citet{chen2006estimation}. Based on the fitted models, we obtain the standardized residuals and then apply the proposed methods to examine the constancy of the innovation copula.

\begin{table}[!h]
\renewcommand\arraystretch{0.67}
\centering
\caption{Parameter estimation for the AR(1)--GARCH(1,1) model.} 
\begin{tabular}{cccccc}
\hline
    & $c_j\times 10^{-5}$ & $\omega_j\times 10^{-7}$ & $\theta_{1j}$ & $\theta_{2j}$ & $\theta_{3j}$ \\ \hline
JAP & -6.8311             & 17.1456                 & 0.0476       & 0.8984       & 0.0675       \\
DEM & -4.7343             & 10.5631                 & 0.0408       & 0.9176       & 0.0590       \\
FRA & -6.8216             & 9.6405                  & 0.0406       & 0.9156       & 0.0610       \\ \hline
\end{tabular}
\label{parameters exchange rates}
\end{table}

We first test the constancy of the trivariate innovation copula. Although the original sample extends from April 28, 1986, to October 26, 1998, we restrict the analysis to the period from 1986 to 1991. This shorter subsample allows us to examine whether the dependence structure had already changed during the early part of the sample. For the directional test, we consider the following basis vectors:
\begin{align*}
\pi_{(5)}(u) = (\varphi_1\varphi_1\varphi_0, \varphi_0\varphi_1\varphi_1,\varphi_1\varphi_0\varphi_1)', \,\,\, \pi_{(6)}(u) = (\varphi_2\varphi_2\varphi_0, \varphi_0\varphi_2\varphi_2,\varphi_2\varphi_0\varphi_2)'.
\end{align*}
The choices of $L$, $D_P(n)$, $D_L(n)$, and $\hat{M}_n$ are the same as those used in \autoref{Simulations}, while the number of bootstrap replications for $\mathcal{T}_n$ and $\mathcal{S}_n$ is set to $B=2000$. The testing results are reported in \autoref{three exchange rates}. Most tests reject the null of copula constancy at the 1\% significance level, providing strong evidence that structural changes were already in place during the early part of the sample period.

\begin{table}[!h]
\renewcommand\arraystretch{0.75}
\centering
\caption{Testing $p$-values for the trivariate copula of the exchange rate data.} 
\resizebox{\linewidth}{!}{
\begin{tabular}{cccccccccc}
\hline
$\hat{\Psi}_{n,\pi_{(5)}}^2(3)$ & $\hat{\Psi}_{n,\pi_{(5)}}^2(4)$ & $\hat{\Psi}_{n,\pi_{(6)}}^2(3)$ & $\hat{\Psi}_{n,\pi_{(6)}}^2(4)$ & $\hat{\Psi}_{n}^2(K(\hat{P}),\hat{L})$ & $\hat{M}_n(0.5)$ & $\hat{M}_n(1)$ & $\hat{M}_n(2)$ & $\mathcal{T}_n$ & $\mathcal{S}_n$ \\\hline
$0.0001^{***}$                  & $0.0000^{***}$                  & $0.0000^{***}$                  & $0.0000^{***}$                  & $0.0000^{***}$                         & $0.0044^{***}$   & $0.2378$       & $0.5970$       & $0.0025^{***}$  & $0.0180^{**}$  \\\hline
\end{tabular}}
\vspace{-1.5em}
\begin{tablenotes}[flushleft]
\small
\setlength{\topsep}{0pt}
\setlength{\itemsep}{0pt}
\setlength{\parsep}{0pt}
\item \textit{Note.} $^{***}$, $^{**}$, and $^*$ indicate statistical significance at the 1\%, 5\%, and 10\% levels, respectively.
\end{tablenotes}
\label{three exchange rates}
\end{table}

To gain further insight into the structural changes detected in the trivariate copula, we examine copula constancy for each pair of exchange-rate innovations. The setups for the directional test, data-driven test, and $\hat{M}_n$ remain the same as those used in \autoref{Simulations}, and the number of bootstrap replications for $\mathcal{T}_n$ and $\mathcal{S}_n$ is set to $B=2000$. The testing $p$-values are summarized in \autoref{two exchange rates}. The results provide stronger evidence of time variation in the innovation copulas for the JAP--DEM and JAP--FRA pairs, whereas the evidence for the FRA--DEM pair is comparatively weaker. Taken together, these findings complement the copula analysis in Section 6 of \citet{chen2006estimation} by highlighting potential temporal variation in the dependence structure. Hence, when applying a constant copula specification to these exchange rates, it may be useful to consider a shorter sample period or to explicitly account for possible changes in the innovation copula over time.

\begin{table}[htbp]
\renewcommand\arraystretch{0.7}
\centering
\caption{Testing $p$-values for the bivariate copulas of the exchange rate data.} 
\resizebox{\linewidth}{!}{
\begin{tabular}{ccccccccccc}
\hline
                          & $\hat{\Psi}_{n,\pi_{(1)}}^2(3)$ & $\hat{\Psi}_{n,\pi_{(1)}}^2(4)$ & $\hat{\Psi}_{n,\pi_{(2)}}^2(3)$ & $\hat{\Psi}_{n,\pi_{(2)}}^2(4)$ & $\hat{\Psi}_{n}^2(K(\hat{P}),\hat{L})$ & $\hat{M}_n(0.5)$ & $\hat{M}_n(1)$ & $\hat{M}_n(2)$ & $\mathcal{T}_n$ & $\mathcal{S}_n$ \\ \hline
JAP $\leftrightarrow$ DEM & $0.0000^{***}$                  & $0.0000^{***}$                  & $0.0000^{***}$                  & $0.0000^{***}$                  & $0.0000^{***}$                         & $0.0011^{***}$   & $0.1081$       & $0.3079$       & $0.0015^{***}$  & $0.0045^{***}$  \\
JAP $\leftrightarrow$ FRA & $0.0000^{***}$                  & $0.0000^{***}$                  & $0.0000^{***}$                  & $0.0000^{***}$                  & $0.0000^{***}$                         & $0.0052^{***}$   & $0.1553$       & $0.4038$       & $0.0015^{***}$  & $0.0195^{**}$   \\
FRA $\leftrightarrow$ DEM & $0.0239^{**}$                   & $0.0069^{***}$                  & $0.0706^{*}$                    & $0.0223^{**}$                   & $0.1279$                               & $0.0565^{*}$     & $0.2486$       & $0.4329$       & $0.4470$        & $0.4900$        \\ \hline
\end{tabular}}
\vspace{-0.9em}
\begin{tablenotes}[flushleft]
\small
\setlength{\topsep}{0pt}
\setlength{\itemsep}{0pt}
\setlength{\parsep}{0pt}
\item \textit{Note.} $^{***}$, $^{**}$, and $^*$ indicate statistical significance at the 1\%, 5\%, and 10\% levels, respectively.
\end{tablenotes}
\label{two exchange rates}
\end{table}

\subsection{Stock returns}
\label{Stock returns}

We next analyze the daily log returns of the S\&P500 index (S\&P500), Cisco Systems (CSCO), and Intel Corporation (INTC) over the period from January 2, 1991, to December 31, 1999. The data are obtained from Yahoo Finance and are available at \url{https://finance.yahoo.com}. Let $S_{tj}$ denote the closing price of asset $j$ on day $t$. We construct the corresponding percentage log return as $Y_{tj}=100\times(\log S_{tj}-\log S_{t-1,j})$. Following \citet{chan2009multivariate}, we fit a GARCH(1,1) model to each return series:
\begin{align*}
Y_{tj}=\sigma_{tj}\varepsilon_{tj},\,\,\,\sigma_{tj}^2=\omega_j+\theta_{2j}\sigma_{t-1,j}^2+\theta_{3j}\sigma_{t-1,j}^2\varepsilon_{t-1,j}^2,\,\,\,j=1,2,\ldots,d.
\end{align*}
The resulting parameter estimates are reported in \autoref{parameters stock markets} and are numerically similar to Table 2 of \citet{chan2009multivariate}. The fitted models yield the standardized innovations $\varepsilon_t=(\varepsilon_{t1},\ldots,\varepsilon_{td})'$, whose copula characterizes the contemporaneous dependence among the three return series after accounting for their time-varying conditional volatilities. The copula analysis in \citet{chan2009multivariate}, as in \citet{chen2006estimation}, assumes that the innovation copula remains unchanged over time. We therefore apply our proposed methods to assess the empirical validity of this assumption. In what follows, we use the same specifications for all test statistics as in \autoref{Exchange rates}.

\begin{table}[!h]
\renewcommand\arraystretch{0.67}
\centering
\caption{Parameter estimation for the GARCH(1,1) model.} 
\begin{tabular}{cccc}
\hline
      & $\omega_j$ & $\theta_{2j}$ & $\theta_{3j}$ \\ \hline
S\&P500 & 0.0048     & 0.9466        & 0.0473        \\
CSCO  & 0.2679     & 0.9033        & 0.0647        \\
INTC  & 0.0457     & 0.9778        & 0.0150        \\ \hline
\end{tabular}
\label{parameters stock markets}
\end{table}

We first consider the trivariate innovation copula. The corresponding test results are reported in \autoref{three stock markets}. Most of the reported tests reject the null hypothesis of a constant innovation copula at the conventional 5\% significance level, suggesting that the dependence structure among the three stock returns may change over the sample period.

\begin{table}[!h]
\renewcommand\arraystretch{0.7}
\centering
\caption{Testing $p$-values for the trivariate copula of the stock return data.} 
\resizebox{\linewidth}{!}{
\begin{tabular}{cccccccccc}
\hline
$\hat{\Psi}_{n,\pi_{(5)}}^2(3)$ & $\hat{\Psi}_{n,\pi_{(5)}}^2(4)$ & $\hat{\Psi}_{n,\pi_{(6)}}^2(3)$ & $\hat{\Psi}_{n,\pi_{(6)}}^2(4)$ & $\hat{\Psi}_{n}^2(K(\hat{P}),\hat{L})$ & $\hat{M}_n(0.5)$ & $\hat{M}_n(1)$ & $\hat{M}_n(2)$ & $\mathcal{T}_n$ & $\mathcal{S}_n$ \\ \hline
$0.0136^{**}$                   & $0.0050^{***}$                  & $0.0003^{***}$                  & $0.0010^{***}$                  & $0.0428^{**}$                          & $0.0451^{**}$    & $0.0633^{*}$   & $0.0110^{**}$  & $0.3175$        & $0.1475$        \\ \hline
\end{tabular}}
\vspace{-1.5em}
\begin{tablenotes}[flushleft]
\small
\setlength{\topsep}{0pt}
\setlength{\itemsep}{0pt}
\setlength{\parsep}{0pt}
\item \textit{Note.} $^{***}$, $^{**}$, and $^*$ indicate statistical significance at the 1\%, 5\%, and 10\% levels, respectively.
\end{tablenotes}
\label{three stock markets}
\end{table}

We next examine the three bivariate innovation copulas. The corresponding results are reported in \autoref{two stock markets}. The evidence against copula constancy is strongest for the S\&P500--CSCO pair, while the corresponding evidence for the S\&P500--INTC and CSCO--INTC pairs is less pronounced. This finding is economically plausible given the substantial changes in financial market conditions over the 1991--1999 sample period. In particular, the sample spans the rapid expansion of the technology sector during the late-1990s dot-com boom, as well as episodes of severe market stress associated with the Asian financial crisis and the Russian default. Although this historical correspondence does not identify the source of the detected changes, it provides an economically plausible interpretation of our findings. Taken together, the results suggest that the empirical framework of \citet{chan2009multivariate} may be more appropriately applied to shorter and relatively homogeneous subsamples or extended to accommodate a time-varying innovation copula.

\begin{table}[htbp]
\renewcommand\arraystretch{0.7}
\centering
\caption{Testing $p$-values for the bivariate copulas of the stock return data.} 
\resizebox{\linewidth}{!}{
\begin{tabular}{ccccccccccc}
\hline
                             & $\hat{\Psi}_{n,\pi_{(1)}}^2(3)$ & $\hat{\Psi}_{n,\pi_{(1)}}^2(4)$ & $\hat{\Psi}_{n,\pi_{(2)}}^2(3)$ & $\hat{\Psi}_{n,\pi_{(2)}}^2(4)$ & $\hat{\Psi}_{n}^2(K(\hat{P}),\hat{L})$ & $\hat{M}_n(0.5)$ & $\hat{M}_n(1)$ & $\hat{M}_n(2)$ & $\mathcal{T}_n$ & $\mathcal{S}_n$ \\ \hline
S\&P500 $\leftrightarrow$ CSCO & $0.0001^{***}$                  & $0.0002^{***}$                  & $0.0000^{***}$                  & $0.0003^{***}$                  & $0.0874^{*}$                           & $0.5009$         & $0.5522$       & $0.3077$       & $0.2480$        & $0.5045$        \\
S\&P500 $\leftrightarrow$ INTC & $0.1419$                        & $0.1040$                        & $0.0838^{*}$                    & $0.0504^{*}$                    & $0.0528^{*}$                           & $0.3638$         & $0.2195$       & $0.0436^{**}$  & $0.2195$        & $0.2180$        \\
CSCO $\leftrightarrow$ INTC  & $0.0486^{**}$                   & $0.0479^{**}$                   & $0.0501^{*}$                    & $0.0573^{*}$                    & $0.0227^{**}$                          & $0.1660$         & $0.4113$       & $0.4015$       & $0.2310$        & $0.3445$        \\ \hline
\end{tabular}}
\vspace{-0.9em}
\begin{tablenotes}[flushleft]
\small
\setlength{\topsep}{0pt}
\setlength{\itemsep}{0pt}
\setlength{\parsep}{0pt}
\item \textit{Note.} $^{***}$, $^{**}$, and $^*$ indicate statistical significance at the 1\%, 5\%, and 10\% levels, respectively.
\end{tablenotes}
\label{two stock markets}
\end{table}

\section{Concluding comments}
\label{Concluding comments}

This paper develops new smooth tests for structural changes in the innovation copula within the SCOMDY framework of \citet{chen2006estimation}. We transform the original problem of testing copula constancy into a joint significance test for generalized Fourier coefficients along both the copula and time directions. This reformulation yields test statistics with asymptotically $\chi^2$ distributions, making statistical inference straightforward and computationally efficient. Under the null hypothesis, estimation of the dynamic parameters and the unknown marginals has no first-order effect, so the feasible statistics constructed from estimated residuals are asymptotically equivalent to their infeasible counterparts based on the latent innovations. Moreover, the proposed tests can detect local alternatives that approach the null at the parametric rate of $n^{-1/2}$, while avoiding local smoothing and the associated bandwidth selection. Compared with existing approaches, our tests avoid the resampling-based critical values required by \citet{nasri2022change} and the bandwidth choice involved in \citet{lu2025adaptive}. To enhance the practical applicability of our proposal, we also develop a data-driven test that automatically searches across both the copula and time directions for the most informative deviations. Our Monte Carlo experiments show that both tests deliver accurate empirical size, while the data-driven test achieves substantial power gains under sparse structural changes without sacrificing performance under dense alternatives. Two empirical applications further highlight the practical applicability of the proposed smooth tests.

This paper also motivates further topics for exploration. For example, once the null hypothesis of copula constancy is rejected, a natural next step is to move beyond detection to determine how the innovation copula changes. This calls for extending the time-invariant copula specification of \citet{chen2006estimation} to models that accommodate smooth evolution, multiple abrupt breaks, or dependence driven by observed state variables, together with valid estimation and inference for the timing and magnitude of such changes. A complete discussion of these procedures is beyond the scope of this article and is left for future research.

\setlength{\bibsep}{0.25em} 
\spacingset{0.5}
\bibliography{reference}
\bibliographystyle{chicago}
\spacingset{1.8}

\newpage
\clearpage
\newgeometry{left=1.5cm,right=1.5cm,top=2cm,bottom=2cm}
\def\spacingset#1{\renewcommand{\baselinestretch}%
{#1}\small\normalsize} \spacingset{1}
\spacingset{1.8}
\appendix
\setcounter{page}{1}
\begin{center}
{\Large\bf Detecting Copula Structural Changes:  \\ [ -0.8em ]
A Smooth Testing Approach \\[ -0.1em ]
Online Supplementary Material \par}
{\large Shiyao Huang and Xiaojun Song\\
Peking University}
\end{center}

\setcounter{section}{1}  
\renewcommand{\theequation}{\thesection.\arabic{equation}}
\setcounter{equation}{0}
\phantomsection

In the Online Supplementary Material, we provide the proofs of our main theoretical results. We first state and prove several auxiliary lemmas that support the proof of our main theorems. For notational convenience, we employ the notation $C$ to represent a universal constant, whose value may differ from line to line. 

For $j=1,2,\ldots,d$, define the infeasible empirical distribution function as
\begin{align*}
F_{nj}(x_j) = \frac{1}{n}\sum_{t=1}^n 1(\varepsilon_{tj}\leqslant x_j).
\end{align*}
We first establish an important lemma that characterizes the parameter estimation effect arising from $\hat{\alpha}_n$.

\begin{lemma}
\label{Lemma1}
Suppose Assumptions \ref{A1}--\ref{A3} hold. Then, as $n\rightarrow\infty$, 
\begin{align*}
\sqrt{n}\left(\hat{F}_{nj}(x_j) - F_{nj}(x_j)\right) = f_j(x_j)(a_j + x_jb_j)'\sqrt{n}(\hat{\alpha}_n - \alpha_0) + o_p(1),
\end{align*}
where the $o_p(1)$ is uniform in $x_j\in\mathbb{R}$.
\end{lemma}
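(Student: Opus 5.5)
The plan is the standard residual empirical process argument of \citet{chen2006estimation} and \citet{chan2009multivariate}, adapted to our nonstationary setting via \autoref{A3}(v). First write the estimated residual as a perturbation of the true innovation. Since $Y_{tj}=\mu_{tj}(\alpha_0)+V_{tj}^{1/2}(\alpha_0)\varepsilon_{tj}$,
\begin{align*}
\hat\varepsilon_{tj}=\frac{\varepsilon_{tj}-m_{tj}(\hat\alpha_n)}{s_{tj}(\hat\alpha_n)},
\end{align*}
so $\{\hat\varepsilon_{tj}\leqslant x\}=\{\varepsilon_{tj}\leqslant m_{tj}(\hat\alpha_n)+x\,s_{tj}(\hat\alpha_n)\}$. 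Both $m_{tj}(\alpha)$ and $s_{tj}(\alpha)$ are $\mathcal{F}_{t-1}$-measurable, and $m_{tj}(\alpha_0)=0$, $s_{tj}(\alpha_0)=1$. A second-order Taylor expansion (\autoref{A3}(ii)--(iii)) gives $m_{tj}(\alpha)+xs_{tj}(\alpha)=x+(a_{tj}+xb_{tj})'(\alpha-\alpha_0)+R_{tj}(\alpha,x)$, with $|R_{tj}|\leqslant C(1+|x|)G_{2,tj}\|\alpha-\alpha_0\|^2$. Here I use $\partial m_{tj}/\partial\alpha(\alpha_0)=a_{tj}$ and $\partial s_{tj}/\partial\alpha(\alpha_0)=b_{tj}$.

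Next, for a local parameter $\alpha=\alpha_0+n^{-1/2}\gamma$ with $\|\gamma\|\leqslant M$, decompose
\begin{align*}
\sqrt{n}\big(\hat F_{nj}(x)-F_{nj}(x)\big)=\frac{1}{\sqrt n}\sum_{t=1}^n\Big[1(\varepsilon_{tj}\leqslant x+\eta_{tj})-1(\varepsilon_{tj}\leqslant x)-\{F_j(x+\eta_{tj})-F_j(x)\}\Big]+\frac{1}{\sqrt n}\sum_{t=1}^n\{F_j(x+\eta_{tj})-F_j(x)\},
\end{align*}
where $\eta_{tj}=\eta_{tj}(\gamma,x)$ is the $\mathcal{F}_{t-1}$-measurable shift.

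For the drift (second) term, expand $F_j$ to get $f_j(x)\eta_{tj}$ plus a remainder bounded by $\sup|\dot f_j|\eta_{tj}^2$. The weights $(1+|x|)$ and $(1+|x|)^2$ are absorbed by $M_{0j}$ and $M_{1j}$ in \autoref{A3}(iv); this is exactly why those weighted sup conditions are imposed. The moments in \autoref{A3}(iii) then give $n^{-1/2}\sum_t f_j(x)\eta_{tj}=f_j(x)(\bar a_{nj}+x\bar b_{nj})'\gamma+o_p(1)$ uniformly in $x$ and $\gamma$. The quadratic remainders are $O_p(n^{-1/2})$: note $n^{-1}\sum_t G_{1,tj}^2=O_p(1)$, and $\max_t G_{1,tj}=o_p(n^{1/2})$ by the $2+\delta$ moments. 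Finally, \autoref{A3}(v) replaces $\bar a_{nj},\bar b_{nj}$ by $a_j,b_j$, using $\sup_x f_j(x)(1+|x|)<\infty$ for uniformity.

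The main obstacle is the first term, a martingale-difference empirical process indexed by $(x,\gamma)$. Its summands are $\mathcal{F}_{t-1}$-conditionally centered, because $\varepsilon_t$ is independent of $\mathcal{F}_{t-1}$, and their conditional variance is at most $|F_j(x+\eta_{tj})-F_j(x)|\leqslant M_{0j}|\eta_{tj}|/(1+|x|)$ up to constants, which is $O_p(n^{-1/2})$ on average. So pointwise it is $o_p(1)$ by Chebyshev. For uniformity I would follow the chaining/bracketing argument of Koul's weighted residual empirical processes, as used in Lemma A.1 of \citet{chen2006estimation}. Cover the compact $\gamma$-ball by a finite grid. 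Cover $x$ by a grid of size $O(n^{1/2})$ in $F_j$-probability, plus tails, where the tails are handled by the $(1+|x|)$ weights. Use monotonicity of indicators in the shift to sandwich between grid points. Then apply a Freedman/Bernstein maximal inequality for martingale differences to get a union bound. Having both $x$ and $\gamma$ enter the shift is the delicate part; it is handled by bounding $|\eta_{tj}(\gamma,x)-\eta_{tj}(\gamma',x)|\leqslant n^{-1/2}(1+|x|)G_{1,tj}\|\gamma-\gamma'\|$. Since $\sqrt n(\hat\alpha_n-\alpha_0)=O_p(1)$ (\autoref{A2}), the result holds on events of probability arbitrarily close to one with $\gamma=\sqrt n(\hat\alpha_n-\alpha_0)$. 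This yields the stated expansion uniformly in $x_j$.
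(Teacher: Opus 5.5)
Your proposal is correct. Its skeleton matches the paper's proof:
\begin{itemize}
\item the identity $\{\hat\varepsilon_{tj}\leqslant x\}=\{\varepsilon_{tj}\leqslant m_{tj}(\hat\alpha_n)+xs_{tj}(\hat\alpha_n)\}$;
\item the second-order expansion of the shift;
\item the split into a conditionally centered part and a compensator;
\item the linearization of the compensator via the weighted bounds $M_{0j},M_{1j}$;
\item the final substitution $\gamma=\sqrt n(\hat\alpha_n-\alpha_0)$ followed by \autoref{A3}(v).
\end{itemize}
These correspond to Propositions~\ref{pro1}--\ref{pro3}, \ref{pro5}, \ref{pro7} and the closing step.

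The genuine difference is how you get uniformity of the centered martingale process over $(x,\gamma)$. The paper first replaces the shift by its linearization $x+n^{-1/2}(a_{tj}+xb_{tj})'h$. The resulting indicator is a halfspace indicator in $(\varepsilon_{tj},a_{tj}',b_{tj}')'$, so the index class is VC with uniformly integrable entropy. The paper then applies the uniform CLT for martingale difference arrays of Bae et al.\ (2010), using that the random $L_2$ radius of the relevant functions is $O_p(n^{-1/4})$ (Proposition~\ref{pro4}). The quadratic Taylor remainder of the shift is removed separately, by a sandwich that is again a halfspace class once $G_{2,tj}$ is appended to the data vector (Proposition~\ref{pro6}).

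You instead keep the full nonlinear shift and proceed as follows.
\begin{itemize}
\item You bracket in $\gamma$ through your Lipschitz bound. The bracketing error is $O_p(\delta)$ because $n^{-1}\sum_tG_{1,tj}=O_p(1)$.
\item You bracket in $x$ through monotonicity of $x\mapsto m_{tj}+xs_{tj}$.
\item You control polynomially many martingales by Freedman's inequality and a union bound. The increments are bounded by $2n^{-1/2}$ and the predictable variation is $O_p(n^{-1/2})$ uniformly, so the tail probabilities are $\exp(-c\epsilon^2\sqrt n)$.
\end{itemize}
Your route is elementary and self-contained, needs no linear-in-parameters (VC) structure, and dispenses with the separate remainder step. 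The paper's route delivers joint uniformity in one stroke from a general theorem, with no grids or tail bookkeeping.

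One detail needs care when you write it out. With a grid that is fine only in $F_j$-probability, bounding the compensator increments
\[
n^{-1/2}\sum_t\{F_j(m_{tj}+x_{i+1}s_{tj})-F_j(m_{tj}+x_is_{tj})\}
\]
needs an extra uniform-continuity argument for $f_j$ and $xf_j(x)$ where $f_j$ is small. It is simpler to work on an event of probability tending to one (by Proposition~\ref{pro3}) on which, for all $t$, $\inf_{\|\gamma\|\leqslant M}s_{tj}(\alpha_0+n^{-1/2}\gamma)\geqslant1/2$ and $n^{-1/2}G_{1,tj}\leqslant1/4$. On that event your bracket shifts $m_{tj}+xs_{tj}\pm\delta n^{-1/2}(1+|x|)G_{1,tj}$ are monotone in $x$, and you can proceed in two steps.
\begin{itemize}
\item Take an $x$-grid of mesh $\epsilon n^{-1/2}$ on $[-\sqrt n,\sqrt n]$. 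This is still polynomially many points, and the increments are $\leqslant C\epsilon$ because $f_j\leqslant M_{0j}$ and $s_{tj}$ is close to one.
\item Handle $|x|>\sqrt n$ by the Chebyshev bound $\sqrt n\,\mathbb{P}\{|\varepsilon_{tj}|>c\sqrt n\}=O(n^{-1/2})$.
\end{itemize}
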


\begin{proof}[\textit{\textbf{Proof of \autoref{Lemma1}.}}]
We prove \autoref{Lemma1} using a series of propositions.

\begin{proposition}
\label{pro1}
Suppose \autoref{A3} holds. Let $r,v\in\mathbb{R}$ and $w\geqslant 0$ be such that $|r|+|v|+w\leqslant 1/2$. Then, 
\begin{align*}
&(i)\,\,\, \sup_{x_j\in\mathbb{R}}\big|F_j(x_j + r + vx_j) - F_j(x_j)\big|\leqslant 2M_{0j}(|r|+|v|),\\
&(ii)\,\,\, \sup_{x_j\in\mathbb{R}}\big|F_j(x_j + r + vx_j) - F_j(x_j) - f_j(x_j)(r+vx_j)\big|\leqslant 4M_{1j}(r^2+v^2),\\
&(iii)\,\,\, \sup_{x_j\in\mathbb{R}}\big|F_j(x_j + r + vx_j + w(1+|x_j|)) - F_j(x_j + r + vx_j - w(1+|x_j|))\big|\leqslant 4M_{0j}w.
\end{align*}
\end{proposition}

\begin{proof}[\textit{\textbf{Proof of \autoref{pro1}.}}]
For $\theta\in[0,1]$, define $x_{j,\theta} = x_j + \theta(r+vx_j)$. Since $|r|+|v|\leqslant 1/2$, we have
\begin{align}
\frac{1}{2}(1+|x_j|)\leqslant (1-|r|-|v|)(1+|x_j|)\leqslant 1-|r| + (1-|v|)|x_j|\leqslant 1+|x_{j,\theta}|.
\label{pro1 1}
\end{align}
Then, under \autoref{A3}, we get
\begin{align}
f_j(x_{j,\theta})\leqslant \frac{M_{0j}}{1+|x_{j,\theta}|}\leqslant \frac{2M_{0j}}{1+|x_j|}.
\label{pro1 2}
\end{align}
Observe that
\begin{align}
F_j(x_j+r+vx_j) - F_j(x_j)=\int_{x_j}^{x_j+r+vx_j}f_j(x)dx = (r+vx_j)\int_0^1f_j(x_{j,\theta})d\theta.
\label{pro1 3}
\end{align}
Then, gathering \eqref{pro1 2}--\eqref{pro1 3}, we have
\begin{align*}
\big|F_j(x_j + r + vx_j) - F_j(x_j)\big|\leqslant 2M_{0j}\frac{|r|+|v||x_j|}{1+|x_j|}\leqslant 2M_{0j}(|r|+|v|),\,\,\, x_j\in\mathbb{R},
\end{align*}
which proves (i).

Next, for $\theta,\eta\in[0,1]$, define $x_{j,\eta\theta} = x_j+\eta\theta(r+vx_j)$. Then, by \eqref{pro1 1}, it follows that $(1+|x_j|)/2\leqslant 1+ |x_{j,\eta\theta}|$. Under \autoref{A3}, we further have
\begin{align}
|\dot f_j(x_{j,\eta\theta})|\leqslant \frac{M_{1j}}{(1+|x_{j,\eta\theta}|)^2}\leqslant \frac{4M_{1j}}{(1+|x_j|)^2}.
\label{pro1 4}
\end{align}
Note that
\begin{align*}
F_j(x_j+r+vx_j) - F_j(x_j)= & \int_{x_j}^{x_j+r+vx_j}f_j(x)dx\\
=&\int_{x_j}^{x_j+r+vx_j}\left(f_j(x_j) + \int_{x_j}^x\dot f_j(u)du\right)dx\\
=&f_j(x_j)(r+vx_j) + \int_{x_j}^{x_j+r+vx_j}\int_{x_j}^x\dot f_j(u)dudx.
\end{align*}
Thus, by a change of variables, we get
\begin{align}
F_j(x_j+r+vx_j) - F_j(x_j) - f_j(x_j)(r+vx_j) = (r+vx_j)^2\int_0^1\int_0^1\dot f_j(x_{j,\eta\theta})\theta d\eta d\theta.
\label{pro1 5}
\end{align}
Taking \eqref{pro1 4}--\eqref{pro1 5} together, we have
\begin{align*}
\big|F_j(x_j+r+vx_j) - F_j(x_j) - f_j(x_j)(r+vx_j)\big|\leqslant \frac{4M_{1j}(r+vx_j)^2}{(1+|x_j|)^2}\int_0^1\theta d\theta\leqslant 4M_{1j}(r^2 + v^2).
\end{align*}

Finally, for all $y\in [x_j+r+vx_j - w(1+|x_j|), x_j+r+vx_j + w(1+|x_j|)]$, we have
\begin{align*}
\frac{1}{2}(1+|x_j|)\leqslant (1 - |r| - |v| - w)(1+|x_j|)\leqslant 1-|r| + (1-|v|)|x_j| - w(1+|x_j|)\leqslant 1+|y|.
\end{align*}
Thus, we obtain
\begin{align*}
\big|F_j(x_j + r + vx_j + w(1+|x_j|)) - F_j(x_j + r + vx_j - w(1+|x_j|))\big|\leqslant \frac{2M_{0j}w(1+|x_j|)}{1+|y_\theta|}\leqslant 4M_{0j}w,
\end{align*}
where $y_\theta = x_j + r + vx_j + \theta w(1+|x_j|)$ with $\theta\in[-1,1]$ is a middle value between $x_j+r+vx_j - w(1+|x_j|)$ and $x_j+r+vx_j + w(1+|x_j|)$. Then, the proof is completed.
\end{proof}

\begin{proposition} 
\label{pro2} 
Suppose \autoref{A3} holds. For any fixed $M>0$, there exists $n_M\in\mathbb{N}^+$ such that, with probability 1, for all $n\geqslant n_M$, $t=1,2,\ldots,n$, $x_j\in\mathbb{R}$, and $h\in\mathbb{R}^{d_\alpha}$ satisfying $\|h\|\leqslant M$, \begin{align*} m_{tj}(\alpha_0+n^{-1/2}h)+x_js_{tj}(\alpha_0+n^{-1/2}h)=x_j+\frac{1}{\sqrt{n}}(a_{tj}+x_jb_{tj})'h+R_{ntj}(x_j,h), \end{align*} where the remainder satisfies \begin{align*} |R_{ntj}(x_j,h)|\leqslant\frac{M^2}{2n}(1+|x_j|)G_{2,tj}. \end{align*} \end{proposition}

\begin{proof}[\textit{\textbf{Proof of \autoref{pro2}.}}]
Note that $m_{tj}(\alpha_0) = 0$, $s_{tj}(\alpha_0) = 1$,  $\partial m_{tj}(\alpha_0)/\partial \alpha = a_{tj}$, and $\partial s_{tj}(\alpha_0)/\partial \alpha = b_{tj}$. Then, with a Taylor expansion with respect to $\alpha_0$, we obtain
\begin{align*}
m_{tj}(\alpha_0 + n^{-1/2}h) + x_js_{tj}(\alpha_0 + n^{-1/2}h) = x_j + \frac{1}{\sqrt{n}}(a_{tj} + x_{j}b_{tj})'h + R_{ntj}(x_j,h),
\end{align*}
where 
\begin{align*}
R_{ntj}(x_j,h) = \frac{1}{2n}h'\left(\frac{\partial^2 m_{tj}}{\partial \alpha\partial \alpha'}(\tilde{\alpha}_n) +x_j \frac{\partial^2 s_{tj}}{\partial \alpha\partial \alpha'}(\tilde{\alpha}_n)\right)h,
\end{align*}
and $\tilde{\alpha}_n = \alpha_0 + \theta h/\sqrt{n}$ for some $\theta\in[0,1]$. Note that there exists some integer $n_M$, such that for all $\theta\in[0,1]$ and $\|h\|\leqslant M$, $\alpha_0 + \theta h/\sqrt{n}\in\mathcal{A}_0$ for $n\geqslant n_M$. As such, as long as $n\geqslant n_M$, we have
\begin{align*}
|R_{ntj}(x_j,h)|\leqslant \frac{\|h\|^2}{2n}\left(\sup_{\alpha\in\mathcal{A}_0}\left\|\frac{\partial^2 m_{tj}}{\partial \alpha\partial \alpha'}(\alpha)\right\|+|x_j|\sup_{\alpha\in\mathcal{A}_0}\left\|\frac{\partial^2 s_{tj}}{\partial \alpha\partial \alpha'}(\alpha)\right\|\right)\leqslant \frac{M^2}{2n}(1+|x_j|)G_{2,tj},
\end{align*}
which completes the proof.
\end{proof}

\begin{proposition} 
\label{pro3} 
Suppose \autoref{A3} holds. Then, as $n\rightarrow\infty$,
\begin{align*}
\frac{1}{n}\sum_{t=1}^nG_{1,tj}^2= O_p(1),\,\,\, \frac{1}{n}\sum_{t=1}^nG_{2,tj} = O_p(1),\,\,\,\frac{1}{\sqrt{n}}\max_{1\leqslant t\leqslant n}G_{1,tj} = o_p(1),\,\,\,\frac{1}{n}\max_{1\leqslant t\leqslant n}G_{2,tj} = o_p(1).
\end{align*}
\end{proposition}

\begin{proof}[\textit{\textbf{Proof of \autoref{pro3}.}}]
Under \autoref{A3}, we have
\begin{align*}
\limsup_{n\rightarrow\infty}\frac{1}{n}\sum_{t=1}^n \mathbb{E}(G_{1,tj}^2)<\infty.
\end{align*}
Then, by Markov's inequality, we have $n^{-1}\sum_{t=1}^nG_{1,tj}^2= O_p(1)$. Next, for any $\varepsilon>0$, 
\begin{align*}
\mathbb{P}\left\{\max_{1\leqslant t\leqslant n}G_{1,tj}\geqslant \sqrt{n}\varepsilon\right\}\leqslant \sum_{t=1}^n\mathbb{P}\left\{G_{1,tj}\geqslant \sqrt{n}\varepsilon\right\}\leqslant \sum_{t=1}^n\frac{\mathbb{E}(G_{1,tj}^{2+\delta})}{n^{1+\delta/2}\varepsilon^{2+\delta}}\rightarrow 0
\end{align*}
by \autoref{A3}, implying that $n^{-1/2}\max_{1\leqslant t\leqslant n}G_{1,tj} = o_p(1)$. The proof for $G_{2,tj}$ is similar, and we omit it for brevity.
\end{proof}

For $j=1,2,\ldots,d$, define $e_{tj}(x) = 1(\varepsilon_{tj}\leqslant x) - F_j(x)$. 
\begin{proposition} 
\label{pro4} 
Suppose Assumptions \ref{A1} and \ref{A3} hold and $M>0$ is fixed. Then, as $n\rightarrow\infty$,
\begin{align*}
\sup_{\|h\|\leqslant M}\sup_{x\in\mathbb{R}}\left|\frac{1}{\sqrt{n}}\sum_{t=1}^n\left(e_{tj}(x + n^{-1/2}(a_{tj} + xb_{tj})'h)- e_{tj}(x)\right)\right|=o_p(1).
\end{align*}
\end{proposition}

\begin{proof}[\textit{\textbf{Proof of \autoref{pro4}.}}]
For $\beta,w\in\mathbb{R}^{1+2d_\alpha}$, $c\in \mathbb{R}$, let $H_{\beta,c}\subset \mathbb{R}^{1+2d_\alpha}$ denote the affine halfspace $H_{\beta,c} = \{w\in\mathbb{R}^{1+2d_\alpha}: \beta'w\leqslant c\}$. Define $\mathscr{H} = \{H_{\beta,c}:\beta\in\mathbb{R}^{1+2d_\alpha},c\in\mathbb{R}\}$. Let $\mathcal{G}$ denote the function space $\mathcal{G}=\{w\mapsto 1(w\in H_1) - 1(w\in H_2): H_1,H_2\in\mathscr{H}\}$. Obviously, $\mathcal{G}$ has envelope function $G(w)\equiv1$: $\sup_{g\in\mathcal{G}}|g(w)|\leqslant G(w)$ for all $w\in\mathbb{R}^{1+2d_\alpha}$.

Let $\mathscr M(\mathbb{R}^{1+2d_\alpha})$ denote the set of all measures $\gamma$ on $\mathbb{R}^{1+2d_\alpha}$ such that $\gamma(G^2) = \int_{\mathbb{R}^{1+2d_\alpha}}G^2(w)d\gamma(w)<\infty$. Since $\gamma(G^2)=\int_{\mathbb{R}^{1+2d_\alpha}}d\gamma(w)$, we further have $\mathscr M(\mathbb{R}^{1+2d_\alpha})=\{\gamma:\int_{\mathbb{R}^{1+2d_\alpha}}d\gamma(w)<\infty\}$. Let $\mathcal{F}$ denote a function class such that for all $f\in \mathcal{F}$, $f:\mathbb{R}^{1+2d_\alpha}\rightarrow \mathbb{R}$. Let $N(u,\mathcal{F},\|\cdot\|)$ denote the minimum number of balls $B(h,u)=\{g\in\mathcal{F}:\|g - h\|< u\}$ needed to cover $\mathcal{F}$. For any probability measure $Q$ on $\mathbb{R}^{1+2d_\alpha}$ and $f_1,f_2\in \mathcal{F}$, define $\|f_1-f_2\|_{Q,2} = (\int_{\mathbb{R}^{1+2d_\alpha}}(f_1(w) - f_2(w))^2dQ(w))^{1/2}$. Consider the function class $\mathcal{I}=\{w\mapsto 1(w\in H): H\in\mathscr{H}\}$. By Theorem 2.6.7 of van der Vaart and Wellner (1996), there exist constants $A_0$ and $v_0$ only depending on $d_\alpha$, such that for every probability measure $Q$, $N(u,\mathcal{I},\|\cdot\|_{Q,2})\leqslant (A_0/u)^{v_0}$, $0<u<1$. Since $N(u,\mathcal{G},\|\cdot\|_{Q,2})\leqslant N^2(u/2,\mathcal{I},\|\cdot\|_{Q,2})$, we have constants $A$ and $v$, only depending on $d_\alpha$, such that $N(u,\mathcal{G},\|\cdot\|_{Q,2})\leqslant (A/u)^{v}$ holds for any probability measure $Q$ and $0<u<1$.

Now, for any $\gamma\in \mathscr M(\mathbb{R}^{1+2d_\alpha})$, define probability measure $Q_\gamma$ by $dQ_\gamma= \gamma^{-1}(G^2)d\gamma$. As such, for any $g_1,g_2\in\mathcal{G}$, 
\begin{align*}
d_\gamma(g_1,g_2):=\left(\int_{\mathbb{R}^{1+2d_\alpha}}(g_1(w) - g_2(w))^2d\gamma(w)\right)^{1/2} = \gamma^{1/2}(G^2)\|g_1-g_2\|_{Q_\gamma,2}.
\end{align*}
It then follows that 
\begin{align*}
N(u\gamma^{1/2}(G^2),\mathcal{G},d_\gamma) = N(u,\mathcal{G},\|\cdot\|_{Q_\gamma,2})\leqslant (A/u)^v,\,\,\, 0<u<1,
\end{align*}
where the constants $A$ and $v$ do not depend on $\gamma$. Therefore, if $0<u<1$, 
\begin{align*}
\sup_{\gamma\in \mathscr M(\mathbb{R}^{1+2d_\alpha})}\left[\ln N(u\gamma^{1/2}(G^2),\mathcal{G},d_\gamma)\right]^{1/2}\leqslant \sqrt{v\ln A + v\ln(1/u)}.
\end{align*}
If $u> 1$, since $0\in\mathcal{G}$ and $|g|\leqslant 1$ for all $g\in\mathcal{G}$, we have
\begin{align*}
d^2_\gamma(g,0) = \int_{\mathbb{R}^{1+2d_\alpha}}g^2(w)d\gamma(w)\leqslant \gamma(G^2). 
\end{align*}
Then, we directly obtain $N(u\gamma^{1/2}(G^2),\mathcal{G},d_\gamma)=1$ for $u> 1$. As such, it follows that
\begin{align}
\int_0^\infty \sup_{\gamma\in \mathscr M(\mathbb{R}^{1+2d_\alpha})}\left[\ln N(u\gamma^{1/2}(G^2),\mathcal{G},d_\gamma)\right]^{1/2}du \leqslant \int_0^1 \sqrt{v\ln A + v\ln(1/u)}du<\infty.
\label{uniform entropy}
\end{align}
That is, $\mathcal{G}$ has uniformly integrable entropy with respect to the $\mathcal{L}_2$-norm.

Define the random variable $W_{tj} = (\varepsilon_{tj},a_{tj}',b_{tj}')'$. 
For any $g\in\mathcal{G}$, define $V_{ntj}(g) = n^{-1/2}(g(W_{tj}) - \mathbb{E}[g(W_{tj})|\mathcal{F}_{t-1}])$. Obviously, $V_{ntj}(g)$ is $\mathcal{F}_t$-measurable and $\mathbb{E}[V_{ntj}(g)|\mathcal{F}_{t-1}]=0$. Define the random measure 
\begin{align*}
\mu_{n,j}(\mathcal{A}) = \frac{1}{n}\sum_{t=1}^n\mathbb{P}\{W_{tj}\in\mathcal{A}|\mathcal{F}_{t-1}\}.
\end{align*}
For any $g_1,g_2\in\mathcal{G}$, define
\begin{align*}
d_{\mu_{n,j}}(g_1,g_2) = \left(\frac{1}{n}\sum_{t=1}^n\mathbb{E}[(g_1(W_{tj}) - g_2(W_{tj}))^2|\mathcal{F}_{t-1}]\right)^{1/2}.
\end{align*}
Define 
\begin{align*}
\sigma^2_{n,j}(g_1,g_2) := \sum_{t=1}^n \mathbb{E}[(V_{ntj}(g_1) - V_{ntj}(g_2))^2|\mathcal{F}_{t-1}]. 
\end{align*}
Obviously, $\sigma^2_{n,j}(g_1,g_2)\leqslant d_{\mu_{n,j}}^2(g_1,g_2)$ for all $g_1,g_2\in\mathcal{G}$, indicating that
\begin{align}
\mathbb{P}\left\{\sup_{g_1,g_2\in\mathcal{G}}\frac{\sigma^2_{n,j}(g_1,g_2)}{d_{\mu_{n,j}}^2(g_1,g_2)}\geqslant 2\right\}=0.
\label{conditional variance}
\end{align}

Define $\overline{V}_{ntj} = \sup_{g\in\mathcal{G}}|V_{ntj}(g)|$. Note that $\overline{V}_{ntj}\leqslant 2/\sqrt{n}$. As such, for any $\eta>0$, as long as $n\geqslant 4/\eta^2$, we get $\overline V_{ntj}\leqslant \eta$. It then follows that
\begin{align}
\frac{6}{\eta}\sum_{t=1}^n\mathbb{E}[\overline V_{ntj}^21(\overline V_{ntj}>\eta)]\rightarrow 0,\,\,\,n\rightarrow\infty.
\label{lindeberg}
\end{align}
For $g\in\mathcal{G}$, let $S_{nj}(g)=\sum_{t=1}^nV_{ntj}(g)$. Gathering \eqref{uniform entropy}, \eqref{conditional variance}, and \eqref{lindeberg}, by Theorem 1 of Bae et al. (2010), for any $\varepsilon,\eta>0$, there exists $z>0$ such that
\begin{align}
\limsup_{n\rightarrow\infty}\mathbb{P}\left\{\sup_{\substack{g_1,g_2\in\mathcal{G}\\d_{\mu_{n,j}}(g_1,g_2)<z}}|S_{nj}(g_1) - S_{nj}(g_2)|>\eta\right\}<\varepsilon.
\label{SE}
\end{align}

Let $w = (\varepsilon,a',b')'\in\mathbb{R}^{1+2d_\alpha}$ with $a,b\in\mathbb{R}^{d_\alpha}$ and $\varepsilon\in\mathbb{R}$. Define the function 
\begin{align*}
g_{n,x,h}(w) = 1(\varepsilon\leqslant x + n^{-1/2}(a+xb)'h) - 1(\varepsilon\leqslant x).
\end{align*}
Obviously, $g_{n,x,h}\in\mathcal{G}$ for all $n\in\mathbb{N}^+$, $x\in\mathbb{R}$, and $h\in\mathbb{R}^{d_\alpha}$. Then, it is straightforward to see that
\begin{align*}
\frac{1}{\sqrt{n}}\sum_{t=1}^n\left(e_{tj}(x + n^{-1/2}(a_{tj} + xb_{tj})'h)- e_{tj}(x)\right) = S_{nj}(g_{n,x,h}) - S_{nj}(0) = S_{nj}(g_{n,x,h}).
\end{align*}
Define the random variable $\xi_{n,M} = \sup_{\|h\|\leqslant M}\sup_{x\in\mathbb{R}}d_{\mu_{n,j}}(g_{n,x,h},0)$. Since $\varepsilon_{tj}$ is independent of $\mathcal{F}_{t-1}$, we have
\begin{align*}
\xi_{n,M}^2 = \sup_{\|h\|\leqslant M}\sup_{x\in\mathbb{R}}\frac{1}{n}\sum_{t=1}^n\left|F_j\left(x+\frac{1}{\sqrt{n}}(a_{tj} + xb_{tj})'h\right)- F_j(x)\right|.
\end{align*}
Note that
\begin{align*}
\left|\frac{a_{tj}'h}{\sqrt{n}}\right| + \left|\frac{b_{tj}'h}{\sqrt{n}}\right|\leqslant \frac{MG_{1,tj}}{\sqrt{n}}.
\end{align*}
In addition, by \autoref{pro3}, for any $\varepsilon
>0$, there exists some constant $M_\varepsilon>0$ such that 
\begin{align*}
\limsup_{n\rightarrow\infty}\mathbb{P}\left\{\frac{1}{n}\sum_{t=1}^nG_{1,tj}>M_\varepsilon\right\}<\varepsilon.
\end{align*}
As such, by \autoref{pro1}, it follows that 
\begin{align*}
&\mathbb{P}\{\sqrt{n}\xi_{n,M}^2>2M_{0j}MM_\varepsilon\}\\
= & \mathbb{P}\left\{\sqrt{n}\xi_{n,M}^2>2M_{0j}MM_\varepsilon,\frac{M}{\sqrt{n}}\max_{1\leqslant t\leqslant n}G_{1,tj}\geqslant 1/2\right\} + \mathbb{P}\left\{\sqrt{n}\xi_{n,M}^2>2M_{0j}MM_\varepsilon,\frac{M}{\sqrt{n}}\max_{1\leqslant t\leqslant n}G_{1,tj} <1/2\right\}\\
\leqslant &\mathbb{P}\left\{\frac{M}{\sqrt{n}}\max_{1\leqslant t\leqslant n}G_{1,tj}\geqslant 1/2\right\} + \mathbb{P}\left\{\frac{1}{n}\sum_{t=1}^nG_{1,tj}>M_\varepsilon\right\}.
\end{align*}
Thus, by \autoref{pro3}, we get 
\begin{align*}
\limsup_{n\rightarrow\infty }\mathbb{P}\{\sqrt{n}\xi_{n,M}^2>2M_{0j}MM_\varepsilon\}\leqslant \limsup_{n\rightarrow\infty}\mathbb{P}\left\{\frac{1}{n}\sum_{t=1}^nG_{1,tj}>M_\varepsilon\right\}<\varepsilon.
\end{align*}
That is, $\xi_{n,M}^2 = O_p(n^{-1/2})$, implying that $\xi_{n,M} = O_p(n^{-1/4}) = o_p(1)$.

Finally, for $\varepsilon$ and $\eta$ given in \eqref{SE}, we have
\begin{align*}
&\mathbb{P}\left\{\sup_{\|h\|\leqslant M}\sup_{x\in\mathbb{R}}|S_{nj}(g_{n,x,h})|>\eta\right\}\\
= &\mathbb{P}\left\{\sup_{\|h\|\leqslant M}\sup_{x\in\mathbb{R}}|S_{nj}(g_{n,x,h})|>\eta,\xi_{n,M}<z\right\} + \mathbb{P}\left\{\sup_{\|h\|\leqslant M}\sup_{x\in\mathbb{R}}|S_{nj}(g_{n,x,h})|>\eta,\xi_{n,M}\geqslant z\right\}\\
\leqslant &\mathbb{P}\left\{\sup_{\substack{g_1,g_2\in\mathcal{G}\\d_{\mu_{n,j}}(g_1,g_2)<z}}|S_{nj}(g_1) - S_{nj}(g_2)|>\eta\right\} + \mathbb{P}\{\xi_{n,M}\geqslant z\}.
\end{align*}
By \eqref{SE} and the fact that $\xi_{n,M} = o_p(1)$, taking limit with respect to $n$ on both sides of the above inequality yields that
\begin{align*}
\limsup_{n\rightarrow\infty}\mathbb{P}\left\{\sup_{\|h\|\leqslant M}\sup_{x\in\mathbb{R}}|S_{nj}(g_{n,x,h})|>\eta\right\}\leqslant \limsup_{n\rightarrow\infty}\mathbb{P}\left\{\sup_{\substack{g_1,g_2\in\mathcal{G}\\d_{\mu_{n,j}}(g_1,g_2)<z}}|S_{nj}(g_1) - S_{nj}(g_2)|>\eta\right\}<\varepsilon.
\end{align*}
Since $\varepsilon,\eta>0$ are arbitrary, we have
\begin{align*}
\limsup_{n\rightarrow\infty}\mathbb{P}\left\{\sup_{\|h\|\leqslant M}\sup_{x\in\mathbb{R}}|S_{nj}(g_{n,x,h})|>\eta\right\}=0,
\end{align*}
which completes the proof.
\end{proof}

\begin{proposition}
\label{pro5}
Suppose Assumptions \ref{A1} and \ref{A3} hold and $M>0$ is fixed. Then, as $n\rightarrow\infty$,
\begin{align*}
\sup_{\|h\|\leqslant M}\sup_{x\in\mathbb{R}}\left|\frac{1}{\sqrt{n}}\sum_{t=1}^n\left(F_j(x+n^{-1/2}(a_{tj} + xb_{tj})'h) - F_j(x)\right) - \frac{f_j(x)}{n}\sum_{t=1}^n(a_{tj} + xb_{tj})'h\right| = o_p(1).
\end{align*}
\end{proposition}

\begin{proof}[\textit{\textbf{Proof of \autoref{pro5}.}}]
Observe that
\begin{align*}
\left|\frac{a_{tj}'h}{\sqrt{n}}\right| + \left|\frac{b_{tj}'h}{\sqrt{n}}\right|\leqslant \frac{M}{\sqrt{n}}\max_{1\leqslant t\leqslant n}G_{1,tj},\,\,\,\|h\|\leqslant M.
\end{align*}
Thus, if $n^{-1/2}M\max_{1\leqslant t\leqslant n}G_{1,tj}\leqslant 1/2$, \autoref{pro1} implies that
\begin{align*}
&\sup_{\|h\|\leqslant M}\sup_{x\in\mathbb{R}}\left|\frac{1}{\sqrt{n}}\sum_{t=1}^n\left(F_j(x+n^{-1/2}(a_{tj} + xb_{tj})'h) - F_j(x)\right) - \frac{f_j(x)}{n}\sum_{t=1}^n(a_{tj} + xb_{tj})'h\right|\\
\leqslant & \frac{1}{\sqrt{n}}\sum_{t=1}^n\sup_{\|h\|\leqslant M}\sup_{x\in\mathbb{R}}\left|\left(F_j(x+n^{-1/2}(a_{tj} + xb_{tj})'h) - F_j(x)\right) - \frac{f_j(x)}{\sqrt{n}}(a_{tj} + xb_{tj})'h\right|\\
\leqslant & \frac{4M_{1j}M^2}{n^{3/2}}\sum_{t=1}^n(\|a_{tj}\|^2+\|b_{tj}\|^2)\\
\leqslant &\frac{4M_{1j}M^2}{n^{3/2}}\sum_{t=1}^n G_{1,tj}^2.
\end{align*}
By \autoref{pro3}, we have $n^{-3/2}\sum_{t=1}^n G_{1,tj}^2 = o_p(1)$. Thus, for any $\varepsilon>0$, 
\begin{align*}
&\mathbb{P}\left\{\sup_{\|h\|\leqslant M}\sup_{x\in\mathbb{R}}\left|\frac{1}{\sqrt{n}}\sum_{t=1}^n\left(F_j(x+n^{-1/2}(a_{tj} + xb_{tj})'h) - F_j(x)\right) - \frac{f_j(x)}{n}\sum_{t=1}^n(a_{tj} + xb_{tj})'h\right|>\varepsilon\right\}\\
\leqslant &\mathbb{P}\left\{\frac{M}{\sqrt{n}}\max_{1\leqslant t\leqslant n}G_{1,tj}>\frac{1}{2}\right\} + \mathbb{P}\left\{\frac{4M_{1j}M^2}{n^{3/2}}\sum_{t=1}^nG_{1,tj}^2>\varepsilon\right\}\\
\rightarrow& 0,\,\,\, n\rightarrow\infty,
\end{align*}
which completes the proof.
\end{proof}

For notational brevity, let $q_{ntj}(x,h) = m_{tj}(\alpha_0+n^{-1/2}h)+xs_{tj}(\alpha_0 + n^{-1/2}h)$ and $\ell_{ntj}(x,h) = x+ n^{-1/2}(a_{tj} + xb_{tj})'h$.

\begin{proposition}
\label{pro6}
Suppose Assumptions \ref{A1} and \ref{A3} hold and $M>0$ is fixed. Then, as $n\rightarrow\infty$,
\begin{align*}
\sup_{\|h\|\leqslant M}\sup_{x\in\mathbb{R}}\left|\frac{1}{\sqrt{n}}\sum_{t=1}^n\big(1(\varepsilon_{tj}\leqslant q_{ntj}(x,h) ) - 1(\varepsilon_{tj}\leqslant \ell_{ntj}(x,h) )\big)\right| = o_p(1).
\end{align*}
\end{proposition}

\begin{proof}[\textit{\textbf{Proof of \autoref{pro6}.}}]
By \autoref{pro2}, with probability 1, for $n$ large enough, we have $q_{ntj}(x,h) = \ell_{ntj}(x,h) + R_{ntj}(x,h)$, where 
\begin{align*}
|R_{ntj}(x,h)|\leqslant \frac{M^2}{2n}G_{2,tj}(1+|x|):=w_{ntj}(1+|x|).
\end{align*}
It then follows that $\ell_{ntj}(x,h) - w_{ntj}(1+|x|)\leqslant q_{ntj}(x,h)\leqslant \ell_{ntj}(x,h) + w_{ntj}(1+|x|)$. Define $J_{ntj}(x,h) = 1(\varepsilon_{tj}\leqslant \ell_{ntj}(x,h) + w_{ntj}(1+|x|)) - 1(\varepsilon_{tj}\leqslant \ell_{ntj}(x,h) - w_{ntj}(1+|x|))$. Since $|1(\varepsilon_{tj}\leqslant q_{ntj}(x,h) ) - 1(\varepsilon_{tj}\leqslant \ell_{ntj}(x,h) )|\leqslant J_{ntj}(x,h)$, it then suffices to prove
\begin{align}
\sup_{\|h\|\leqslant M}\sup_{x\in\mathbb{R}}\frac{1}{\sqrt{n}}\sum_{t=1}^nJ_{ntj}(x,h) = o_p(1).
\label{J ntj}
\end{align}

Define $p_{ntj}(x,h) = \mathbb{E}[J_{ntj}(x,h)|\mathcal{F}_{t-1}]$. We first show that 
\begin{align}
\sup_{\|h\|\leqslant M}\sup_{x\in\mathbb{R}}\frac{1}{\sqrt{n}}\sum_{t=1}^n p_{ntj}(x,h) = o_p(1).
\label{p ntj}
\end{align}
Since $\varepsilon_{tj}$ is independent of $\mathcal{F}_{t-1}$, we have
\begin{align*}
p_{ntj}(x,h) = F_j\big(\ell_{ntj}(x,h) +  w_{ntj}(1+|x|)\big) - F_j\big(\ell_{ntj}(x,h) -  w_{ntj}(1+|x|)\big).
\end{align*}
Observe that
\begin{align*}
\left|\frac{a_{tj}'h}{\sqrt{n}}\right| + \left|\frac{b_{tj}'h}{\sqrt{n}}\right| + w_{ntj}\leqslant \frac{M}{\sqrt{n}}\max_{1\leqslant t\leqslant n}G_{1,tj} + \frac{M^2}{2n}\max_{1\leqslant t\leqslant n}G_{2,tj}.
\end{align*}
Then, if $n^{-1/2}M\max_{1\leqslant t\leqslant n}G_{1,tj} + (2n)^{-1}M^2\max_{1\leqslant t\leqslant n}G_{2,tj}\leqslant 1/2$, \autoref{pro1} implies that
\begin{align*}
\sup_{\|h\|\leqslant M}\sup_{x\in\mathbb{R}}p_{ntj}(x,h)\leqslant \frac{2M_{0j}M^2}{n} G_{2,tj}.
\end{align*}
By \autoref{pro3}, we have $n^{-3/2}\sum_{t=1}^nG_{2,tj}=o_p(1)$. Thus, for any $\varepsilon>0$, we get
\begin{align*}
&\mathbb{P}\left\{\sup_{\|h\|\leqslant M}\sup_{x\in\mathbb{R}}\frac{1}{\sqrt{n}}\sum_{t=1}^n p_{ntj}(x,h) >\varepsilon\right\}\\
= & \mathbb{P}\left\{\sup_{\|h\|\leqslant M}\sup_{x\in\mathbb{R}}\frac{1}{\sqrt{n}}\sum_{t=1}^n p_{ntj}(x,h) >\varepsilon,\frac{M}{\sqrt{n}}\max_{1\leqslant t\leqslant n}G_{1,tj} + \frac{M^2}{2n}\max_{1\leqslant t\leqslant n}G_{2,tj}>\frac{1}{2}\right\}\\
&+ \mathbb{P}\left\{\sup_{\|h\|\leqslant M}\sup_{x\in\mathbb{R}}\frac{1}{\sqrt{n}}\sum_{t=1}^n p_{ntj}(x,h) >\varepsilon,\frac{M}{\sqrt{n}}\max_{1\leqslant t\leqslant n}G_{1,tj} + \frac{M^2}{2n}\max_{1\leqslant t\leqslant n}G_{2,tj}\leqslant \frac{1}{2}\right\}\\
\leqslant & \mathbb{P}\left\{\frac{M}{\sqrt{n}}\max_{1\leqslant t\leqslant n}G_{1,tj} + \frac{M^2}{2n}\max_{1\leqslant t\leqslant n}G_{2,tj}>\frac{1}{2}\right\} + \mathbb{P}\left\{\frac{2M_{0j}M^2}{n^{3/2}}\sum_{t=1}^n G_{2,tj}>\varepsilon\right\}\\
\rightarrow & 0,\,\,\, n\rightarrow\infty,
\end{align*}
which proves \eqref{p ntj}.

Next, we show that
\begin{align}
\sup_{\|h\|\leqslant M}\sup_{x\in\mathbb{R}}\left|\frac{1}{\sqrt{n}}\sum_{t=1}^n\big(J_{ntj}(x,h) - p_{ntj}(x,h)\big)\right| = o_p(1).
\label{centered J}
\end{align}
In fact, if we define $\tilde{H}_{\beta,c} = \{w\in\mathbb{R}^{2+2d_\alpha}:\beta'w\leqslant c\}$, $\tilde{\mathscr{H}} = \{\tilde{H}_{\beta,c}:\beta\in\mathbb{R}^{2+2d_\alpha}, c\in\mathbb{R}\}$, and $\tilde{\mathcal{G}} = \{w\mapsto1(w\in\tilde{H}_1) - 1(w\in\tilde{H}_2):\tilde{H}_1,\tilde{H}_2\in\tilde{\mathscr{H}}\}$, it is easy to see that an analogue of \eqref{SE} holds for $\tilde{\mathcal G}$. That is, for any $\varepsilon,\eta>0$, there exists $z>0$, such that
\begin{align}
\limsup_{n\rightarrow\infty}\mathbb{P}\left\{\sup_{\substack{g_1,g_2\in\tilde{\mathcal G}\\d_{\tilde\mu_{n,j}}(g_1,g_2)<z}}|S_{nj}(g_1) - S_{nj}(g_2)|>\eta\right\}<\varepsilon,
\label{SE tilde G}
\end{align}
where 
\begin{align*}
\tilde{\mu}_{n,j}(\mathcal{A}) = \frac{1}{n}\sum_{t=1}^n\mathbb{P}\{\tilde{W}_{tj}\in\mathcal{A}|\mathcal{F}_{t-1}\}
\end{align*}
with $\tilde{W}_{tj} = (\varepsilon_{tj},a_{tj}',b_{tj}',G_{2,tj})'$, and 
\begin{align*}
d_{\tilde\mu_{n,j}}(g_1,g_2) = \left(\frac{1}{n}\sum_{t=1}^n\mathbb{E}[(g_1(\tilde W_{tj}) - g_2(\tilde W_{tj}))^2|\mathcal{F}_{t-1}]\right)^{1/2}.
\end{align*}
Let $\tilde{w} = (\varepsilon,a',b',g)'\in\mathbb{R}^{2+2d_\alpha}$. Define the function 
\begin{align*}
\mathcal{J}_{n,x,h}(\tilde w) = 1\left(\varepsilon\leqslant x+ \frac{1}{\sqrt{n}}(a + xb)'h + \frac{M^2}{2n}g(1+|x|)\right) - 1\left(\varepsilon\leqslant x+ \frac{1}{\sqrt{n}}(a + xb)'h - \frac{M^2}{2n}g(1+|x|)\right).
\end{align*}
Obviously, $\mathcal{J}_{n,x,h},0\in \tilde{\mathcal G}$ and $\mathcal{J}_{n,x,h}(\tilde{W}_{tj}) = J_{ntj}(x,h)$. Since $\mathcal{J}_{n,x,h}^2=\mathcal{J}_{n,x,h}$ for $g\geqslant 0$, we have
\begin{align*}
d_{\tilde\mu_{n,j}}^2(\mathcal{J}_{n,x,h},0) = \frac{1}{n}\sum_{t=1}^n\Big(F_j\big(\ell_{ntj}(x,h) +  w_{ntj}(1+|x|)\big) - F_j\big(\ell_{ntj}(x,h) -  w_{ntj}(1+|x|)\big)\Big) = \frac{1}{n}\sum_{t=1}^np_{ntj}(x,h).
\end{align*}
Thus, by \eqref{p ntj}, we have $\tilde{\xi}_{n,M} = \sup_{\|h\|\leqslant M}\sup_{x\in\mathbb{R}}d_{\tilde{\mu}_{n,j}}(\mathcal{J}_{n,x,h},0) = o_p(1)$.
Finally, for $\varepsilon$ and $\eta$ given in \eqref{SE tilde G}, we have
\begin{align*}
&\mathbb{P}\left\{\sup_{\|h\|\leqslant M}\sup_{x\in\mathbb{R}}|S_{nj}(\mathcal{J}_{n,x,h})|>\eta\right\}\\
= &\mathbb{P}\left\{\sup_{\|h\|\leqslant M}\sup_{x\in\mathbb{R}}|S_{nj}(\mathcal{J}_{n,x,h})|>\eta,\tilde\xi_{n,M}<z\right\} + \mathbb{P}\left\{\sup_{\|h\|\leqslant M}\sup_{x\in\mathbb{R}}|S_{nj}(\mathcal{J}_{n,x,h})|>\eta,\tilde\xi_{n,M}\geqslant z\right\}\\
\leqslant &\mathbb{P}\left\{\sup_{\substack{g_1,g_2\in\tilde{\mathcal{G}}\\d_{\tilde\mu_{n,j}}(g_1,g_2)<z}}|S_{nj}(g_1) - S_{nj}(g_2)|>\eta\right\} + \mathbb{P}\{\tilde\xi_{n,M}\geqslant z\}.
\end{align*}
By \eqref{SE tilde G} and the fact that $\tilde\xi_{n,M} = o_p(1)$, taking limit with respect to $n$ on both sides of the above inequality yields that
\begin{align*}
\limsup_{n\rightarrow\infty}\mathbb{P}\left\{\sup_{\|h\|\leqslant M}\sup_{x\in\mathbb{R}}|S_{nj}(\mathcal{J}_{n,x,h})|>\eta\right\}\leqslant \limsup_{n\rightarrow\infty}\mathbb{P}\left\{\sup_{\substack{g_1,g_2\in\tilde{\mathcal{G}}\\d_{\tilde\mu_{n,j}}(g_1,g_2)<z}}|S_{nj}(g_1) - S_{nj}(g_2)|>\eta\right\}<\varepsilon.
\end{align*}
Since $\varepsilon,\eta>0$ are arbitrary, we have
\begin{align*}
\limsup_{n\rightarrow\infty}\mathbb{P}\left\{\sup_{\|h\|\leqslant M}\sup_{x\in\mathbb{R}}|S_{nj}(\mathcal{J}_{n,x,h})|>\eta\right\}=0,
\end{align*}
which yields \eqref{centered J}. Then, \eqref{J ntj} is proved by gathering \eqref{p ntj} and \eqref{centered J}.
\end{proof}

\begin{proposition}
\label{pro7}
Suppose Assumptions \ref{A1} and \ref{A3} hold and $M>0$ is fixed. Then, as $n\rightarrow\infty$,
\begin{align*}
\sup_{\|h\|\leqslant M}\sup_{x\in\mathbb{R}}\left|\frac{1}{\sqrt{n}}\sum_{t=1}^n\big(1(\varepsilon_{tj}\leqslant q_{ntj}(x,h)) - 1(\varepsilon_{tj}\leqslant x)\big) - \frac{f_j(x)}{n}\sum_{t=1}^n(a_{tj} + xb_{tj})'h\right| = o_p(1).
\end{align*}
\end{proposition}

\begin{proof}[\textit{\textbf{Proof of \autoref{pro7}.}}]
Note that
\begin{align*}
&\frac{1}{\sqrt{n}}\sum_{t=1}^n\big(1(\varepsilon_{tj}\leqslant q_{ntj}(x,h)) - 1(\varepsilon_{tj}\leqslant x)\big) - \frac{f_j(x)}{n}\sum_{t=1}^n(a_{tj} + xb_{tj})'h\\
=& \frac{1}{\sqrt{n}}\sum_{t=1}^n\big(1(\varepsilon_{tj}\leqslant q_{ntj}(x,h) ) - 1(\varepsilon_{tj}\leqslant \ell_{ntj}(x,h) )\big) + \frac{1}{\sqrt{n}}\sum_{t=1}^n\left(e_{tj}(x + n^{-1/2}(a_{tj} + xb_{tj})'h)- e_{tj}(x)\right)\\
&+ \frac{1}{\sqrt{n}}\sum_{t=1}^n\left(F_j(x+n^{-1/2}(a_{tj} + xb_{tj})'h) - F_j(x)\right) - \frac{f_j(x)}{n}\sum_{t=1}^n(a_{tj} + xb_{tj})'h.
\end{align*}
Then, by Propositions \ref{pro4}--\ref{pro6}, the desired result directly follows.
\end{proof}

With the above preparations, we are ready to proceed with the proof of \autoref{Lemma1}. Let $\hat{h}_n = \sqrt{n}(\hat{\alpha}_n - \alpha_0)$. By \autoref{A2}, we have $\hat h_n = O_p(1)$. That is, for any $\varepsilon>0$, there exists a constant $M>0$, such that 
\begin{align}
\limsup_{n\rightarrow \infty} \mathbb{P}\{\|\hat h_n\|>M\}<\varepsilon.
\label{hat h n}
\end{align}

Next, since $1(\hat{\varepsilon}_{tj}\leqslant x) = 1(\varepsilon_{tj}\leqslant q_{ntj}(x,\hat{h}_n))$, we get
\begin{align*}
&\sqrt{n}\left(\hat{F}_{nj}(x) - F_{nj}(x)\right) - \frac{f_j(x)}{n}\sum_{t=1}^n(a_{tj} + xb_{tj})'\sqrt{n}(\hat{\alpha}_n - \alpha_0) \\
=& \frac{1}{\sqrt{n}}\sum_{t=1}^n\big(1(\varepsilon_{tj}\leqslant q_{ntj}(x,\hat{h}_n)) - 1(\varepsilon_{tj}\leqslant x)\big) - \frac{f_j(x)}{n}\sum_{t=1}^n(a_{tj} + xb_{tj})'\hat{h}_n.
\end{align*}
Then, for any $\eta>0$, we have
\begin{align*}
&\mathbb{P}\left\{\sup_{x\in\mathbb{R}}\left|\frac{1}{\sqrt{n}}\sum_{t=1}^n\big(1(\varepsilon_{tj}\leqslant q_{ntj}(x,\hat{h}_n)) - 1(\varepsilon_{tj}\leqslant x)\big) - \frac{f_j(x)}{n}\sum_{t=1}^n(a_{tj} + xb_{tj})'\hat{h}_n\right|>\eta\right\}\\
 = &\mathbb{P}\left\{\sup_{x\in\mathbb{R}}\left|\frac{1}{\sqrt{n}}\sum_{t=1}^n\big(1(\varepsilon_{tj}\leqslant q_{ntj}(x,\hat{h}_n)) - 1(\varepsilon_{tj}\leqslant x)\big) - \frac{f_j(x)}{n}\sum_{t=1}^n(a_{tj} + xb_{tj})'\hat{h}_n\right|>\eta, \|\hat{h}_n\|\leqslant M\right\}\\
 &+ \mathbb{P}\left\{\sup_{x\in\mathbb{R}}\left|\frac{1}{\sqrt{n}}\sum_{t=1}^n\big(1(\varepsilon_{tj}\leqslant q_{ntj}(x,\hat{h}_n)) - 1(\varepsilon_{tj}\leqslant x)\big) - \frac{f_j(x)}{n}\sum_{t=1}^n(a_{tj} + xb_{tj})'\hat{h}_n\right|>\eta, \|\hat{h}_n\|> M\right\}\\
 \leqslant & \mathbb{P}\left\{\sup_{\|h\|\leqslant M}\sup_{x\in\mathbb{R}}\left|\frac{1}{\sqrt{n}}\sum_{t=1}^n\big(1(\varepsilon_{tj}\leqslant q_{ntj}(x,h)) - 1(\varepsilon_{tj}\leqslant x)\big) - \frac{f_j(x)}{n}\sum_{t=1}^n(a_{tj} + xb_{tj})'h\right|>\eta\right\} \\
 &+ \mathbb{P}\{\|\hat h_n\|>M\}.
\end{align*}
By \autoref{pro7} and \eqref{hat h n}, it follows that
\begin{align*}
&\limsup_{n\rightarrow\infty}\mathbb{P}\left\{\sup_{x\in\mathbb{R}}\left|\frac{1}{\sqrt{n}}\sum_{t=1}^n\big(1(\varepsilon_{tj}\leqslant q_{ntj}(x,\hat{h}_n)) - 1(\varepsilon_{tj}\leqslant x)\big) - \frac{f_j(x)}{n}\sum_{t=1}^n(a_{tj} + xb_{tj})'\hat{h}_n\right|>\eta\right\}\\
\leqslant & \limsup_{n\rightarrow\infty}\mathbb{P}\left\{\sup_{\|h\|\leqslant M}\sup_{x\in\mathbb{R}}\left|\frac{1}{\sqrt{n}}\sum_{t=1}^n\big(1(\varepsilon_{tj}\leqslant q_{ntj}(x,h)) - 1(\varepsilon_{tj}\leqslant x)\big) - \frac{f_j(x)}{n}\sum_{t=1}^n(a_{tj} + xb_{tj})'h\right|>\eta\right\} \\
&+ \limsup_{n\rightarrow\infty} \mathbb{P}\{\|\hat h_n\|>M\}\\
<& \varepsilon.
\end{align*}
Since $\varepsilon>0$ can be arbitrarily small, we have 
\begin{align*}
\sup_{x\in\mathbb{R}}\left|\frac{1}{\sqrt{n}}\sum_{t=1}^n\big(1(\varepsilon_{tj}\leqslant q_{ntj}(x,\hat{h}_n)) - 1(\varepsilon_{tj}\leqslant x)\big) - \frac{f_j(x)}{n}\sum_{t=1}^n(a_{tj} + xb_{tj})'\hat{h}_n\right| = o_p(1), 
\end{align*}
implying that
\begin{align}
\sqrt{n}\left(\hat{F}_{nj}(x) - F_{nj}(x)\right) = \frac{f_j(x)}{n}\sum_{t=1}^n(a_{tj} + xb_{tj})'\sqrt{n}(\hat{\alpha}_n - \alpha_0) + o_p(1),
\label{Lemma1 eq1}
\end{align}
where the $o_p(1)$ is uniform in $x\in\mathbb{R}$.

Finally, note that
\begin{align}
\begin{split}
\frac{f_j(x)}{n}\sum_{t=1}^n(a_{tj} + xb_{tj})'\sqrt{n}(\hat{\alpha}_n - \alpha_0) = &f_j(x)(a_j + xb_j)'\sqrt{n}(\hat{\alpha}_n - \alpha_0) \\
&+ f_j(x)\big[(\bar{a}_{nj} - a_j) + x(\bar{b}_{nj} - b_j)\big]'\sqrt{n}(\hat{\alpha}_n - \alpha_0),
\end{split}
\label{Lemma1 eq2}
\end{align}
where 
\begin{align}
\begin{split}
&\sup_{x\in\mathbb{R}}\left|f_j(x)\big[(\bar{a}_{nj} - a_j) + x(\bar{b}_{nj} - b_j)\big]'\sqrt{n}(\hat{\alpha}_n - \alpha_0)\right|\\
\leqslant &M_{0j}\big(\|\bar{a}_{nj} - a_j\| + \|\bar{b}_{nj} - b_j\|\big)\|\hat{h}_n\|\\
=&o_p(1)
\end{split}
\label{Lemma1 eq3}
\end{align}
by Assumptions \ref{A2}--\ref{A3}. Therefore, gathering \eqref{Lemma1 eq1}, \eqref{Lemma1 eq2}, and \eqref{Lemma1 eq3} completes the proof of \autoref{Lemma1}.
\end{proof}

With \autoref{Lemma1} in hand, we immediately obtain the following corollary.

\begin{corollary}
\label{coro1}
Suppose Assumptions \ref{A1}--\ref{A3} hold. Then, as $n\rightarrow\infty$,
\begin{align*}
\sup_{x_j\in\mathbb{R}}\left|\hat{F}_{nj}(x_j) - F_j(x_j)\right| = O_p(n^{-1/2}).
\end{align*}
\end{corollary}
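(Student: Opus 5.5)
The plan is to split the uniform deviation via the triangle inequality into a parameter-estimation part and an infeasible empirical-process part:
\begin{align*}
\sup_{x_j\in\mathbb{R}}\left|\hat{F}_{nj}(x_j) - F_j(x_j)\right|\leqslant \sup_{x_j\in\mathbb{R}}\left|\hat{F}_{nj}(x_j) - F_{nj}(x_j)\right| + \sup_{x_j\in\mathbb{R}}\left|F_{nj}(x_j) - F_j(x_j)\right|.
\end{align*}
Each term will then be shown to be $O_p(n^{-1/2})$ separately.

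For the first term, I will invoke \autoref{Lemma1}. It gives, uniformly in $x_j$,
\begin{align*}
\sqrt{n}\left(\hat{F}_{nj}(x_j) - F_{nj}(x_j)\right) = f_j(x_j)(a_j + x_jb_j)'\sqrt{n}(\hat{\alpha}_n - \alpha_0) + o_p(1).
\end{align*}
The leading term is bounded uniformly in $x_j$ because $|f_j(x_j)|(1+|x_j|)\leqslant M_{0j}$ by \autoref{A3}(iv). This yields
\begin{align*}
\sup_{x_j}\big|f_j(x_j)(a_j + x_jb_j)'\sqrt{n}(\hat{\alpha}_n - \alpha_0)\big|\leqslant M_{0j}(\|a_j\|+\|b_j\|)\|\sqrt{n}(\hat{\alpha}_n-\alpha_0)\|.
\end{align*}
The right-hand side is $O_p(1)$ by \autoref{A2}, since $a_j,b_j$ are the finite limits in \autoref{A3}(v). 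Hence $\sup_{x_j}|\hat F_{nj}-F_{nj}|=O_p(n^{-1/2})$.

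For the second term, $\{\varepsilon_{tj}\}_{t=1}^n$ is i.i.d.\ with continuous distribution function $F_j$ by \autoref{A3}(i),(iv). The Dvoretzky--Kiefer--Wolfowitz inequality, $\mathbb{P}\{\sqrt{n}\sup_x|F_{nj}(x)-F_j(x)|>\lambda\}\leqslant 2e^{-2\lambda^2}$, then gives $\sup_{x_j}|F_{nj}(x_j)-F_j(x_j)|=O_p(n^{-1/2})$. Equivalently, one could cite the Donsker theorem for the classical empirical process together with the continuous mapping theorem. Combining the two bounds completes the argument.

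I do not expect a genuine obstacle. The only point needing care is the uniformity in $x_j$ of the leading term of \autoref{Lemma1}: the factor $x_jb_j$ is unbounded, and it is controlled precisely by the tail condition $\sup_x(1+|x|)f_j(x)<\infty$ in \autoref{A3}(iv).
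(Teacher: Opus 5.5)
Your proposal is correct and follows essentially the same route as the paper: Lemma~1 with the bound $M_{0j}(\|a_j\|+\|b_j\|)\|\sqrt{n}(\hat\alpha_n-\alpha_0)\|$ handles the estimation part, and the Dvoretzky--Kiefer--Wolfowitz inequality (the paper cites Massart's tight version) gives $\sup_{x_j}|F_{nj}(x_j)-F_j(x_j)|=O_p(n^{-1/2})$. The only cosmetic difference is that you apply the triangle inequality first, whereas the paper writes the combined uniform expansion of $\sqrt{n}(\hat F_{nj}-F_j)$ and then bounds its three pieces.
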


\begin{proof}[\textit{\textbf{Proof of \autoref{coro1}.}}]
By \autoref{Lemma1}, we obtain
\begin{align}
\sqrt{n}\left(\hat{F}_{nj}(x_j) - F_{j}(x_j)\right) = \sqrt{n}\left({F}_{nj}(x_j) - F_{j}(x_j)\right) + f_j(x_j)(a_j + x_jb_j)'\sqrt{n}(\hat{\alpha}_n - \alpha_0) + o_p(1),
\label{coro1 1}
\end{align}
where the $o_p(1)$ is uniform in $x_j\in\mathbb{R}$. By Corollary 1 of  Massart (1990), we have
\begin{align}
\sup_{x_j\in\mathbb{R}}\left|\sqrt{n}\left({F}_{nj}(x_j) - F_{j}(x_j)\right)\right| = O_p(1). 
\label{coro1 2}
\end{align}
Then, by Assumptions \ref{A2}--\ref{A3}, we get
\begin{align}
\sup_{x_j\in\mathbb{R}}\left|f_j(x_j)(a_j + x_jb_j)'\sqrt{n}(\hat{\alpha}_n - \alpha_0)\right|\leqslant M_{0j}(\|a_j\| + \|b_j\|)\|\sqrt{n}(\hat{\alpha}_n - \alpha_0)\| = O_p(1).
\label{coro1 3}
\end{align}
Thus, gathering \eqref{coro1 1}, \eqref{coro1 2}, and \eqref{coro1 3} completes the proof.
\end{proof}

\begin{lemma}
\label{Lemma2}
Suppose Assumptions \ref{A1}--\ref{A3} hold. Then, as $n\rightarrow\infty$,
\begin{align*}
\frac{1}{n}\sum_{t=1}^n(\hat{\varepsilon}_{tj} - \varepsilon_{tj})^2 = O_p(n^{-1}).
\end{align*}
\end{lemma}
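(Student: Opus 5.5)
We first write the residual error in terms of the functions $m_{tj}$ and $s_{tj}$ from \autoref{A3}. Since $Y_{tj}=\mu_{tj}(\alpha_0)+V_{tj}^{1/2}(\alpha_0)\varepsilon_{tj}$,
\begin{align*}
\hat{\varepsilon}_{tj}
=\frac{V_{tj}^{1/2}(\alpha_0)\varepsilon_{tj}-\big(\mu_{tj}(\hat\alpha_n)-\mu_{tj}(\alpha_0)\big)}{V_{tj}^{1/2}(\hat\alpha_n)}
=\frac{\varepsilon_{tj}-m_{tj}(\hat\alpha_n)}{s_{tj}(\hat\alpha_n)}.
\end{align*}
Hence
\begin{align*}
\hat{\varepsilon}_{tj}-\varepsilon_{tj}=\frac{\varepsilon_{tj}\big(1-s_{tj}(\hat\alpha_n)\big)-m_{tj}(\hat\alpha_n)}{s_{tj}(\hat\alpha_n)}.
\end{align*}
We work on the event $\{\hat\alpha_n\in\mathcal{A}_0\}$, whose probability tends to one by \autoref{A2}. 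Because $\mathcal{A}_0$ may be taken convex (shrink it to a ball if necessary), the mean value theorem together with $m_{tj}(\alpha_0)=0$ and $s_{tj}(\alpha_0)=1$ gives
\begin{align*}
|m_{tj}(\hat\alpha_n)|+|s_{tj}(\hat\alpha_n)-1|\leqslant G_{1,tj}\|\hat\alpha_n-\alpha_0\|.
\end{align*}

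The main obstacle is the denominator $s_{tj}(\hat\alpha_n)$, which must be bounded away from zero uniformly in $t$. By the display above,
\begin{align*}
\max_{t}|s_{tj}(\hat\alpha_n)-1|\leqslant n^{-1/2}\max_t G_{1,tj}\cdot\sqrt{n}\|\hat\alpha_n-\alpha_0\|=o_p(1)\cdot O_p(1)=o_p(1),
\end{align*}
using \autoref{pro3} and \autoref{A2}. So with probability tending to one we have $\min_t s_{tj}(\hat\alpha_n)\geqslant 1/2$. On that event,
\begin{align*}
(\hat{\varepsilon}_{tj}-\varepsilon_{tj})^2\leqslant 8(1+\varepsilon_{tj}^2)G_{1,tj}^2\|\hat\alpha_n-\alpha_0\|^2,
\end{align*}
and therefore
\begin{align*}
\frac{1}{n}\sum_{t=1}^n(\hat{\varepsilon}_{tj}-\varepsilon_{tj})^2\leqslant 8\|\hat\alpha_n-\alpha_0\|^2\cdot\frac1n\sum_{t=1}^n(1+\varepsilon_{tj}^2)G_{1,tj}^2 .
\end{align*}

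It remains to show $n^{-1}\sum_t\varepsilon_{tj}^2G_{1,tj}^2=O_p(1)$; the term without $\varepsilon_{tj}^2$ is $O_p(1)$ by \autoref{pro3}. Here we use the structure of the model. $G_{1,tj}$ is a supremum over $\alpha$ of derivatives of $\mathcal{F}_{t-1}$-measurable quantities, so it is $\mathcal{F}_{t-1}$-measurable, while $\varepsilon_{tj}$ is independent of $\mathcal{F}_{t-1}$ with $\mathbb{E}\varepsilon_{tj}^2=1$. Hence
\begin{align*}
\mathbb{E}\big(\varepsilon_{tj}^2G_{1,tj}^2\big)=\mathbb{E}\big(G_{1,tj}^2\big)\leqslant C
\end{align*}
uniformly in $t,n$, by \autoref{A3}(iii). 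Markov's inequality then gives the claim. Combined with $\|\hat\alpha_n-\alpha_0\|^2=O_p(n^{-1})$, the sum is $O_p(n^{-1})$.
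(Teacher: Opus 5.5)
Your proposal is correct and follows the paper's proof essentially step by step. It uses the same representation of $\hat{\varepsilon}_{tj}-\varepsilon_{tj}$ through $m_{tj}(\hat\alpha_n)$ and $s_{tj}(\hat\alpha_n)$, the same mean-value bound by $G_{1,tj}\|\hat\alpha_n-\alpha_0\|$ on a ball inside $\mathcal{A}_0$, \autoref{pro3} to get $\min_t s_{tj}(\hat\alpha_n)\geqslant 1/2$ with probability tending to one, the same constant $8$, and the same conditioning argument $\mathbb{E}(\varepsilon_{tj}^2G_{1,tj}^2)=\mathbb{E}(G_{1,tj}^2)$ followed by Markov's inequality.
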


\begin{proof}[\textit{\textbf{Proof of \autoref{Lemma2}.}}]
By direct calculations, we have
\begin{align*}
\varepsilon_{tj} - \hat{\varepsilon}_{tj} = \frac{m_{tj}(\hat{\alpha}_n) + \varepsilon_{tj}(s_{tj}(\hat{\alpha}_n) - 1)}{s_{tj}(\hat{\alpha}_n)}.
\end{align*}

We first establish that $\max_{1\leqslant t\leqslant n}|s_{tj}(\hat{\alpha}_n) - 1| = o_p(1)$. Consider $\mathcal{C}_n = \{\|\hat\alpha_n - \alpha_0\|\leqslant \rho\}$, where $\rho>0$ is such that $\{\alpha:\|\alpha - \alpha_0\|\leqslant \rho\}\subset \mathcal{A}_0$. By \autoref{A2}, we have $\mathbb{P}\left\{\mathcal{C}_n^c\right\}\rightarrow0$, $n\rightarrow\infty$. In the event $\mathcal{C}_n$, we have $\alpha_0 + r(\hat{\alpha}_n - \alpha_0)\in\mathcal{A}_0$ for all $r\in[0,1]$. Then, we obtain
\begin{align}
\left|s_{tj}(\hat{\alpha}_n) - 1\right|\leqslant \left|\int_0^1\frac{\partial s_{tj}}{\partial \alpha'}(\alpha_0 + r(\hat{\alpha}_n - \alpha_0))(\hat{\alpha}_n - \alpha_0)dr\right|\leqslant G_{1,tj}\left\|\hat{\alpha}_n - \alpha_0\right\|.
\label{Lemma2 1}
\end{align}
Thus, it follows that
\begin{align*}
\max_{1\leqslant t\leqslant n}\left|s_{tj}(\hat{\alpha}_n) - 1\right|\leqslant \frac{1}{\sqrt{n}}\max_{1\leqslant t\leqslant n} G_{1,tj}\left\|\sqrt{n}(\hat{\alpha}_n - \alpha_0)\right\|.
\end{align*}
By \autoref{pro3} and \autoref{A2}, we have $n^{-1/2}\max_{1\leqslant t\leqslant n} G_{1,tj}\left\|\sqrt{n}(\hat{\alpha}_n - \alpha_0)\right\|=o_p(1)$. Thus, for any $\varepsilon>0$, we have
\begin{align*}
\mathbb{P}\left\{\max_{1\leqslant t\leqslant n}\left|s_{tj}(\hat{\alpha}_n) - 1\right|>\varepsilon\right\} = &\mathbb{P}\left\{\max_{1\leqslant t\leqslant n}\left|s_{tj}(\hat{\alpha}_n) - 1\right|>\varepsilon,\mathcal{C}_n\right\} \\
&+ \mathbb{P}\left\{\max_{1\leqslant t\leqslant n}\left|s_{tj}(\hat{\alpha}_n) - 1\right|>\varepsilon,\mathcal{C}_n^c\right\}\\
\leqslant & \mathbb{P}\left\{\frac{1}{\sqrt{n}}\max_{1\leqslant t\leqslant n} G_{1,tj}\left\|\sqrt{n}(\hat{\alpha}_n - \alpha_0)\right\|>\varepsilon\right\} + \mathbb{P}\left\{\mathcal{C}_n^c\right\}\\
\rightarrow &0,\,\,\, n\rightarrow\infty.
\end{align*}
Then, we immediately obtain $\mathbb{P}\left\{\min_{1\leqslant t\leqslant n}s_{tj}(\hat{\alpha}_n)<1/2\right\}\rightarrow 0$, $n\rightarrow\infty$. 

Consider the event $\mathcal{E}_n = \{\min_{1\leqslant t\leqslant n}s_{tj}(\hat{\alpha}_n)<1/2\}\cup \mathcal{C}_n^c$. The above analysis implies that $\mathbb{P}\{\mathcal{E}_n\}\rightarrow 0$, $n\rightarrow\infty$. Note that $m_{tj}(\alpha_0) = 0$, $s_{tj}(\alpha_0) = 1$. Then, if $\mathcal{E}^c_n$ occurs,  following \eqref{Lemma2 1}, we have $|m_{tj}(\hat{\alpha}_n) + \varepsilon_{tj}(s_{tj}(\hat{\alpha}_n) - 1)|\leqslant G_{1,tj}(1+|\varepsilon_{tj}|)\left\|\hat{\alpha}_n - \alpha_0\right\|$. In addition, 
\begin{align*}
\frac{1}{n}\sum_{t=1}^n(\hat{\varepsilon}_{tj} - \varepsilon_{tj})^2\leqslant \frac{8\left\|\hat{\alpha}_n - \alpha_0\right\|^2}{n}\sum_{t=1}^nG_{1,tj}^2(1+\varepsilon_{tj}^2).
\end{align*}
Since $G_{1,tj}$ is $\mathcal{F}_{t-1}$ measurable and $\varepsilon_{tj}$ is independent of $\mathcal{F}_{t-1}$, \autoref{A3} implies that
\begin{align*}
\limsup_{n\rightarrow\infty}\mathbb{E}\left(\frac{1}{n}\sum_{t=1}^nG_{1,tj}^2(1+\varepsilon_{tj}^2)\right) = \limsup_{n\rightarrow\infty}\frac{2}{n}\sum_{t=1}^n\mathbb{E}(G_{1,tj}^2)<\infty.
\end{align*}
Thus, 
\begin{align*}
 \frac{8\left\|\hat{\alpha}_n - \alpha_0\right\|^2}{n}\sum_{t=1}^nG_{1,tj}^2(1+\varepsilon_{tj}^2) = O_p(n^{-1}).
\end{align*}
That is, for any $\varepsilon>0$, there exists some constant $M_\varepsilon$ such that
\begin{align*}
\limsup_{n\rightarrow \infty}\mathbb{P}\left\{8\left\|\hat{\alpha}_n - \alpha_0\right\|^2\sum_{t=1}^nG_{1,tj}^2(1+\varepsilon_{tj}^2)>M_\varepsilon\right\}<\varepsilon.
\end{align*}
It then follows that
\begin{align*}
\mathbb{P}\left\{\sum_{t=1}^n(\hat{\varepsilon}_{tj} - \varepsilon_{tj})^2>M_\varepsilon\right\} =& \mathbb{P}\left\{\sum_{t=1}^n(\hat{\varepsilon}_{tj} - \varepsilon_{tj})^2>M_\varepsilon,\mathcal{E}_n\right\} + \mathbb{P}\left\{\sum_{t=1}^n(\hat{\varepsilon}_{tj} - \varepsilon_{tj})^2>M_\varepsilon,\mathcal{E}_n^c\right\}\\
\leqslant & \mathbb{P}\{\mathcal{E}_n\} + \mathbb{P}\left\{ 8\left\|\hat{\alpha}_n - \alpha_0\right\|^2\sum_{t=1}^nG_{1,tj}^2(1+\varepsilon_{tj}^2)>M_\varepsilon\right\}.
\end{align*}
By letting $n$ diverge to infinity on both sides of the inequality, we get
\begin{align*}
\limsup_{n\rightarrow\infty}\mathbb{P}\left\{\sum_{t=1}^n(\hat{\varepsilon}_{tj} - \varepsilon_{tj})^2>M_\varepsilon\right\}\leqslant \limsup_{n\rightarrow\infty}\mathbb{P}\left\{ 8\left\|\hat{\alpha}_n - \alpha_0\right\|^2\sum_{t=1}^nG_{1,tj}^2(1+\varepsilon_{tj}^2)>M_\varepsilon\right\}<\varepsilon,
\end{align*}
which completes the proof.
\end{proof}

\begin{lemma}
\label{Lemma3}
Suppose Assumptions \ref{A1}--\ref{A3} hold. Then, as $n\rightarrow\infty$,
\begin{align*}
\max_{1\leqslant t\leqslant n}\left|\hat{\varepsilon}_{tj} - \varepsilon_{tj}\right| = o_p(1).
\end{align*}
\end{lemma}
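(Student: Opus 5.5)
Our approach reuses the algebraic identity from the proof of \autoref{Lemma2},
\begin{align*}
\varepsilon_{tj} - \hat{\varepsilon}_{tj} = \frac{m_{tj}(\hat{\alpha}_n) + \varepsilon_{tj}\big(s_{tj}(\hat{\alpha}_n) - 1\big)}{s_{tj}(\hat{\alpha}_n)},
\end{align*}
and works on the same high-probability event $\mathcal{E}_n^c$. Recall that $\mathcal{E}_n^c$ is the event that $\|\hat\alpha_n-\alpha_0\|\leqslant\rho$ and $\min_{1\leqslant t\leqslant n}s_{tj}(\hat\alpha_n)\geqslant 1/2$. That proof already gives $\mathbb{P}\{\mathcal{E}_n\}\rightarrow0$. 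On $\mathcal{E}_n^c$, the mean-value bound \eqref{Lemma2 1} and its analogue for $m_{tj}$ (using $m_{tj}(\alpha_0)=0$, $s_{tj}(\alpha_0)=1$, and the convexity of the ball inside $\mathcal{A}_0$) yield
\begin{align*}
\max_{1\leqslant t\leqslant n}|\hat{\varepsilon}_{tj} - \varepsilon_{tj}|\leqslant 2\left\|\sqrt{n}(\hat{\alpha}_n-\alpha_0)\right\|\cdot\frac{1}{\sqrt{n}}\max_{1\leqslant t\leqslant n}G_{1,tj}(1+|\varepsilon_{tj}|).
\end{align*}
By \autoref{A2}, the first factor is $O_p(1)$. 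It therefore suffices to show that $n^{-1/2}\max_{1\leqslant t\leqslant n}G_{1,tj}(1+|\varepsilon_{tj}|)=o_p(1)$.

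This maximal inequality is the only genuine step. \autoref{pro3} handles $\max_t G_{1,tj}$ alone, but here the product with $|\varepsilon_{tj}|$ appears. We mimic the proof of \autoref{pro3}, applying a union bound and Markov's inequality with exponent $2$ rather than $2+\delta$. Since $G_{1,tj}$ is $\mathcal{F}_{t-1}$-measurable and $\varepsilon_{tj}$ is independent of $\mathcal{F}_{t-1}$ with $\mathbb{E}(\varepsilon_{tj}^2)=1$, we have $\mathbb{E}[G_{1,tj}^2(1+|\varepsilon_{tj}|)^2]\leqslant 4\,\mathbb{E}(G_{1,tj}^2)$. However, a second-moment Markov bound only gives $\sum_t\mathbb{P}\{\cdot>\sqrt{n}\eta\}\leqslant Cn/(n\eta^2)$, which is $O(1)$ rather than $o(1)$. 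This is the main obstacle. \autoref{A3} provides $2+\delta$ moments for $G_{1,tj}$ but only second moments for $\varepsilon_{tj}$.

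To overcome this, we split the maximum: $\max_t G_{1,tj}(1+|\varepsilon_{tj}|)\leqslant \max_t G_{1,tj}\cdot(1+\max_t|\varepsilon_{tj}|)$. For the i.i.d.\ sequence with finite variance, the standard argument gives $\max_t|\varepsilon_{tj}|=o_p(\sqrt{n})$, via $n\mathbb{P}\{|\varepsilon_{1j}|>\eta\sqrt{n}\}\leqslant \eta^{-2}\mathbb{E}[\varepsilon_{1j}^21(|\varepsilon_{1j}|>\eta\sqrt{n})]\rightarrow0$. Also, $\max_t G_{1,tj}=O_p(n^{1/(2+\delta)})$ by the union bound with $2+\delta$ moments. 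The product is then $o_p(n^{1/2+1/(2+\delta)})$, which is too crude. Instead, we return to the union bound but use a truncation on $\varepsilon$. For fixed $\eta>0$ and a large constant $B$, decompose the event $\{G_{1,tj}(1+|\varepsilon_{tj}|)>\sqrt{n}\eta\}$ into the cases $|\varepsilon_{tj}|\leqslant B$ and $|\varepsilon_{tj}|>B$.

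In the first case, $G_{1,tj}>\sqrt{n}\eta/(1+B)$, and the argument of \autoref{pro3} makes the summed probability tend to zero. In the second case, conditioning on $\mathcal{F}_{t-1}$ and using independence, $\sum_t\mathbb{P}\{G_{1,tj}(1+|\varepsilon_{tj}|)>\sqrt{n}\eta,\ |\varepsilon_{tj}|>B\}$ is at most $\sum_t\mathbb{E}(G_{1,tj}^2)\,\mathbb{E}[(1+|\varepsilon_{1j}|)^21(|\varepsilon_{1j}|>B)]/(n\eta^2)$. This is bounded by $C\eta^{-2}\mathbb{E}[(1+|\varepsilon_{1j}|)^21(|\varepsilon_{1j}|>B)]$ uniformly in $n$, and it can be made arbitrarily small by choosing $B$ large, by dominated convergence. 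Letting $n\rightarrow\infty$ and then $B\rightarrow\infty$ gives the claim. Combining this with $\mathbb{P}\{\mathcal{E}_n\}\rightarrow0$, exactly as in the closing step of \autoref{Lemma2}, completes the proof.
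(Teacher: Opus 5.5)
Your proof is correct. The reduction, on $\mathcal{E}_n^c$, to showing $n^{-1/2}\max_{1\leqslant t\leqslant n}G_{1,tj}(1+|\varepsilon_{tj}|)=o_p(1)$ is exactly the paper's. Where you differ is in how you prove this maximal inequality: you truncate a different variable.

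The paper truncates $G_{1,tj}$ at the growing level $n^{1/4}$. It controls the innovation through the monotone tail envelope $\tilde r_j(y)=\sup_{z\geqslant y}z^2\mathbb{P}\{1+|\varepsilon_{tj}|>z\}\downarrow0$. This gives the single explicit bound $\varepsilon^{-2}\big(\tilde r_j(n^{1/4}\varepsilon)C_1+4C_2n^{-\delta/4}\big)$, which vanishes as $n\rightarrow\infty$.

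You instead truncate the innovation at a fixed level $B$. On $\{|\varepsilon_{tj}|\leqslant B\}$ you use the $2+\delta$ moments of $G_{1,tj}$, as in \autoref{pro3}. On the complement you use the factorization $\mathbb{E}[G_{1,tj}^2(1+|\varepsilon_{tj}|)^2\,1(|\varepsilon_{tj}|>B)]=\mathbb{E}(G_{1,tj}^2)\,\mathbb{E}[(1+|\varepsilon_{1j}|)^2\,1(|\varepsilon_{1j}|>B)]$. You then let $n\rightarrow\infty$ before $B\rightarrow\infty$.

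What each buys:
\begin{itemize}
\item Your route is slightly more elementary. It needs only dominated convergence for $\mathbb{E}[(1+|\varepsilon_{1j}|)^2\,1(|\varepsilon_{1j}|>B)]\rightarrow0$, not the fact $y^2\mathbb{P}\{1+|\varepsilon_{tj}|>y\}\rightarrow0$ or the envelope $\tilde r_j$.
\item The paper's route avoids the double limit. It makes explicit how the tail of $\varepsilon_{tj}$ and the extra $\delta$ moments of $G_{1,tj}$ each contribute to the bound.
\end{itemize}

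Both arguments rest on the same three ingredients: the $\mathcal{F}_{t-1}$-measurability of $G_{1,tj}$, the independence of $\varepsilon_{tj}$ from $\mathcal{F}_{t-1}$, and the uniform moment bound in \autoref{A3}(iii). The crude product bound you try first and discard is unnecessary, but it does no harm.
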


\begin{proof}[\textit{\textbf{Proof of \autoref{Lemma3}.}}]
Note that in the event $\mathcal{E}_n^c$, we have
\begin{align*}
\max_{1\leqslant t \leqslant n}\left|\hat{\varepsilon}_{tj} - \varepsilon_{tj}\right| \leqslant \frac{2}{\sqrt{n}}\max_{1\leqslant t\leqslant n}G_{1,tj}(1+|\varepsilon_{tj}|)\left\|\sqrt{n}(\hat{\alpha}_n - \alpha_0)\right\|. 
\end{align*}
As such, it suffices to show that 
\begin{align}
\frac{1}{\sqrt{n}}\max_{1\leqslant t\leqslant n}G_{1,tj}(1+|\varepsilon_{tj}|) = o_p(1). 
\label{Lemma3 1}
\end{align}

For notational brevity, now we redefine $W_{tj}$ as $W_{tj} = 1+|\varepsilon_{tj}|$. Note that $y^21(W_{tj}>y)\leqslant W_{tj}^21(W_{tj}>y)\leqslant W_{tj}^2$ and $\mathbb{E}(W_{tj}^2)\leqslant 4$, the Dominated Convergence Theorem (DCT) then implies that
\begin{align}
\lim_{y\rightarrow\infty}y^2\mathbb{P}\{W_{tj}>y\}\leqslant \lim_{y\rightarrow\infty}\mathbb{E}\left(W_{tj}^21(W_{tj}>y)\right) = \mathbb{E}\left(\lim_{y\rightarrow\infty}W_{tj}^21(W_{tj}>y)\right)=0.
\label{DCT}
\end{align}
Define $r_j(y) = y^2\mathbb{P}\{W_{tj}>y\}$ and $\tilde{r}_j(y) = \sup_{z\geqslant y} r_j(z)$. \eqref{DCT} indicates that $r_j(y)\leqslant \tilde{r}_j(y)$ for each $y$ and $\tilde{r}_j(y)\downarrow 0$ as $y\rightarrow\infty$.

For any $\varepsilon>0$, we have
\begin{align*}
\mathbb{P}\left\{n^{-1/2}\max_{1\leqslant t\leqslant n}G_{1,tj}(1 + |\varepsilon_{tj}|)>\varepsilon\right\}\leqslant \sum_{t=1}^n\mathbb{P}\left\{G_{1,tj}W_{tj}>\sqrt{n}\varepsilon\right\}.
\end{align*}
Given that $G_{1,tj}$ is $\mathcal{F}_{t-1}$-measurable, $W_{tj}\perp\mathcal{F}_{t-1}$, and that $G_{1,tj} = 0$ implies $\mathbb{P}\left\{G_{1,tj}W_{tj}>\sqrt{n}\varepsilon\right\}=0$, we obtain
\begin{align*}
\mathbb{P}\left\{G_{1,tj}W_{tj}>\sqrt{n}\varepsilon\right\} = \mathbb{E}[1(G_{1,tj}W_{tj}>\sqrt{n}\varepsilon)] = \mathbb{E}\Big(\mathbb{E}[1(G_{1,tj}W_{tj}>\sqrt{n}\varepsilon)|\mathcal{F}_{t-1}]\Big) = \mathbb{E}\left[\frac{G_{1,tj}^2}{n\varepsilon^2}r_j\left(\frac{\sqrt{n}\varepsilon}{G_{1,tj}}\right)\right],
\end{align*}
which further implies 
\begin{align}
\mathbb{P}\left\{n^{-1/2}\max_{1\leqslant t\leqslant n}G_{1,tj}(1 + |\varepsilon_{tj}|)>\varepsilon\right\}\leqslant \frac{1}{n\varepsilon^2}\sum_{t=1}^n\mathbb{E}\left[G_{1,tj}^2r_j\left(\frac{\sqrt{n}\varepsilon}{G_{1,tj}}\right)\right].
\label{Lemma3 2}
\end{align}

For notational brevity, let $C_1 = \sup_{n\geqslant 1}\max_{1\leqslant t\leqslant n}\mathbb{E}(G_{1,tj}^2)$ and $C_2 = \sup_{n\geqslant 1}\max_{1\leqslant t\leqslant n}\mathbb{E}(G_{1,tj}^{2+\delta})$ for the moment. Observe that 
\begin{align}
\mathbb{E}\left[G_{1,tj}^2r_j\left(\frac{\sqrt{n}\varepsilon}{G_{1,tj}}\right)1(G_{1,tj}\leqslant n^{1/4})\right]\leqslant \mathbb{E}\left[G_{1,tj}^2\tilde r_j\left(\frac{\sqrt{n}\varepsilon}{G_{1,tj}}\right)1(G_{1,tj}\leqslant n^{1/4})\right]\leqslant \tilde{r}_j(n^{1/4}\varepsilon)C_1.
\label{part 1}
\end{align}
In addition, since $r_j(y)\leqslant \mathbb{E}(W_{tj}^2)\leqslant 4$, we get
\begin{align}
\mathbb{E}\left[G_{1,tj}^2r_j\left(\frac{\sqrt{n}\varepsilon}{G_{1,tj}}\right)1(G_{1,tj}>n^{1/4})\right]\leqslant 4\mathbb{E}\left[G_{1,tj}^21(G_{1,tj}>n^{1/4})\right]\leqslant 4C_2n^{-\delta/4}.
\label{part 2}
\end{align}
Gathering \eqref{part 1} and \eqref{part 2} yields that
\begin{align}
\begin{split}
\mathbb{E}\left[G_{1,tj}^2r_j\left(\frac{\sqrt{n}\varepsilon}{G_{1,tj}}\right)\right] = & \mathbb{E}\left[G_{1,tj}^2r_j\left(\frac{\sqrt{n}\varepsilon}{G_{1,tj}}\right)1(G_{1,tj}\leqslant n^{1/4})\right] \\
&+ \mathbb{E}\left[G_{1,tj}^2r_j\left(\frac{\sqrt{n}\varepsilon}{G_{1,tj}}\right)1(G_{1,tj}>n^{1/4})\right]\\
\leqslant &\tilde{r}_j(n^{1/4}\varepsilon)C_1 + 4C_2n^{-\delta/4}.
\end{split}
\label{Lemma3 3}
\end{align}
Thus, by \eqref{Lemma3 2} and \eqref{Lemma3 3}, we get
\begin{align*}
\mathbb{P}\left\{n^{-1/2}\max_{1\leqslant t\leqslant n}G_{1,tj}(1 + |\varepsilon_{tj}|)>\varepsilon\right\}\leqslant \frac{\tilde{r}_j(n^{1/4}\varepsilon)C_1 + 4C_2n^{-\delta/4}}{\varepsilon^2}\rightarrow 0,\,\,\, n\rightarrow\infty,
\end{align*}
which validates \eqref{Lemma3 1}.

Finally, note that $\mathbb{P}\{\mathcal{E}_n\}\rightarrow 0$ according to \autoref{Lemma2}. Thus, combining \eqref{Lemma3 1}, for any $\eta>0$, we get
\begin{align*}
\mathbb{P}\left\{\max_{1\leqslant t \leqslant n}\left|\hat{\varepsilon}_{tj} - \varepsilon_{tj}\right|>\eta\right\} =& \mathbb{P}\left\{\max_{1\leqslant t \leqslant n}\left|\hat{\varepsilon}_{tj} - \varepsilon_{tj}\right|>\eta,\mathcal{E}_n\right\} + \mathbb{P}\left\{\max_{1\leqslant t \leqslant n}\left|\hat{\varepsilon}_{tj} - \varepsilon_{tj}\right|>\eta,\mathcal{E}_n^c\right\}\\
\leqslant & \mathbb{P}\{\mathcal{E}_n\} + \mathbb{P}\left\{ \frac{2}{\sqrt{n}}\max_{1\leqslant t\leqslant n}G_{1,tj}(1+|\varepsilon_{tj}|)\left\|\sqrt{n}(\hat{\alpha}_n - \alpha_0)\right\|>\eta\right\}\\
\rightarrow & 0,\,\,\, n\rightarrow\infty,
\end{align*}
which ends the proof.
\end{proof}

\begin{proof}[\textit{\textbf{Proof of \autoref{Th1}.}}]
With a Taylor expansion, for $k,l\geqslant 1$, we have
\begin{align}
\begin{split}
\hat{\theta}_{kl} = & \frac{1}{n}\sum_{t=1}^n\pi_k(U_t)\psi_l(t/n) + \frac{1}{n}\sum_{t=1}^n\frac{\partial \pi_k}{\partial u'}(U_t)(\hat{U}_t - U_t)\psi_l(t/n) \\
&+ \frac{1}{2n}\sum_{t=1}^n(\hat{U}_t - U_t)'\frac{\partial^2 \pi_k}{\partial u \partial u'}(\tilde U_t)(\hat{U}_t - U_t)\psi_l(t/n)\\
:=& \frac{1}{n}\sum_{t=1}^n\pi_k(U_t)\psi_l(t/n) + T_{n,kl} + R_{n,kl},
\end{split}
\label{Taylor hat theta kl}
\end{align}
where $\tilde{U}_t$ lies between $\hat{U}_t$ and $U_t$. 

By \autoref{A5}, $\partial^2 \pi_k(u)/\partial u\partial u'$ is bounded in $[0,1]^d$ and $\psi_l(\tau)$ is bounded in $[0,1]$. Thus, there exists some constant $C$, such that
\begin{align*}
|R_{n,kl}|\leqslant \frac{C}{n}\sum_{t=1}^n \left\|\hat{U}_t - U_t\right\|^2.
\end{align*}
Observe that
\begin{align}
\begin{split}
\frac{1}{n}\sum_{t=1}^n \left\|\hat{U}_t - U_t\right\|^2 = &\frac{1}{n}\sum_{t=1}^n\sum_{j=1}^d\left(\hat{F}_{nj}(\hat{\varepsilon}_{tj}) - F_j(\varepsilon_{tj})\right)^2\\
\leqslant & \frac{2}{n}\sum_{t=1}^n\sum_{j=1}^d\left(\hat{F}_{nj}(\hat{\varepsilon}_{tj}) - F_j(\hat\varepsilon_{tj})\right)^2 + \frac{2}{n}\sum_{t=1}^n\sum_{j=1}^d\left({F}_{j}(\hat{\varepsilon}_{tj}) - F_j(\varepsilon_{tj})\right)^2\\
:= & R_{n,kl,1} + R_{n,kl,2}.
\end{split}
\label{R decomposition}
\end{align}
For $R_{n,kl,1}$, by \autoref{coro1}, we have
\begin{align}
R_{n,kl,1}\leqslant 2\sum_{j=1}^d\left(\sup_{x_j\in\mathbb{R}}\left|\hat{F}_{nj}(x_j) - F_j(x_j)\right|\right)^2 = O_p(n^{-1}) = o_p(n^{-1/2}).
\label{R1}
\end{align}
For $R_{n,kl,2}$, by \autoref{A3} and \autoref{Lemma2}, we obtain
\begin{align}
R_{n,kl,2}\leqslant 2\sum_{j=1}^d\frac{M_{0j}^2}{n}\sum_{t=1}^n\left(\hat{\varepsilon}_{tj} - \varepsilon_{tj}\right)^2 = O_p(n^{-1}) = o_p(n^{-1/2}).
\label{R2}
\end{align}
Gathering \eqref{R decomposition}, \eqref{R1}, and \eqref{R2} implies that $R_{n,kl} = o_p(n^{-1/2})$.

For $T_{n,kl}$, we make the following decomposition
\begin{align*}
T_{n,kl} = & \frac{1}{n}\sum_{t=1}^n\sum_{j=1}^d\frac{\partial \pi_k}{\partial u_j}(U_t)\left(\hat{F}_{nj}(\hat{\varepsilon}_{tj}) - F_j(\varepsilon_{tj})\right)\psi_l(t/n)\\
=& \frac{1}{n}\sum_{t=1}^n\sum_{j=1}^d\frac{\partial \pi_k}{\partial u_j}(U_t)\left(\hat{F}_{nj}(\hat{\varepsilon}_{tj}) - F_{nj}(\hat\varepsilon_{tj})\right)\psi_l(t/n) \\
&+ \frac{1}{n}\sum_{t=1}^n\sum_{j=1}^d\frac{\partial \pi_k}{\partial u_j}(U_t)\left[\left({F}_{nj}(\hat{\varepsilon}_{tj}) - F_{j}(\hat\varepsilon_{tj}) \right) - \left({F}_{nj}({\varepsilon}_{tj}) - F_{j}(\varepsilon_{tj}) \right)\right]\psi_l(t/n)\\
& + \frac{1}{n}\sum_{t=1}^n\sum_{j=1}^d\frac{\partial \pi_k}{\partial u_j}(U_t)\left({F}_{j}(\hat{\varepsilon}_{tj}) - F_{j}(\varepsilon_{tj})\right)\psi_l(t/n)  + \frac{1}{n}\sum_{t=1}^n\sum_{j=1}^d\frac{\partial \pi_k}{\partial u_j}(U_t)\left({F}_{nj}({\varepsilon}_{tj}) - F_{j}(\varepsilon_{tj})\right)\psi_l(t/n) \\
:= &T_{n,kl,1} + T_{n,kl,2} + T_{n,kl,3} + T_{n,kl,4}.
\end{align*}

For $T_{n,kl,1}$, note that $\partial \pi_k(u)/\partial u_j$ is bounded in $[0,1]^d$ and $\psi_l(\tau)$ is bounded in $[0,1]$. Then, by \autoref{Lemma1}, we have
\begin{align*}
T_{n,kl,1} = \frac{1}{n}\sum_{t=1}^n\sum_{j=1}^d \frac{\partial \pi_k}{\partial u_j}(U_t)f_j(\hat{\varepsilon}_{tj})(a_j + \hat{\varepsilon}_{tj}b_j)'(\hat{\alpha}_n - \alpha_0)\psi_l(t/n) + o_p(n^{-1/2}).
\end{align*}

With a Taylor expansion, we get
\begin{align*}
&\frac{1}{n}\sum_{t=1}^n\sum_{j=1}^d \frac{\partial \pi_k}{\partial u_j}(U_t)f_j(\hat{\varepsilon}_{tj})(a_j + \hat{\varepsilon}_{tj}b_j)'(\hat{\alpha}_n - \alpha_0)\psi_l(t/n)  \\
= & \frac{1}{n}\sum_{t=1}^n\sum_{j=1}^d \frac{\partial \pi_k}{\partial u_j}(U_t)f_j({\varepsilon}_{tj})(a_j + {\varepsilon}_{tj}b_j)'(\hat{\alpha}_n - \alpha_0)\psi_l(t/n) \\
&+ \frac{1}{n}\sum_{t=1}^n\sum_{j=1}^d \frac{\partial \pi_k}{\partial u_j}(U_t)\Big[\dot f_j(\tilde{\varepsilon}_{tj})a_j + (\dot f_j(\tilde{\varepsilon}_{tj})\tilde{\varepsilon}_{tj} + f_j(\tilde{\varepsilon}_{tj}))b_j\Big]'(\hat{\alpha}_n - \alpha_0)\psi_l(t/n)(\hat{\varepsilon}_{tj} - \varepsilon_{tj}),
\end{align*}
where $\tilde{\varepsilon}_{tj}$ lies between $\varepsilon_{tj}$ and $\hat{\varepsilon}_{tj}$. For the second term of the expansion, by \autoref{A3}, \autoref{A5}, and \autoref{Lemma3}, we get
\begin{align*}
&\left|\frac{1}{n}\sum_{t=1}^n\sum_{j=1}^d \frac{\partial \pi_k}{\partial u_j}(U_t)\Big[\dot f_j(\tilde{\varepsilon}_{tj})a_j + (\dot f_j(\tilde{\varepsilon}_{tj})\tilde{\varepsilon}_{tj} + f_j(\tilde{\varepsilon}_{tj}))b_j\Big]'(\hat{\alpha}_n - \alpha_0)(\hat{\varepsilon}_{tj} - \varepsilon_{tj})\psi_l(t/n)\right|\\
\leqslant& \sum_{j=1}^d \sup_{u\in[0,1]^d}\left|\frac{\partial \pi_k}{\partial u_j}(u)\right|\sup_{\tau\in[0,1]}|\psi_l(\tau)|(M_{0j} + M_{1j})(\|a_j\| + \|b_j\|)\max_{1\leqslant t\leqslant n}\left|\hat{\varepsilon}_{tj} - \varepsilon_{tj}\right|\left\|\hat{\alpha}_n - \alpha_0\right\|\\
=&o_p(n^{-1/2}).
\end{align*}
Thus, we observe that
\begin{align}
T_{n,kl,1} =  \frac{1}{n}\sum_{t=1}^n\sum_{j=1}^d \frac{\partial \pi_k}{\partial u_j}(U_t)f_j({\varepsilon}_{tj})(a_j + {\varepsilon}_{tj}b_j)'(\hat{\alpha}_n - \alpha_0)\psi_l(t/n) + o_p(n^{-1/2}).
\label{T 1 simplified}
\end{align}

For $T_{n,kl,2}$, let $\mathbb{G}_{nj}(x)=\sqrt{n}\big(F_{nj}(x)-F_j(x)\big)$ denote the empirical process indexed by $x\in\mathbb{R}$. Since $\{\varepsilon_{tj}\}_{t=1}^n$ are i.i.d., the class $\{\mathbf{1}(\cdot\leqslant x):x\in\mathbb{R}\}$ is $F_j$-Donsker. Hence, $\mathbb{G}_{nj}$ is asymptotically stochastically equicontinuous with respect to the semimetric
\begin{align*}
\rho_j(x_1,x_2):=\left\{\mathbb{E}\left[\mathbf{1}(\varepsilon_{tj}\leqslant x_1)-\mathbf{1}(\varepsilon_{tj}\leqslant x_2)\right]^2\right\}^{1/2}=\left|F_j(x_1)-F_j(x_2)\right|^{1/2};
\end{align*}
see, for example,  van der Vaart and Wellner (1996, Theorem 1.5.7 and Example 2.5.4). That is, for every $\varepsilon>0$,
\begin{align*}
\lim_{\delta\downarrow0}\limsup_{n\rightarrow\infty}\mathbb{P}\left(\sup_{\rho_j(x_1,x_2)<\delta}\left|\mathbb{G}_{nj}(x_1)-\mathbb{G}_{nj}(x_2)\right|>\varepsilon\right)=0.
\end{align*}
Moreover, \autoref{A3} implies that
\begin{align*}
\rho_j(x_1,x_2)\leqslant M_{0j}^{1/2}|x_1-x_2|^{1/2}.
\end{align*}
Consequently, for every $\varepsilon>0$,
\begin{align}
\lim_{\delta\downarrow0}\limsup_{n\rightarrow\infty}\mathbb{P}\left(\sup_{|x_1-x_2|<\delta}\left|\mathbb{G}_{nj}(x_1)-\mathbb{G}_{nj}(x_2)\right|>\varepsilon\right)=0.
\label{SE Th 1}
\end{align}

Let $\Delta_{nj}:=\max_{1\leqslant t\leqslant n}|\hat{\varepsilon}_{tj} - \varepsilon_{tj}|$. Note that
\begin{align}
\left|\sqrt{n}T_{n,kl,2}\right|\leqslant \sup_{\tau\in[0,1]}|\psi_l(\tau)|\sum_{j=1}^d\sup_{u\in[0,1]^d}\left\|\frac{\partial \pi_k}{\partial u_j}(u)\right\|\sup_{|x_1-x_2|\leqslant \Delta_{nj}}\left|\mathbb{G}_{nj}(x_1) - \mathbb{G}_{nj}(x_2)\right|.
\label{Lemma 9-1}
\end{align}
$\forall \varepsilon>0, \delta>0$, we have
\begin{align*}
&\mathbb{P}\left\{\sup_{|x_1-x_2|\leqslant \Delta_{nj}}\left|\mathbb{G}_{nj}(x_1) - \mathbb{G}_{nj}(x_2)\right|>\varepsilon\right\}\\
\leqslant &\mathbb{P}\left\{\sup_{|x_1-x_2|\leqslant \Delta_{nj}}\left|\mathbb{G}_{nj}(x_1) - \mathbb{G}_{nj}(x_2)\right|>\varepsilon, \Delta_{nj}<\delta\right\} + \mathbb{P}\left\{\Delta_{nj}\geqslant \delta\right\}\\
\leqslant & \mathbb{P}\left\{\sup_{|x_1-x_2|<\delta}\left|\mathbb{G}_{nj}(x_1) - \mathbb{G}_{nj}(x_2)\right|>\varepsilon\right\} + \mathbb{P}\left\{\Delta_{nj}\geqslant \delta\right\}.
\end{align*}
By \autoref{Lemma3}, $\Delta_{nj} = o_p(1)$. Thus, taking $n\rightarrow \infty$ on both sides of the above inequality, we get
\begin{align*}
\limsup_{n\rightarrow\infty}\mathbb{P}\left\{\sup_{|x_1-x_2|\leqslant \Delta_{nj}}\left|\mathbb{G}_{nj}(x_1) - \mathbb{G}_{nj}(x_2)\right|>\varepsilon\right\}\leqslant\limsup_{n\rightarrow\infty}\mathbb{P}\left\{\sup_{|x_1-x_2|<\delta}\left|\mathbb{G}_{nj}(x_1) - \mathbb{G}_{nj}(x_2)\right|>\varepsilon\right\}.
\end{align*}
Then, by letting $\delta\downarrow 0$, it holds that
\begin{align*}
\limsup_{n\rightarrow\infty}\mathbb{P}\left\{\sup_{|x_1-x_2|\leqslant \Delta_{nj}}\left|\mathbb{G}_{nj}(x_1) - \mathbb{G}_{nj}(x_2)\right|>\varepsilon\right\}
\leqslant\lim_{\delta\downarrow 0}\limsup_{n\rightarrow\infty}\mathbb{P}\left\{\sup_{|x_1-x_2|<\delta}\left|\mathbb{G}_{nj}(x_1) - \mathbb{G}_{nj}(x_2)\right|>\varepsilon\right\}=0
\end{align*}
by \eqref{SE Th 1}. Then, by \eqref{Lemma 9-1}, we obtain $T_{n,kl,2} = o_p(n^{-1/2})$.

Next, we analyze $T_{n,kl,3}$. With a Taylor expansion, we derive that
\begin{align*}
T_{n,kl,3} = \frac{1}{n}\sum_{t=1}^n\sum_{j=1}^d\frac{\partial \pi_k}{\partial u_j}(U_t)f_j(\varepsilon_{tj})(\hat{\varepsilon}_{tj} - \varepsilon_{tj})\psi_l(t/n) + \frac{1}{2n}\sum_{t=1}^n\sum_{j=1}^d\frac{\partial \pi_k}{\partial u_j}(U_t)\dot f_j(\tilde\varepsilon_{tj})(\hat{\varepsilon}_{tj} - \varepsilon_{tj})^2\psi_l(t/n).
\end{align*}
For the second term, by \autoref{A3}, \autoref{A5}, and \autoref{Lemma2}, we have
\begin{align*}
&\left|\frac{1}{2n}\sum_{t=1}^n\sum_{j=1}^d\frac{\partial \pi_k}{\partial u_j}(U_t)\dot f_j(\tilde\varepsilon_{tj})(\hat{\varepsilon}_{tj} - \varepsilon_{tj})^2\psi_l(t/n)\right|\\
\leqslant &\frac{\sup_{\tau\in[0,1]}|\psi_l(\tau)|}{2}\sum_{j=1}^d\sup_{u\in[0,1]^d}\left|\frac{\partial \pi_k}{\partial u_j}(u)\right|M_{1j}\left(\frac{1}{n}\sum_{t=1}^n(\hat{\varepsilon}_{tj} - \varepsilon_{tj})^2\right)\\
=&o_p(n^{-1/2}).
\end{align*}
For the first term, we first note that
\begin{align*}
\hat{\varepsilon}_{tj} - \varepsilon_{tj} = &-\frac{1}{s_{tj}(\hat{\alpha}_n)}(a_{tj} + \varepsilon_{tj}b_{tj})'(\hat{\alpha}_n - \alpha_0) \\
&-\frac{1}{2s_{tj}(\hat{\alpha}_n)}(\hat{\alpha}_n - \alpha_0)'\left(\frac{\partial^2 m_{tj}}{\partial \alpha\partial \alpha'}(\tilde{\alpha}_n) + \varepsilon_{tj}\frac{\partial^2 s_{tj}}{\partial \alpha\partial \alpha'}(\tilde{\alpha}_n)\right)(\hat{\alpha}_n - \alpha_0)\\
=& - (a_{tj} + \varepsilon_{tj}b_{tj})'(\hat{\alpha}_n - \alpha_0) + \frac{1}{s_{tj}(\hat\alpha_n)}\frac{\partial s_{tj}}{\partial \alpha'}(\tilde{\alpha}_n)(\hat{\alpha}_n - \alpha_0)(a_{tj} + \varepsilon_{tj}b_{tj})'(\hat{\alpha}_n - \alpha_0)\\
&-\frac{1}{2s_{tj}(\hat{\alpha}_n)}(\hat{\alpha}_n - \alpha_0)'\left(\frac{\partial^2 m_{tj}}{\partial \alpha\partial \alpha'}(\tilde{\alpha}_n) + \varepsilon_{tj}\frac{\partial^2 s_{tj}}{\partial \alpha\partial \alpha'}(\tilde{\alpha}_n)\right)(\hat{\alpha}_n - \alpha_0)\\
:= &- (a_{tj} + \varepsilon_{tj}b_{tj})'(\hat{\alpha}_n - \alpha_0) + r_{ntj},
\end{align*}
where $\tilde{\alpha}_n = \alpha_0 + \theta(\hat{\alpha}_n - \alpha_0)$ for some $\theta\in[0,1]$. Let $\mathcal{E}_n$ be given in \autoref{Lemma2}. When $\mathcal{E}_n^c$ occurs, it follows that
\begin{align*}
|r_{ntj}|\leqslant (2G_{1,tj}^2 + G_{2,tj})(1+|\varepsilon_{tj}|)\left\|\hat{\alpha}_n - \alpha_0\right\|^2. 
\end{align*}
Therefore, in the event $\mathcal{E}_n^c$, we get
\begin{align*}
&\left|\frac{1}{n}\sum_{t=1}^n\sum_{j=1}^d\frac{\partial \pi_k}{\partial u_j}(U_t)f_j(\varepsilon_{tj})r_{ntj}\psi_l(t/n)\right|\\
\leqslant& \sup_{\tau\in[0,1]}|\psi_l(\tau)|\sum_{j=1}^d\sup_{u\in[0,1]^d}\left|\frac{\partial \pi_k}{\partial u_j}(u)\right|M_{0j}\left\|\hat{\alpha}_n - \alpha_0\right\|^2\left(\frac{1}{n}\sum_{t=1}^n(2G_{1,tj}^2 + G_{2,tj})\right) \\
=& o_p(n^{-1/2}) 
\end{align*}
by \autoref{pro3}. According to the proof of \autoref{Lemma2}, we have $\mathbb{P}\{\mathcal{E}_n\}\rightarrow 0$ as $n\rightarrow\infty$. Thus, for any $\varepsilon>0$,
\begin{align*}
&\mathbb{P}\left\{\left|\frac{1}{\sqrt{n}}\sum_{t=1}^n\sum_{j=1}^d\frac{\partial \pi_k}{\partial u_j}(U_t)f_j(\varepsilon_{tj})r_{ntj}\psi_l(t/n)\right|>\varepsilon\right\}\\
= & \mathbb{P}\left\{\left|\frac{1}{\sqrt{n}}\sum_{t=1}^n\sum_{j=1}^d\frac{\partial \pi_k}{\partial u_j}(U_t)f_j(\varepsilon_{tj})r_{ntj}\psi_l(t/n)\right|>\varepsilon,\mathcal{E}_n\right\}\\
&+\mathbb{P}\left\{\left|\frac{1}{\sqrt{n}}\sum_{t=1}^n\sum_{j=1}^d\frac{\partial \pi_k}{\partial u_j}(U_t)f_j(\varepsilon_{tj})r_{ntj}\psi_l(t/n)\right|>\varepsilon,\mathcal{E}_n^c\right\}\\
\leqslant & \mathbb{P}\{\mathcal{E}_n\} + \mathbb{P}\left\{\sup_{\tau\in[0,1]}|\psi_l(\tau)|\sum_{j=1}^d\sup_{u\in[0,1]^d}\left|\frac{\partial \pi_k}{\partial u_j}(u)\right|M_{0j}\left\|\hat{\alpha}_n - \alpha_0\right\|^2\frac{1}{\sqrt{n}}\sum_{t=1}^n(2G_{1,tj}^2 + G_{2,tj})>\varepsilon\right\}\\
\rightarrow & 0,\,\,\, n\rightarrow\infty,
\end{align*}
which implies that 
\begin{align*}
\frac{1}{n}\sum_{t=1}^n\sum_{j=1}^d\frac{\partial \pi_k}{\partial u_j}(U_t)f_j(\varepsilon_{tj})r_{ntj}\psi_l(t/n) = o_p(n^{-1/2}).
\end{align*}
As such, we conclude that 
\begin{align}
T_{n,kl,3} = -\frac{1}{n}\sum_{t=1}^n\sum_{j=1}^d\frac{\partial \pi_k}{\partial u_j}(U_t)f_j(\varepsilon_{tj})(a_{tj} + \varepsilon_{tj}b_{tj})'(\hat{\alpha}_n - \alpha_0)\psi_l(t/n) + o_p(n^{-1/2}).
\label{T 3 simplified}
\end{align}

Gathering \eqref{T 1 simplified} and \eqref{T 3 simplified}, we obtain
\begin{align*}
T_{n,kl,1} + T_{n,kl,3} = -\frac{1}{n}\sum_{t=1}^n\sum_{j=1}^d\frac{\partial \pi_k}{\partial u_j}(U_t)f_j(\varepsilon_{tj})[(a_{tj} - a_j) + \varepsilon_{tj}(b_{tj} - b_j)]'(\hat{\alpha}_n - \alpha_0)\psi_l(t/n) + o_p(n^{-1/2}).
\end{align*}
By \autoref{A2}, to show that $T_{n,kl,1} + T_{n,kl,3} = o_p(n^{-1/2})$, it suffices to prove that 
\begin{align*}
\frac{1}{n}\sum_{t=1}^n\sum_{j=1}^d\frac{\partial \pi_k}{\partial u_j}(U_t)f_j(\varepsilon_{tj})[(a_{tj} - a_j) + \varepsilon_{tj}(b_{tj} - b_j)]\psi_l(t/n) = o_p(1).
\end{align*}
For notational convenience, we define $W_{tj} = (\partial \pi_k(U_t)/\partial u_j)f_j(\varepsilon_{tj})$ for the moment. Under the null hypothesis, $W_{tj}$'s are i.i.d. with $\mathbb{E}(W_{tj}^2)<\infty$. Consider the quantity
\begin{align*}
\frac{1}{n}\sum_{t=1}^nW_{tj}(a_{tj} - a_j)\psi_l(t/n) = \frac{1}{n}\sum_{t=1}^n(W_{tj} - \mathbb{E}(W_{tj}))(a_{tj} - a_j)\psi_l(t/n) + \mathbb{E}(W_{1j})\frac{1}{n}\sum_{t=1}^n(a_{tj} - a_j)\psi_l(t/n). 
\end{align*}
Note that $a_{tj}$ is $\mathcal{F}_{t-1}$-measurable and $W_{tj}$ is independent of $\mathcal{F}_{t-1}$. Thus, $\mathbb{E}[(W_{tj} - \mathbb{E}(W_{tj}))(a_{tj} - a_j)|\mathcal{F}_{t-1}] = 0$. For $s<t$, we have
\begin{align*}
&\mathbb{E}[(W_{tj} - \mathbb{E}(W_{tj}))(a_{tj} - a_j)(W_{sj} - \mathbb{E}(W_{sj}))(a_{sj} - a_j)] \\
=& \mathbb{E}\{\mathbb{E}[(W_{tj} - \mathbb{E}(W_{tj}))(a_{tj} - a_j)(W_{sj} - \mathbb{E}(W_{sj}))(a_{sj} - a_j)|\mathcal{F}_{t-1}]\}\\
=& \mathbb{E}\{(W_{sj} - \mathbb{E}(W_{sj}))(a_{sj} - a_j)\mathbb{E}[(W_{tj} - \mathbb{E}(W_{tj}))(a_{tj} - a_j)|\mathcal{F}_{t-1}]\}\\
=&0.
\end{align*}
Thus, by Chebyshev's inequality, for any $\varepsilon>0$, as $n\rightarrow\infty$,
\begin{align*}
&\mathbb{P}\left\{\left\|\frac{1}{n}\sum_{t=1}^n(W_{tj} - \mathbb{E}(W_{tj}))(a_{tj} - a_j)\psi_l(t/n) \right\|>\varepsilon\right\}\\
\leqslant &\frac{2}{n\varepsilon^2}\sup_{\tau\in[0,1]}|\psi_l^2(\tau)|\mathbb{E}(W_{1j}^2)\sup_{n\geqslant 1}\max_{1\leqslant t\leqslant n}(\mathbb{E}(G_{1,tj}^2) + \|a_j\|^2)\rightarrow 0
\end{align*}
by \autoref{A3} and \autoref{A5}. Next, by \autoref{A3}, it follows that $\sup_{n\geqslant 1}\mathbb{E}\|\bar a_{nj} - a_j\|^{2+\delta}<\infty$. This uniform moment bound implies that the sequence $\{\|\bar a_{nj}-a_j\|\}_{n\geqslant1}$ is uniformly integrable. Since $\bar a_{nj}\stackrel{p}\longrightarrow a_j$, we obtain $\mathbb{E}\|\bar a_{nj} - a_j\|\rightarrow 0$ as $n\rightarrow\infty$. Observe that 
\begin{align*}
\mathbb{E}\left\|\frac{1}{n}\sum_{t=1}^n(a_{tj} - a_j)\psi_l(t/n)\right\|\leqslant |\psi_l(1)|\mathbb{E}\|\bar a_{nj} - a_j\| + \frac{\|\dot\psi_l\|_\infty}{n^2}\sum_{t=1}^{n-1}t\mathbb{E}\|\bar a_{tj} - a_j\|.
\end{align*}
For any $\eta>0$, let $N$ be the integer such that for $t\geqslant N$, $\mathbb{E}\|\bar a_{tj} - a_j\|<\eta$. Then, it follows that 
\begin{align*}
\frac{1}{n^2}\sum_{t=1}^{n-1}t\mathbb{E}\|\bar a_{tj} - a_j\| \leqslant  \frac{1}{n^2}\sum_{t=1}^{N-1}t\mathbb{E}\|\bar a_{tj} - a_j\| + \frac{\eta}{n^2}\sum_{t=N}^{n-1}t\leqslant \frac{1}{n^2}\sum_{t=1}^{N-1}t\mathbb{E}\|\bar a_{tj} - a_j\| + \frac{\eta}{2}.
\end{align*}
Therefore,
\begin{align*}
\limsup_{n\rightarrow\infty} \frac{1}{n^2}\sum_{t=1}^{n-1}t\mathbb{E}\|\bar a_{tj} - a_j\| \leqslant \frac{\eta}{2}.
\end{align*}
Since $\eta$ is arbitrary, Markov's inequality then implies that 
\begin{align*}
\frac{1}{n}\sum_{t=1}^n(a_{tj} - a_j)\psi_l(t/n) = o_p(1).
\end{align*}
Since $d$ is fixed, we then obtain
\begin{align*}
\frac{1}{n}\sum_{t=1}^n\sum_{j=1}^dW_{tj}(a_{tj} - a_j)\psi_l(t/n) = o_p(1).
\end{align*}
Using a similar argument, we can show that 
\begin{align*}
\frac{1}{n}\sum_{t=1}^n\sum_{j=1}^dW_{tj}\varepsilon_{tj}(b_{tj} - b_j)\psi_l(t/n) = o_p(1).
\end{align*}
Thus, $T_{n,kl,1} + T_{n,kl,3} = o_p(n^{-1/2})$.

For $T_{n,kl,4}$, we first define 
\begin{align*}
h_j(u) = \int_{[0,1]^d}\frac{\partial \pi_k}{\partial u_j}(v)\Big(1(u_j\leqslant v_j) - v_j\Big)\bar c(v)dv,\,\,\, u=(u_1,u_2,\ldots,u_d)'\in\mathbb{R}^d
\end{align*}
and 
\begin{align*}
h_j^o(u,v) = \frac{\partial \pi_k}{\partial u_j}(u)\Big(1(v_j\leqslant u_j) - u_j\Big) - h_j(v). 
\end{align*}
Under the null hypothesis, if $s\neq t$, we have $\mathbb{E}[h_j^o(U_t,U_s)|U_t] = \mathbb{E}[h_j^o(U_t,U_s)|U_s] = \mathbb{E}(h_j(U_t))=0$. Then, direct calculations yield that
\begin{align*}
\frac{1}{n}\sum_{t=1}^n\frac{\partial \pi_k}{\partial u_j}(U_t)(F_{nj}(\varepsilon_{tj}) - F_j(\varepsilon_{tj}))\psi_l(t/n) = &\frac{1}{n^2}\sum_{\substack{i,t=1}}^nh_j(U_i)\psi_l(t/n)\\
&+\frac{1}{n^2}\sum_{t=1}^nh_j^o(U_t,U_t)\psi_l(t/n) + \frac{1}{n^2}\sum_{\substack{i,t=1\\i\neq t}}^nh_j^o(U_t,U_i)\psi_l(t/n).
\end{align*}
Note that $\mathbb{E}(h_j^2(U_t))<\infty$. Thus, by the Central Limit Theorem (CLT), we get $n^{-1}\sum_{\substack{i=1}}^nh_j(U_i)=O_p(n^{-1/2})$. Then, for $l\geqslant 1$, we obtain
\begin{align*}
\frac{1}{n^2}\sum_{\substack{i,t=1}}^nh_j(U_i)\psi_l(t/n) = \frac{1}{n}\sum_{\substack{i=1}}^nh_j(U_i)\frac{1}{n}\sum_{t=1}^n\psi_l(t/n) = o_p(n^{-1/2}).
\end{align*}
Since $h_j^o(u,v)$ is bounded in $[0,1]^d\times [0,1]^d$, we get
\begin{align*}
\left|\frac{1}{n^2}\sum_{t=1}^nh_j^o(U_t,U_t)\psi_l(t/n) \right|\leqslant \frac{C}{n} = o_p(n^{-1/2}).
\end{align*}
Observe that $\mathbb{E}[h_j^o(U_{t_1},U_{s_1})h_j^o(U_{t_2},U_{s_2})]\neq 0$ only if $(t_1,s_1) = (t_2,s_2)$ or $(t_1,s_1) = (s_2,t_2)$. Thus, we get
\begin{align*}
\mathbb{E}\left(\frac{1}{n^2}\sum_{\substack{i,t=1\\i\neq t}}^nh_j^o(U_t,U_i)\psi_l(t/n)\right)^2\leqslant \frac{C}{n^2}
\end{align*}
for some constant $C$. Then, by Chebyshev's inequality, we get
\begin{align*}
\frac{1}{n^2}\sum_{\substack{i,t=1\\i\neq t}}^nh_j^o(U_t,U_i)\psi_l(t/n) = o_p(n^{-1/2}).
\end{align*}
Therefore, $T_{n,kl,4} = o_p(n^{-1/2})$. Since $T_{n,kl,2}=o_p(n^{-1/2})$, $T_{n,kl,4}=o_p(n^{-1/2})$, and $T_{n,kl,1} + T_{n,kl,3}=o_p(n^{-1/2})$, we conclude that $T_{n,kl} = o_p(n^{-1/2})$. Note that under the null hypothesis, $R_{n,kl} = o_p(n^{-1/2})$ and $T_{n,kl} = o_p(n^{-1/2})$. Then, \eqref{hat theta kl expansion} follows directly.

By \eqref{hat theta kl expansion}, it follows that
\begin{align*}
\hat{\theta}_{kl} = \frac{1}{n}\sum_{t=1}^n\Big[\pi_k(U_t) - \mathbb{E}(\pi_k(U_t))\Big]\psi_l(t/n) + \mathbb{E}(\pi_k(U_1))\frac{1}{n}\sum_{t=1}^n\psi_l(t/n) + o_p(n^{-1/2}).
\end{align*}
By \autoref{A5}, for $l\geqslant 1$, we have
\begin{align}
\left|\frac{1}{n}\sum_{t=1}^n\psi_l(t/n)\right| = &\left|\frac{1}{n}\sum_{t=1}^n\psi_l(t/n) - \int_0^1 \psi_l(\tau)d\tau\right| = \left|\sum_{t=1}^n\int_{\frac{t-1}{n}}^{\frac{t}{n}}(\psi_l(t/n) - \psi_l(\tau))d\tau\right|\leqslant \frac{\|\dot \psi_l\|_\infty}{n}.
\label{bound sum psi}
\end{align}
Therefore, under the null hypothesis,
\begin{align*}
\hat{\theta}_{kl} = \frac{1}{n}\sum_{t=1}^n\Big[\pi_k(U_t) - \mathbb{E}(\pi_k(U_t))\Big]\psi_l(t/n) + o_p(n^{-1/2}).
\end{align*}
Consequently, 
\begin{align*}
\sqrt{n}\hat{\theta}_n(K,L) = \frac{1}{\sqrt{n}}\sum_{t=1}^n\psi(t/n)\otimes [\pi(U_t) - \mathbb{E}(\pi(U_t))]  + o_p(1),
\end{align*}
where $\psi(\tau) = (\psi_1(\tau),\psi_2(\tau),\ldots,\psi_L(\tau))'$. For notational brevity, define $Z_{nt} = n^{-1/2}\psi(t/n)\otimes [\pi(U_t) - \mathbb{E}(\pi(U_t))]$ for the moment. Note that $Z_{nt}$'s are mutually independent with $\mathbb{E}(Z_{nt}) = 0_{KL}$. In addition,
\begin{align*}
\sum_{t=1}^n\mathbb{E}(Z_{nt}Z_{nt}') = \frac{1}{n}\sum_{t=1}^n[\psi(t/n)\psi'(t/n)]\otimes \Sigma_K\rightarrow I_L\otimes \Sigma_K.
\end{align*}
Furthermore, for some constant $C$, we have $\|Z_{nt}\|\leqslant Cn^{-1/2} $ for $1\leqslant t\leqslant n$. Therefore, for any $\varepsilon>0$, 
\begin{align*}
\sum_{t=1}^n\mathbb{E}\Big[\|Z_{nt}\|^21(\|Z_{nt}\|>\varepsilon)\Big]\rightarrow 0,\,\,\,n\rightarrow\infty.
\end{align*}
Thus, by Proposition 2.27 of van der Vaart (1998) and Slutsky's Theorem, we obtain $\sqrt{n}\hat{\theta}_n(K,L) = \sum_{t=1}^nZ_{nt}+o_p(1)\stackrel{d}\longrightarrow \mathcal{N}_{KL}(0_{KL},I_L\otimes \Sigma_K)$, which ends the proof.
\end{proof}

\begin{proof}[\textit{\textbf{Proof of \autoref{Th2}.}}]
We first establish an important result that ensures $\hat{\Sigma}_{n,K}\stackrel{p}\longrightarrow \Sigma_K$ regardless of whether the null hypothesis holds.

\begin{proposition}
\label{pro21}
Suppose Assumptions \ref{A1}--\ref{A5} hold. Then, as $n\rightarrow\infty$, $\hat{\Sigma}_{n,K}\stackrel{p}\longrightarrow \Sigma_K$.
\end{proposition}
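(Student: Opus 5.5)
I will compare $\hat\Sigma_{n,K}$ first with its infeasible counterpart $\tilde\Sigma_{n,K}$, built from the latent $U_t$ in place of $\hat U_t$, and then show that $\tilde\Sigma_{n,K}\to\Sigma_K$. Write $\hat\Sigma_{n,K}=n^{-1}\sum_{t=1}^n\pi(\hat U_t)\pi(\hat U_t)'-\hat\mu_{n,K}\hat\mu_{n,K}'$. Define $\tilde\mu_{n,K}=n^{-1}\sum_{t=1}^n\pi(U_t)$ and $\tilde\Sigma_{n,K}=n^{-1}\sum_{t=1}^n\pi(U_t)\pi(U_t)'-\tilde\mu_{n,K}\tilde\mu_{n,K}'$. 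Because no $\sqrt n$-rate is needed here, only consistency, the argument can stay crude and hold under both $\mathcal H_0$ and $\mathcal H_1$.

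\emph{Step 1 (feasible vs.\ infeasible).} By \autoref{A5}, each $\pi_k$ and each product $\pi_k\pi_{k'}$ is bounded and Lipschitz on $[0,1]^d$, with a constant $C$. Hence
\begin{align*}
\|\hat\Sigma_{n,K}-\tilde\Sigma_{n,K}\|\leqslant \frac{C}{n}\sum_{t=1}^n\|\hat U_t-U_t\|\leqslant C\Big(\frac1n\sum_{t=1}^n\|\hat U_t-U_t\|^2\Big)^{1/2}.
\end{align*}
The proof of \autoref{Th1} already splits $\|\hat U_t-U_t\|^2$ into $R_{n,kl,1}+R_{n,kl,2}$, using \autoref{coro1} and \autoref{Lemma2}, and gives $n^{-1}\sum_t\|\hat U_t-U_t\|^2=O_p(n^{-1})$. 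That part of the argument never used $\mathcal H_0$: Corollary~\ref{coro1} and Lemmas~\ref{Lemma1}--\ref{Lemma2} rely only on Assumptions~\ref{A1}--\ref{A3} and on the marginals being i.i.d. So $\hat\Sigma_{n,K}-\tilde\Sigma_{n,K}=O_p(n^{-1/2})$.

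\emph{Step 2 (infeasible LLN).} The $U_t$ are independent but not identically distributed, with $U_t$ having density $c(\cdot;t/n)$. Let $g$ be any of the bounded functions $\pi_k$ or $\pi_k\pi_{k'}$. Chebyshev's inequality gives $\mathrm{Var}(n^{-1}\sum_t g(U_t))\leqslant C/n$, so $n^{-1}\sum_t g(U_t)-n^{-1}\sum_t\mathbb E g(U_t)=o_p(1)$. For the mean, $n^{-1}\sum_t\mathbb E g(U_t)=n^{-1}\sum_t\int g(u)c(u;t/n)\,du$ is a Riemann sum for $\tau\mapsto\int g(u)c(u;\tau)\,du$. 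I would check that this map is bounded by $\|g\|_\infty$, since $c(\cdot;\tau)$ is a density. I would also check that it is continuous except at finitely many points with one-sided limits. That follows from \autoref{A4} and Scheffé's lemma: pointwise convergence of densities gives $\mathcal L_1$ convergence, which is enough for bounded $g$. A bounded, piecewise continuous integrand is Riemann integrable, so the sum converges to $\int_0^1\int g(u)c(u;\tau)\,du\,d\tau=\int g(u)\bar c(u)\,du$ by Fubini and the definition of $\bar C$. Applying this with $g=\pi$ and $g=\pi\pi'$ gives $\tilde\mu_{n,K}\to\mu_K$ and $n^{-1}\sum_t\pi(U_t)\pi(U_t)'\to\int\pi\pi'\bar c$ in probability. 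The continuous mapping theorem then yields $\tilde\Sigma_{n,K}\stackrel{p}{\longrightarrow}\int\pi\pi'\bar c-\mu_K\mu_K'=\Sigma_K$.

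\emph{Main obstacle.} The most delicate point is the Riemann-sum step in Step 2, because \autoref{A4} states continuity of $\tau\mapsto c(\cdot;\tau)$ without naming a topology. I would read it as pointwise continuity in $\tau$ for each $u$ and pass to $\mathcal L_1$ via Scheffé. If the paper intends $\mathcal L_1$ continuity directly, the step is immediate. Combining Steps 1 and 2 completes the proof of the proposition.
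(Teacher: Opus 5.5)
Your proposal is correct and follows essentially the same route as the paper. The paper also reduces to $\hat\mu_{n,K}\stackrel{p}{\longrightarrow}\mu_K$ and $\hat M_{n,K}\stackrel{p}{\longrightarrow}M_K$, and splits each into the same three pieces you use: a feasible-versus-infeasible part controlled by the $O_p(n^{-1})$ bound on $n^{-1}\sum_t\|\hat U_t-U_t\|^2$ plus Cauchy--Schwarz, a Chebyshev fluctuation term, and a Riemann-sum bias term. Your Scheff\'e-based treatment of the Riemann-sum step fills in detail the paper leaves implicit; note that Riemann integrability needs only boundedness and continuity off a finite set, so you do not have to verify the one-sided limits.
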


\begin{proof}[\textit{\textbf{Proof of \autoref{pro21}.}}]
Define 
\begin{align*}
M_K = \int_{[0,1]^d}\pi(u)\pi'(u)\bar{c}(u)du\,\,\,\text{and}\,\,\,\hat{M}_{n,K} = \frac{1}{n}\sum_{t=1}^n\pi(\hat{U}_t)\pi'(\hat{U}_t).
\end{align*}
Observe that $\Sigma_K = M_K - \mu_K\mu_K'$ and $\hat{\Sigma}_{n,K} = \hat{M}_{n,K} - \hat{\mu}_{n,K}\hat{\mu}_{n,K}'$. Then, using the Continuous Mapping Theorem (CMT), it suffices to show that $\hat{\mu}_{n,K}\stackrel{p}\longrightarrow \mu_K$ and $\hat{M}_{n,K}\stackrel{p}\longrightarrow M_K$.

We make the following decomposition
\begin{align*}
\hat{\mu}_{n,K} - \mu_K = \frac{1}{n}\sum_{t=1}^n\left(\pi(\hat{U}_t) - \pi(U_t)\right) + \frac{1}{n}\sum_{t=1}^n\Big(\pi(U_t) - \mathbb{E}(\pi(U_t))\Big) + \frac{1}{n}\sum_{t=1}^n\Big(\mathbb{E}(\pi(U_t)) - \mu_K\Big).
\end{align*}
For the first term, by \autoref{A5} and \eqref{R decomposition}--\eqref{R2}, we have
\begin{align*}
\left\|\frac{1}{n}\sum_{t=1}^n\left(\pi(\hat{U}_t) - \pi(U_t)\right)\right\|\leqslant \frac{C}{n}\sum_{t=1}^n\left\|\hat{U}_t - U_t\right\|\leqslant C\sqrt{\frac{1}{n}\sum_{t=1}^n\left\|\hat{U}_t - U_t\right\|^2} = o_p(1).
\end{align*}
For the second term, since $U_t$'s are independent and $\|\pi(u)\|$ is bounded on $[0,1]^d$, we have
\begin{align*}
\mathbb{E}\left\|\frac{1}{n}\sum_{t=1}^n\Big(\pi(U_t) - \mathbb{E}(\pi(U_t))\Big)\right\|^2\leqslant \frac{1}{n^2}\sum_{t=1}^n\mathbb{E}\left\|\pi(U_t)\right\|^2 = O(n^{-1}).
\end{align*}
Thus, using Chebyshev's inequality, the second term is of order $o_p(1)$. For the third term, we have
\begin{align*}
\frac{1}{n}\sum_{t=1}^n\mathbb{E}(\pi(U_t)) = \frac{1}{n}\sum_{t=1}^n\int_{[0,1]^d}\pi(u)c(u;t/n)du\longrightarrow\int_0^1\int_{[0,1]^d}\pi(u)c(u;\tau)dud\tau = \int_{[0,1]^d}\pi(u)\bar{c}(u)du = \mu_K.
\end{align*}
As such, we obtain $\hat{\mu}_{n,K}\stackrel{p}\longrightarrow \mu_K$. The proof for $\hat{M}_{n,K}\stackrel{p}\longrightarrow M_K$ is similar and is omitted for brevity.
\end{proof}

Gathering \autoref{Th1}, \autoref{pro21}, and Slutsky's Theorem, we have the asymptotic normality:  $\sqrt{n}\hat{\Pi}_n(K,L)\stackrel{d}\longrightarrow \mathcal{N}_{KL}(0_{KL},I_{KL})$. Finally, the CMT indicates that
\begin{align*}
n\hat{\Psi}_n^2(K,L) = \left\|\sqrt{n}\hat{\Pi}_n(K,L)\right\|^2\stackrel{d}\longrightarrow \chi_{KL}^2,
\end{align*}
which yields the desired result.
\end{proof}

\begin{proof}[\textit{\textbf{Proof of \autoref{Th3}.}}]
Observe that 
\begin{align*}
\hat{\theta}_{kl} - \theta_{kl} = &\frac{1}{n}\sum_{t=1}^n\left(\pi_k(\hat{U}_t) - \pi_k(U_t)\right)\psi_l(t/n) + \frac{1}{n}\sum_{t=1}^n\Big(\pi_k(U_t) - \mathbb{E}(\pi_k(U_t))\Big)\psi_l(t/n)\\
&+\left(\frac{1}{n}\sum_{t=1}^n\mathbb{E}(\pi_k(U_t))\psi_l(t/n) - \theta_{kl}\right).
\end{align*}
For the first term, by \autoref{A5} and \eqref{R decomposition}--\eqref{R2}, we have
\begin{align*}
\left|\frac{1}{n}\sum_{t=1}^n\left(\pi_k(\hat{U}_t) - \pi_k(U_t)\right)\psi_l(t/n)\right|\leqslant \frac{C}{n}\sum_{t=1}^n\left\|\hat{U}_t - U_t\right\|\leqslant C\sqrt{\frac{1}{n}\sum_{t=1}^n\left\|\hat{U}_t - U_t\right\|^2} = o_p(1).
\end{align*}
Note that $U_t$'s are independent, $\pi_k(u)$ is bounded on $[0,1]^d$ and $\psi_l(\tau)$ is bounded on $[0,1]$. Thus, for the second term, we have
\begin{align*}
\mathbb{E}\left|\frac{1}{n}\sum_{t=1}^n\Big(\pi_k(U_t) - \mathbb{E}(\pi_k(U_t))\Big)\psi_l(t/n)\right|^2\leqslant \frac{\|\psi_l\|_\infty^2}{n^2}\sum_{t=1}^n\mathbb{E}(\pi_k^2(U_t)) = O(n^{-1}).
\end{align*}
Then, using Chebyshev's inequality, the second term is of order $o_p(1)$. For the third term, we have
\begin{align*}
\frac{1}{n}\sum_{t=1}^n\mathbb{E}(\pi_k(U_t))\psi_l(t/n) =& \frac{1}{n}\sum_{t=1}^n\int_{[0,1]^d}c(u;t/n)\pi_k(u)\psi_l(t/n)du
\longrightarrow\int_0^1\int_{[0,1]^d}c(u;\tau)\pi_k(u)\psi_l(\tau)dud\tau = \theta_{kl}.
\end{align*}
Therefore, we obtain $\hat{\theta}_{kl}\stackrel{p}\longrightarrow \theta_{kl}$. By \autoref{pro21} and the CMT, it follows that 
\begin{align*}
\hat{\Psi}_n^2(K,L) = \sum_{l=1}^L \hat{\theta}_{n,l}'\hat{\Sigma}_{n,K}^{-1}\hat{\theta}_{n,l}\stackrel{p}\longrightarrow \sum_{l=1}^L\theta_l'\Sigma_K^{-1}\theta_l = \Psi^2(K,L).
\end{align*}
Under $\mathcal{H}_{1a}$ in \eqref{adjH_1}, there exists one $\theta_l\neq 0_K$. Since $\Sigma_K$ is positive definite, we then obtain $\Psi^2(K,L)>0$.
\end{proof}

\begin{proof}[\textit{\textbf{Proof of \autoref{Th4}.}}]
For notational convenience, let $\mathbb{P}_n$ and $\mathbb{Q}_n$ denote the probability measure under $\mathcal{H}_0$ and $\mathcal{H}_{1L}$, respectively. Let $\mathbb{E}_{\mathbb{P}_n}(\cdot)$ and $\mathbb{E}_{\mathbb{Q}_n}(\cdot)$ denote the expectation with respect to $\mathbb{P}_n$ and $\mathbb{Q}_n$. We first establish an important proposition that greatly facilitates our proof. 

\begin{proposition}
\label{pro41}
Suppose Assumptions \ref{A1}--\ref{A6} hold. Let $\{\mathscr{B}_n\}_{n\geqslant 1}$ be a sequence of events such that $\mathbb{P}_n(\mathscr{B}_n)\rightarrow 0$ as $n\rightarrow\infty$. Then, as $n\rightarrow \infty$, $\mathbb{Q}_n(\mathscr{B}_n)\rightarrow 0$.
\end{proposition}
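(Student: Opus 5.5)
The plan is to prove contiguity $\mathbb{Q}_n\lhd\mathbb{P}_n$ through Le Cam's first lemma (van der Vaart, 1998, Lemma~6.4). It suffices to show that the log-likelihood ratio $\Lambda_n=\log(d\mathbb{Q}_n/d\mathbb{P}_n)$ converges in distribution under $\mathbb{P}_n$ to $\mathcal{N}(-\sigma^2/2,\sigma^2)$ for some $\sigma^2<\infty$. Contiguity then gives exactly the claim: $\mathbb{P}_n(\mathscr{B}_n)\rightarrow0$ implies $\mathbb{Q}_n(\mathscr{B}_n)\rightarrow0$.

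The first step is to identify the likelihood ratio. Under both $\mathcal{H}_0$ and $\mathcal{H}_{1L}$ we keep the same dynamic parameter $\alpha_0$, the same marginals $F_j$, and the same law of the exogenous variables. Only the copula density of $\varepsilon_t$ changes, from $\bar c$ to $c_n(\cdot;t/n)$. By Assumption~\ref{A6}(iii), $c_n(\cdot;\tau)$ is a valid copula density with the same uniform marginals. The joint density factorizes sequentially, since $\varepsilon_t$ is independent of $\mathcal{F}_{t-1}$ and $Y_t$ is a measurable bijection of $\varepsilon_t$ given $\mathcal{F}_{t-1}$. The Jacobian terms, the marginal densities and the conditional laws of $X_t$ therefore cancel, and
\begin{align*}
\Lambda_n=\sum_{t=1}^n\log\left(1+n^{-1/2}\zeta(U_t;t/n)\right).
\end{align*}
Under $\mathbb{P}_n$, the $U_t$ are i.i.d.\ with density $\bar c$. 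Since $\zeta$ is bounded, say $|\zeta|\leqslant B$, the argument of the logarithm is at least $1/2$ for $n\geqslant 4B^2$.

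The second step is a second-order expansion. Using $\log(1+x)=x-x^2/2+O(|x|^3)$ uniformly for $|x|\leqslant 1/2$, we get
\begin{align*}
\Lambda_n=n^{-1/2}\sum_{t=1}^n\zeta(U_t;t/n)-\frac{1}{2n}\sum_{t=1}^n\zeta^2(U_t;t/n)+O(B^3n^{-1/2}).
\end{align*}
We then handle the two sums separately.

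\begin{itemize}
\item \textbf{Linear term.} $\mathbb{E}_{\mathbb{P}_n}\zeta(U_t;\tau)=\int r(u;\tau)\,du$, which vanishes by integrating the marginal restriction in Assumption~\ref{A6}(iii) over $u_j$. The summands are therefore independent, mean zero and bounded, and their variance is $n^{-1}\sum_t\int\zeta^2(u;t/n)\bar c(u)\,du$. This is a Riemann sum converging to $\sigma^2:=\int_0^1\int\zeta^2\bar c\,du\,d\tau$ by the piecewise continuity in Assumption~\ref{A6}(ii). The Lindeberg--Feller CLT (Proposition 2.27 of van der Vaart, 1998) applies trivially by boundedness and gives $\mathcal{N}(0,\sigma^2)$.
\item \textbf{Quadratic term.} It has mean tending to $\sigma^2$ by the same Riemann-sum argument, and variance $O(n^{-1})$ by independence and boundedness. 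Chebyshev's inequality then gives convergence in probability to $\sigma^2$.
\end{itemize}
Slutsky's theorem now yields $\Lambda_n\stackrel{d}\longrightarrow\mathcal{N}(-\sigma^2/2,\sigma^2)$ under $\mathbb{P}_n$, with $0<\sigma^2<\infty$. Le Cam's first lemma then gives the proposition.

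The main obstacle is the first step: justifying rigorously that the likelihood ratio of the full observed sample reduces to the product of copula-density ratios. The filtration includes lagged $Y$ and the exogenous $X_t$, and $\mathbb{P}_n$ and $\mathbb{Q}_n$ must be built on a common probability space with the same law for $X$ and the initial conditions. I would make this explicit by constructing both measures as the joint law of the initial values, $\{X_t\}$ and $\{\varepsilon_t\}$, on which the observations are measurable functions. If the sequence-of-conditional-densities argument becomes delicate, the fallback is to work with the measures induced on the innovations and exogenous variables directly. In either construction the ratio depends only on the $U_t$, and every event in $\mathscr{B}_n$ is measurable with respect to this larger $\sigma$-field, so contiguity there implies the claim.
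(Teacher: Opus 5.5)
Your argument is correct, but it takes a heavier route than the paper. The paper never identifies the limit law of $\Lambda_n$. Instead it bounds the second moment of the likelihood ratio directly. Independence and $\mathbb{E}_{\mathbb{P}_n}\zeta(U_t;t/n)=0$ give $\mathbb{E}_{\mathbb{P}_n}(L_n^2)=\prod_{t=1}^n\bigl(1+n^{-1}\mathbb{E}_{\mathbb{P}_n}\zeta^2(U_t;t/n)\bigr)\leqslant \exp(M^2)$, where $M=\sup_{(u,\tau)}|\zeta(u;\tau)|$. Cauchy--Schwarz then yields $\mathbb{Q}_n(\mathscr{B}_n)=\mathbb{E}_{\mathbb{P}_n}(L_n1(\mathscr{B}_n))\leqslant e^{M^2/2}\,\mathbb{P}_n^{1/2}(\mathscr{B}_n)$.

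That proof uses only boundedness of $\zeta$, the centering in Assumption \ref{A6}(iii), and independence of the $U_t$. It needs no expansion of the logarithm and no Riemann-sum limit, so Assumption \ref{A6}(ii) is not used. It also needs no CLT and no Le Cam lemma, and it gives an explicit bound valid for every $n$.

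In exchange, your route delivers $\Lambda_n\stackrel{d}\longrightarrow\mathcal{N}(-\sigma^2/2,\sigma^2)$ under $\mathbb{P}_n$. Combined with joint convergence of $(\sqrt{n}\hat\theta_n(K,L),\Lambda_n)$, Le Cam's third lemma would then give the noncentrality in Theorem \ref{Th4} directly, as the limiting covariance $\lim n^{-1}\sum_{t}\psi_l(t/n)\mathbb{E}_{\mathbb{P}_n}[\pi_k(U_t)\zeta(U_t;t/n)]=\Delta_{kl}$. The paper instead computes the mean shift by hand.

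Your care in the first step is also worthwhile. Building $\mathbb{P}_n$ and $\mathbb{Q}_n$ on a common space, so that the ratio depends only on the $U_t$ and the events $\mathscr{B}_n$ are measurable there, is something the paper takes for granted when it simply writes down $L_n$.
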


\begin{proof}[\textit{\textbf{Proof of \autoref{pro41}.}}]
Let $L_n$ denote the likelihood ratio
\begin{align*}
L_n = \prod_{t=1}^n\frac{c_n(U_t;t/n)}{\bar{c}(U_t)} = \prod_{t=1}^n\left(1 + n^{-1/2}\zeta(U_t;t/n)\right).
\end{align*}
Under \autoref{A6}(iii), for $t=1,2,\ldots,n$, we have
\begin{align*}
\mathbb{E}_{\mathbb{P}_n}(\zeta(U_t;t/n)) = \int_{[0,1]^d}r(u;t/n)du = 0.
\end{align*}
In addition, \autoref{A6}(i) implies that $\sup_{(u,\tau)\in [0,1]^d\times [0,1]}|\zeta(u;\tau)|\leqslant M$ for some $M>0$. Thus, it follows that
\begin{align*}
\mathbb{E}_{\mathbb{P}_n}(L_n^2) = \prod_{t=1}^n\left(1+ \frac{1}{n}\mathbb{E}_{\mathbb{P}_n}(\zeta^2(U_t;t/n))\right)\leqslant \exp\left(\frac{1}{n}\sum_{t=1}^n\mathbb{E}_{\mathbb{P}_n}(\zeta^2(U_t;t/n))\right)\leqslant \exp\left(M^2\right)<\infty.
\end{align*}
Using the Cauchy–Schwarz inequality, we have
\begin{align*}
\mathbb{Q}_n(\mathscr{B}_n) = \mathbb{E}_{\mathbb{Q}_n}(1(\mathscr{B}_n)) = \mathbb{E}_{\mathbb{P}_n}(L_n1(\mathscr{B}_n))\leqslant \mathbb{E}_{\mathbb{P}_n}^{1/2}(L_n^2)\mathbb{E}_{\mathbb{P}_n}^{1/2}(1(\mathscr{B}_n)) = \mathbb{E}_{\mathbb{P}_n}^{1/2}(L_n^2)\mathbb{P}_n^{1/2}(\mathscr{B}_n) \rightarrow 0
\end{align*}
as $n\rightarrow\infty$, which yields the desired result.
\end{proof}

By \autoref{pro41}, we immediately derive that \eqref{hat theta kl expansion} holds under $\mathcal{H}_{1L}$. Observe that under $\mathcal{H}_{1L}$,
\begin{align*}
\frac{1}{\sqrt{n}}\sum_{t=1}^n\mathbb{E}(\pi_k(U_t))\psi_l(t/n) = \int_{[0,1]^d}\pi_k(u)\bar{c}(u)du\frac{1}{\sqrt{n}}\sum_{t=1}^n\psi_l(t/n) + \frac{1}{n}\sum_{t=1}^n\int_{[0,1]^d}r(u;t/n)\pi_k(u)\psi_l(t/n)du.
\end{align*}
Since 
\begin{align*}
\frac{1}{n}\sum_{t=1}^n\int_{[0,1]^d}r(u;t/n)\pi_k(u)\psi_l(t/n)du\rightarrow \int_0^1\int_{[0,1]^d}r(u;\tau)\pi_k(u)\psi_l(\tau)dud\tau = \Delta_{kl}
\end{align*}
and $n^{-1/2}\sum_{t=1}^n\psi_l(t/n) = o(1)$ by \eqref{bound sum psi}, we obtain
\begin{align}
\frac{1}{\sqrt{n}}\sum_{t=1}^n\mathbb{E}(\pi_k(U_t))\psi_l(t/n) = \Delta_{kl} + o(1).
\label{noncenter}
\end{align}

Define $\Delta = (\Delta_1',\Delta_2',\ldots,\Delta_L')'$ and $Z_{nt} = n^{-1/2}\psi(t/n)\otimes [\pi(U_t) - \mathbb{E}(\pi(U_t))]$. \eqref{hat theta kl expansion} and \eqref{noncenter} imply that 
\begin{align*}
\sqrt{n}\hat{\theta}_n(K,L) = \Delta + \sum_{t=1}^nZ_{nt} + o_p(1).
\end{align*}
Note that $Z_{nt}$'s are mutually independent with $\mathbb{E}(Z_{nt}) = 0_{KL}$. In addition, under $\mathcal{H}_{1L}$,
\begin{align*}
\sum_{t=1}^n\mathbb{E}(Z_{nt}Z_{nt}') = \frac{1}{n}\sum_{t=1}^n[\psi(t/n)\psi'(t/n)]\otimes (\Sigma_K + o(1))\rightarrow I_L\otimes \Sigma_K,
\end{align*}
where the $o(1)$ is uniform in $1\leqslant t\leqslant n$. Furthermore, for some constant $C$, we have $\|Z_{nt}\|\leqslant Cn^{-1/2} $ for $1\leqslant t\leqslant n$. Therefore, for any $\varepsilon>0$, 
\begin{align*}
\sum_{t=1}^n\mathbb{E}\Big[\|Z_{nt}\|^21(\|Z_{nt}\|>\varepsilon)\Big]\rightarrow 0,\,\,\,n\rightarrow\infty.
\end{align*}
Thus, by Proposition 2.27 of van der Vaart (1998) and Slutsky's Theorem, we obtain $\sqrt{n}\hat{\theta}_n(K,L) = \Delta + \sum_{t=1}^nZ_{nt}+o_p(1)\stackrel{d}\longrightarrow \mathcal{N}_{KL}(\Delta,I_L\otimes \Sigma_K)$. Note that under $\mathcal{H}_{1L}$, $\hat{\Sigma}_{n,K}\stackrel{p}\longrightarrow \Sigma_K$ holds by \autoref{pro21} and \autoref{pro41}. Thus, by the CMT, we get
\begin{align*}
n\hat{\Psi}_n^2(K,L) = (\sqrt{n}\hat{\theta}_n(K,L))'(I_L\otimes \hat{\Sigma}_{n,K}^{-1})(\sqrt{n}\hat{\theta}_n(K,L))\stackrel{d}\longrightarrow \chi_{KL}^2\left(\sum_{l=1}^L\Delta_l'\Sigma_K^{-1}\Delta_l\right),
\end{align*}
which completes the proof.
\end{proof}

\begin{proof}[\textit{\textbf{Proof of \autoref{ThD}.}}]
Define $\mathscr{I}(P) = \{\mathbf{k}\in\mathbb{N}_0^d:\|\mathbf{k}\|_0\geqslant 2,\|\mathbf{k}\|_1\leqslant P\}$, $\mathscr{A}(P,L) = \mathscr{I}(P)\times \{1,2,\ldots,L\}$, and $\mathscr{A}_0=\mathscr{A}(2,1)$. Let $\# A$ denote the number of elements of set $A$. Obviously, $\# \mathscr{A}(P,L) = K(P)L$ and $\#\mathscr{A}_0 = d(d-1)/2$. In addition, for any $P\geqslant 2$ and $L\geqslant 1$, $\mathscr{A}_0\subset \mathscr{A}(P,L)$.

We first show that, under the null, $\mathbb{P}\{(\hat{P},\hat{L})\neq (2,1)\}\rightarrow 0$ as $n\rightarrow\infty$. For notational brevity, let 
\begin{align*}
SC_n(P,L) = \sum_{(\mathbf{k},l)\in \mathscr{A}(P,L)}\hat{Z}_{\mathbf{k}l}^2 - K(P)L\log n,\,\,\,\text{where}\,\,\,\hat{Z}_{\mathbf{k}l} = \frac{1}{\sqrt{n}}\sum_{t=1}^n\pi_\mathbf{k}(\hat{U}_t)\psi_l(t/n).
\end{align*}
Let $\mathscr{A}_n = \mathscr{A}(D_P(n),D_L(n))$. If $(\hat{P},\hat{L})\neq (2,1)$, there must exist $2\leqslant P \leqslant D_P(n)$ and $1\leqslant L \leqslant D_L(n)$, such that $SC_n(P,L) - SC_n(2,1)> 0$, which is equivalent to 
\begin{align*}
\sum_{(\mathbf{k},l)\in \mathscr{A}(P,L)\backslash \mathscr{A}_0}\hat{Z}_{\mathbf{k}l}^2> (K(P)L - K(2))\log n.
\end{align*}
Note that the left-hand side of the inequality is a summation over $K(P)L - K(2)$ items. Thus, the above inequality implies that $\max_{(\mathbf{k},l)\in \mathscr{A}(P,L)\backslash \mathscr{A}_0}|\hat{Z}_{\mathbf{k}l}|>\sqrt{\log n}$. As such, under the null hypothesis, we have
\begin{align}
\mathbb{P}\left\{(\hat{P},\hat{L})\neq (2,1)\right\}\leqslant \mathbb{P}\left\{\max_{(\mathbf{k},l)\in \mathscr{A}_n\backslash \mathscr{A}_0}|\hat{Z}_{\mathbf{k}l}|>\sqrt{\log n}\right\}\leqslant \mathbb{P}\left\{\max_{(\mathbf{k},l)\in \mathscr{A}_n}|\hat{Z}_{\mathbf{k}l}|>\sqrt{\log n}\right\}.
\label{bound selection}
\end{align}
Define 
\begin{align*}
\tilde{Z}_{\mathbf{k}l} = \frac{1}{\sqrt{n}}\sum_{t=1}^n\pi_{\mathbf{k}}(U_t)\psi_l(t/n),\,\,\,(\mathbf{k},l)\in\mathscr{A}_n.
\end{align*}
Next, we prove our desired results using the following propositions.

\begin{proposition}
\label{proD0}
Under Assumptions \ref{A5} and \ref{AD1}, \eqref{rate DP DL} implies that $D_P^{\rho_0}(n)D_L^{r_{\psi,0}}(n) = o(\sqrt{n/\log n})$, $D_P^{\rho_2}(n)D_L^{r_{\psi,0}}(n) = o(\sqrt{n\log n})$, $D_P^{\rho_0}(n)D_L^{r_{\psi,1}}(n) = o(\sqrt{n\log n})$, and $D_L^{r_{\psi,0} + r_{\psi,1}}(n) = O(n)$, where $\rho_2 = \max\{2r_{\varphi,1} + (d-2)r_{\varphi,0},r_{\varphi,2} + (d-1)r_{\varphi,0}\}$.
\end{proposition}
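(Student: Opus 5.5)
The plan is to derive each of the four rate statements directly from the two conditions in \eqref{rate DP DL}. Two elementary facts will be used throughout. First, $D_P(n)\to\infty$ and $D_L(n)\to\infty$, so eventually $D_P(n),D_L(n)\geqslant1$ and any nonnegative power of them is at least $1$. Second, the exponents are ordered: $\rho_0\leqslant\rho_1$, and each of $2r_{\varphi,1}+(d-2)r_{\varphi,0}$ and $r_{\varphi,2}+(d-1)r_{\varphi,0}$ exceeds $\rho_1$ by at most $r_{\varphi,1}$ or $r_{\varphi,2}$. Both reduce the statements to comparing polynomial powers of $D_P,D_L$ with $\sqrt{\log n}$ and $\log n$. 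These quantities are only logarithmic in $n$, whereas the targets involve $\sqrt{n}$ and $n$.

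\emph{Step 1: a preliminary polylogarithmic bound.} From the second condition, $D_P^{2\rho_0}\log(D_P^dD_L)=o(\log n)$. Since $\log(D_P^dD_L)\geqslant d\log D_P\to\infty$, this gives $\log D_L\leqslant\log(D_P^dD_L)=o(\log n)$, so $D_L=n^{o(1)}$. Likewise $\log D_P=o(\log n)$, so $D_P=n^{o(1)}$. Consequently every fixed power $D_P^{a}D_L^{b}$ is $n^{o(1)}$, and in particular is $o(n^{\epsilon})$ for every $\epsilon>0$.

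\emph{Step 2: the first claim.} Using the first condition and $\rho_0\leqslant\rho_1$, we get $D_P^{\rho_0}D_L^{r_{\psi,0}}\leqslant D_P^{\rho_1}D_L^{r_{\psi,0}}=o(\sqrt{\log n})$. Since $\sqrt{\log n}=o(\sqrt{n/\log n})$, the first claim follows.

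\emph{Step 3: the remaining claims.} For the second and third claims, write
\[
D_P^{\rho_2}D_L^{r_{\psi,0}}=D_P^{\rho_1}D_L^{r_{\psi,0}}\cdot D_P^{\rho_2-\rho_1},
\qquad
D_P^{\rho_0}D_L^{r_{\psi,1}}=D_P^{\rho_0}D_L^{r_{\psi,0}}\cdot D_L^{r_{\psi,1}-r_{\psi,0}}.
\]
In each product the first factor is $o(\sqrt{\log n})$ by Step 2, and the second factor is $n^{o(1)}$ by Step 1, hence $o(n^{1/4})$. Each product is therefore $o(n^{1/4}\sqrt{\log n})$, which is $o(\sqrt{n\log n})$. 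If $r_{\psi,1}<r_{\psi,0}$, the second factor is bounded by $1$ eventually, so the bound still holds. Finally, $D_L^{r_{\psi,0}+r_{\psi,1}}=n^{o(1)}=O(n)$, again by Step 1.

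The only delicate point is Step 1: extracting $\log D_L=o(\log n)$ from the second condition. This needs $D_P^{2\rho_0}\geqslant1$, which holds because $\rho_0\geqslant0$ and $D_P\geqslant1$ eventually. If $\rho_0=0$, the condition reads directly as $\log(D_P^dD_L)=o(\log n)$, so the same conclusion holds. Everything else is bookkeeping of exponents.
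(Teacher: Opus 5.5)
Your argument is correct in substance, and it is the kind of elementary exponent bookkeeping the paper has in mind when it omits this proof. One step, however, is unjustified as written.

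In Step 2 you use $\rho_0\leqslant\rho_1$, which is equivalent to $r_{\varphi,0}\leqslant r_{\varphi,1}$. Assumption~\ref{AD1}(ii) only requires $B_{\varphi,ik}=O(k^{r_{\varphi,i}})$ for \emph{some} nonnegative constants, so no ordering is imposed. For example, the trigonometric basis admits $r_{\varphi,0}=2$, $r_{\varphi,1}=1$, and then $\rho_0=2d>\rho_1=2d-1$. Your first claim in Step 2 relies on this ordering. So does the third claim in Step 3, since you bound its first factor $D_P^{\rho_0}D_L^{r_{\psi,0}}$ via Step 2. The $\rho_2$ claim is fine, because its first factor is controlled directly by the first condition in \eqref{rate DP DL}.

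The repair is immediate and makes Steps 2--3 unnecessary. Your Step 1 already shows that every fixed power $D_P^{a}(n)D_L^{b}(n)$ is $n^{o(1)}$, hence eventually at most $n^{1/8}$. Since
$$n^{1/8}=o(\sqrt{n/\log n}),\quad n^{1/8}=o(\sqrt{n\log n}),\quad n^{1/8}=O(n),$$
all four claims follow at once. This uses only the second condition in \eqref{rate DP DL} together with $D_P(n),D_L(n)\to\infty$.
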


\begin{proof}[\textit{\textbf{Proof of \autoref{proD0}.}}] The proof follows directly from simple algebraic calculations and is thus omitted.
\end{proof}

\begin{proposition}
\label{proD1}
Suppose Assumptions \ref{A1}--\ref{A5} and \ref{AD1} hold. Then, under $\mathcal{H}_0$ in \eqref{oriH_0}, as $n\rightarrow\infty$,
\begin{align*}
\max_{(\mathbf{k},l)\in\mathscr{A}_n}\left|\hat{Z}_{\mathbf{k}l} - \tilde{Z}_{\mathbf{k}l}\right| = o_p(\sqrt{\log n}).
\end{align*}
\end{proposition}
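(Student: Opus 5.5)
The plan is to repeat the decomposition used in the proof of \autoref{Th1}, but to track every constant that depends on the index $(\mathbf{k},l)$ through the growth rates in \autoref{AD1}(ii). Since $\hat{Z}_{\mathbf{k}l}-\tilde{Z}_{\mathbf{k}l}=\sqrt{n}(T_{n,\mathbf{k}l}+R_{n,\mathbf{k}l})$, with $T$ and $R$ as in \eqref{Taylor hat theta kl}, it suffices to bound $\sqrt{n}|T_{n,\mathbf{k}l}|$ and $\sqrt{n}|R_{n,\mathbf{k}l}|$ uniformly over $\mathscr{A}_n$. The key point is that all the stochastic quantities in that proof can be bounded by random variables that do not depend on $(\mathbf{k},l)$:
\begin{itemize}
\item $\sup_x|\hat F_{nj}-F_j|$ (\autoref{coro1});
\item $n^{-1}\sum_t\|\hat U_t-U_t\|^2$ (from \eqref{R decomposition}--\eqref{R2});
\item $\max_t|\hat\varepsilon_{tj}-\varepsilon_{tj}|$ (\autoref{Lemma3});
\item the uniform remainder in \autoref{Lemma1};
\item the modulus of continuity of $\mathbb{G}_{nj}$;
\item $\sqrt{n}\|\hat\alpha_n-\alpha_0\|$;
\item averages such as $n^{-1}\sum_t(G_{1,tj}+\|a_j\|+\|b_j\|)(1+|\varepsilon_{tj}|)$.
\end{itemize}
The maximum over $\mathscr{A}_n$ then only acts on deterministic sup-norms of the basis functions.

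For the product basis with $\|\mathbf{k}\|_1\leqslant D_P(n)$ and $l\leqslant D_L(n)$, \autoref{AD1}(ii) gives
\[
\sup|\pi_{\mathbf{k}}|\leqslant CD_P^{\rho_0}(n),\qquad \sup\|\partial\pi_{\mathbf{k}}/\partial u\|\leqslant CD_P^{\rho_1}(n),\qquad \sup\|\partial^2\pi_{\mathbf{k}}/\partial u\partial u'\|\leqslant CD_P^{\rho_2}(n),
\]
and $\sup|\psi_l|\leqslant CD_L^{r_{\psi,0}}(n)$. The bound on the Hessian holds because each second partial derivative differentiates either two factors once each or one factor twice. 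The remainder term then satisfies
\[
\max_{\mathscr{A}_n}\sqrt{n}|R_{n,\mathbf{k}l}|\leqslant CD_P^{\rho_2}D_L^{r_{\psi,0}}\sqrt{n}\,O_p(n^{-1}),
\]
and this is $o_p(\sqrt{\log n})$ by the second rate in \autoref{proD0}.

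For $T_{n,\mathbf{k}l}$, I would not try to reproduce the delicate cancellations from the proof of \autoref{Th1}. Each piece already carries a factor $\sqrt{n}$ times something of order $n^{-1/2}$ or smaller, so the following crude bounds are enough.
\begin{itemize}
\item $\sqrt{n}|T_{n,\mathbf{k}l,4}|\leqslant CD_P^{\rho_1}D_L^{r_{\psi,0}}\sum_j\sup_x|\mathbb{G}_{nj}(x)|=O_p(D_P^{\rho_1}D_L^{r_{\psi,0}})$, using Massart's inequality as in \autoref{coro1}.
\item $\sqrt{n}|T_{n,\mathbf{k}l,2}|\leqslant CD_P^{\rho_1}D_L^{r_{\psi,0}}\,o_p(1)$, by the equicontinuity step \eqref{SE Th 1} together with \autoref{Lemma3}.
\item $\sqrt{n}|T_{n,\mathbf{k}l,1}+T_{n,\mathbf{k}l,3}|$: the remainders from \autoref{Lemma1}, from the Taylor step in $f_j$, and $r_{ntj}$ each contribute $CD_P^{\rho_1}D_L^{r_{\psi,0}}\,o_p(1)$. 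The leading combined term
\[
\frac{1}{n}\sum_{t=1}^n\sum_{j=1}^d\frac{\partial\pi_{\mathbf{k}}}{\partial u_j}(U_t)f_j(\varepsilon_{tj})\big[(a_{tj}-a_j)+\varepsilon_{tj}(b_{tj}-b_j)\big]'\sqrt{n}(\hat\alpha_n-\alpha_0)\psi_l(t/n)
\]
is bounded in absolute value by $CD_P^{\rho_1}D_L^{r_{\psi,0}}\,O_p(1)$. This uses $M_{0j}$, \autoref{pro3}, and $\mathbb{E}|\varepsilon_{tj}|\leqslant 1$ with $\varepsilon_{tj}\perp\mathcal{F}_{t-1}$.
\end{itemize}
Every one of these bounds is $O_p(D_P^{\rho_1}(n)D_L^{r_{\psi,0}}(n))=o_p(\sqrt{\log n})$ by the first condition in \eqref{rate DP DL}. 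Summing over the fixed $d$ components gives the claim.

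The step I expect to be the main obstacle is making sure every random remainder is genuinely free of $(\mathbf{k},l)$. In the proof of \autoref{Th1}, the term $T_{n,kl,4}$ was handled through a U-statistic whose variance constant depends on $\pi_k$. I would replace that argument with the sup-norm bound via $\sup_x|\mathbb{G}_{nj}|$, which is index-free at the cost of the factor $D_P^{\rho_1}$. Likewise, the mean-value points $\tilde U_t$ and $\tilde\varepsilon_{tj}$ must be controlled only through the global sup-norm bounds on the basis functions and on $f_j$, $\dot f_j$, rather than through any term-specific structure. If the crude bound on the $a_{tj}-a_j$ term turned out to be too weak, the fallback is the martingale-plus-Riemann-sum argument from \autoref{Th1}. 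In that case the $\dot\psi_l$ factor would be controlled by the condition $D_P^{\rho_0}D_L^{r_{\psi,1}}=o(\sqrt{n\log n})$ from \autoref{proD0}.
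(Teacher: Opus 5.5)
Your proof is correct and follows the same strategy as the paper. You Taylor-expand as in \eqref{Taylor hat theta kl}, bound every stochastic factor by a quantity free of $(\mathbf{k},l)$, let the maximum over $\mathscr{A}_n$ act only on the sup-norm products $O(D_P^{\rho_1}(n)D_L^{r_{\psi,0}}(n))$ and $O(D_P^{\rho_2}(n)D_L^{r_{\psi,0}}(n))$, and conclude via \autoref{AD1}(iii) and \autoref{proD0}.

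The only difference is in the linear term. The paper never splits $T_{n,\mathbf{k}l}$ into your four pieces. It bounds $\sqrt{n}|T_{n,\mathbf{k}l}|$ in one line by $B_{\pi,1}(D_P(n))B_{\psi,0}(D_L(n))\sqrt{n}\cdot n^{-1}\sum_t\|\hat U_t-U_t\|$, and then uses $n^{-1}\sum_t\|\hat U_t-U_t\|\leqslant(n^{-1}\sum_t\|\hat U_t-U_t\|^2)^{1/2}=O_p(n^{-1/2})$ from \eqref{R decomposition}--\eqref{R2}. This makes the \autoref{Th1} machinery unnecessary here: the equicontinuity step, the \autoref{Lemma1} remainders, and the $a_{tj}-a_j$ term are not needed.
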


\begin{proof}[\textit{\textbf{Proof of \autoref{proD1}.}}]
Gathering \eqref{R decomposition}, \eqref{R1}, and \eqref{R2}, we obtain
\begin{align}
\frac{1}{n}\sum_{t=1}^n\left\|\hat{U}_t - U_t\right\|^2 = O_p(n^{-1})
\label{bound U2}
\end{align}
and 
\begin{align}
\frac{1}{n}\sum_{t=1}^n\left\|\hat{U}_t - U_t\right\| \leqslant \sqrt{\frac{1}{n}\sum_{t=1}^n\left\|\hat{U}_t - U_t\right\|^2}  = O_p(n^{-1/2}).
\label{bound U1}
\end{align}
Using the Taylor expansion in \eqref{Taylor hat theta kl}, we have
\begin{align*}
\hat{Z}_{\mathbf{k}l} - \tilde{Z}_{\mathbf{k}l} = \frac{1}{\sqrt{n}}\sum_{t=1}^n\frac{\partial \pi_\mathbf{k}}{\partial u'}(U_t)(\hat{U}_t - U_t)\psi_l(t/n) + \frac{1}{2\sqrt{n}}\sum_{t=1}^n(\hat{U}_t - U_t)'\frac{\partial^2 \pi_\mathbf{k}}{\partial u \partial u'}(\tilde U_t)(\hat{U}_t - U_t)\psi_l(t/n).
\end{align*}
Define 
\begin{align*}
B_{\pi,1}(P) = \max_{\mathbf{k}\in\mathscr{I}(P)}\sup_{u\in[0,1]^d}\left\|\frac{\partial \pi_\mathbf{k}}{\partial u}(u)\right\|,\,\,\, B_{\pi,2}(P) = \max_{\mathbf{k}\in\mathscr{I}(P)}\sup_{u\in[0,1]^d}\left\|\frac{\partial^2 \pi_\mathbf{k}}{\partial u\partial u'}(u)\right\|,\,\,\, B_{\psi,0}(L) = \max_{1\leqslant l\leqslant L}\|\psi_l\|_\infty.
\end{align*}
Then, gathering \eqref{bound U1} and \eqref{bound U2}, we get
\begin{align*}
\max_{(\mathbf{k},l)\in\mathscr{A}_n}\left|\hat{Z}_{\mathbf{k}l} - \tilde{Z}_{\mathbf{k}l}\right| = O_p\left(B_{\pi,1}(D_P(n))B_{\psi,0}(D_L(n)) + n^{-1/2}B_{\pi,2}(D_P(n))B_{\psi,0}(D_L(n))\right).
\end{align*}
Under \autoref{AD1}(ii), we have $B_{\pi,1}(D_P(n)) = O(D_P^{\rho_1}(n))$, $B_{\pi,2}(D_P(n)) = O(D_P^{\rho_2}(n))$, and $B_{\psi,0}(D_L(n)) = O(D_L^{r_{\psi,0}}(n))$. Thus, it follows that
\begin{align*}
\max_{(\mathbf{k},l)\in\mathscr{A}_n}\left|\hat{Z}_{\mathbf{k}l} - \tilde{Z}_{\mathbf{k}l}\right| = O_p\left(D_P^{\rho_1}(n)D_L^{r_{\psi,0}}(n) + n^{-1/2}D_P^{\rho_2}(n)D_L^{r_{\psi,0}}(n)\right) = o_p(\sqrt{\log n})
\end{align*}
by \autoref{AD1}(iii) and \autoref{proD0}, which completes the proof of \autoref{proD1}.
\end{proof}

Let $\mu_\mathbf{k} = \mathbb{E}(\pi_\mathbf{k}(U_t))$ and $\xi_{nt,\mathbf{k},l} = n^{-1/2}(\pi_\mathbf{k}(U_t) - \mu_\mathbf{k})\psi_l(t/n)$. We decompose $\tilde{Z}_{\mathbf{k}l}$ as
\begin{align*}
\tilde{Z}_{\mathbf{k}l} = \sum_{t=1}^n\xi_{nt,\mathbf{k},l} + \frac{\mu_\mathbf{k}}{\sqrt{n}}\sum_{t=1}^n\psi_l(t/n).
\end{align*}

\begin{proposition}
\label{proD2}
Suppose Assumptions \ref{A1}--\ref{A5} and \ref{AD1} hold. Then, under $\mathcal{H}_0$ in \eqref{oriH_0}, as $n\rightarrow\infty$,
\begin{align*}
\max_{(\mathbf{k},l)\in\mathscr{A}_n}\left|\frac{\mu_\mathbf{k}}{\sqrt{n}}\sum_{t=1}^n\psi_l(t/n)\right| = o_p(\sqrt{\log n}).
\end{align*}
\end{proposition}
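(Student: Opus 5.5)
This term is deterministic, so no probabilistic argument is needed. We bound it uniformly over $(\mathbf{k},l)\in\mathscr{A}_n$ by separating the two factors. The first factor is the mean: $|\mu_{\mathbf{k}}|=|\int_{[0,1]^d}\pi_{\mathbf{k}}(u)\bar c(u)du|\leqslant \sup_u|\pi_{\mathbf{k}}(u)|\int\bar c=\sup_u|\pi_{\mathbf{k}}(u)|$. Since $\pi_{\mathbf{k}}$ is a product of $d$ factors $\varphi_{k_j}$ with each $k_j\leqslant \|\mathbf{k}\|_1\leqslant D_P(n)$, \autoref{AD1}(ii) gives
\begin{align*}
\max_{\mathbf{k}\in\mathscr{I}(D_P(n))}|\mu_{\mathbf{k}}|\leqslant \max_{\mathbf{k}}\prod_{j=1}^d B_{\varphi,0k_j}=O\big(D_P^{dr_{\varphi,0}}(n)\big)=O\big(D_P^{\rho_0}(n)\big).
\end{align*}
Here $B_{\varphi,00}=1$ and $k^{r}\leqslant D_P^{r}$ for $k\geqslant1$, so the constant in the $O(\cdot)$ is uniform in $\mathbf{k}$.

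The second factor is the Riemann sum of $\psi_l$, for which we repeat the computation in \eqref{bound sum psi}. Because $\int_0^1\psi_l=0$ for $l\geqslant1$,
\begin{align*}
\left|\frac{1}{\sqrt{n}}\sum_{t=1}^n\psi_l(t/n)\right|=\sqrt{n}\left|\sum_{t=1}^n\int_{(t-1)/n}^{t/n}\big(\psi_l(t/n)-\psi_l(\tau)\big)d\tau\right|\leqslant \frac{B_{\psi,1l}}{\sqrt{n}}.
\end{align*}
Taking the maximum over $1\leqslant l\leqslant D_L(n)$ bounds this by $O(D_L^{r_{\psi,1}}(n)/\sqrt{n})$.

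Combining the two bounds, the maximum in the statement is $O\big(D_P^{\rho_0}(n)D_L^{r_{\psi,1}}(n)/\sqrt{n}\big)$. \autoref{proD0} gives $D_P^{\rho_0}(n)D_L^{r_{\psi,1}}(n)=o(\sqrt{n\log n})$, so the whole expression is $o(\sqrt{\log n})$. Being deterministic, it is a fortiori $o_p(\sqrt{\log n})$.

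The one step that needs care is uniformity. The sup bound on $\pi_{\mathbf{k}}$ must hold uniformly over all $\mathbf{k}$ with $\|\mathbf{k}\|_1\leqslant D_P(n)$, and the derivative bound on $\psi_l$ uniformly over $l\leqslant D_L(n)$; both follow from the polynomial growth rates in \autoref{AD1}(ii). The rest is the algebra already packaged in \autoref{proD0}.
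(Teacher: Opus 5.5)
Your proposal is correct and matches the paper's proof. Like the paper, you bound $|\mu_{\mathbf{k}}|$ by $B_{\pi,0}(D_P(n))=O(D_P^{\rho_0}(n))$, bound the Riemann sum via \eqref{bound sum psi} by $B_{\psi,1}(D_L(n))/\sqrt{n}$, and conclude with the rate $D_P^{\rho_0}(n)D_L^{r_{\psi,1}}(n)=o(\sqrt{n\log n})$ from \autoref{proD0}; your remarks that the quantity is deterministic under $\mathcal{H}_0$ and that the constants are uniform over $\mathbf{k}$ and $l$ make explicit what the paper leaves implicit.
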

\begin{proof}[\textit{\textbf{Proof of \autoref{proD2}.}}]
Note that
\begin{align*}
|\mu_{\mathbf{k}}|
=\left|\int_{[0,1]^d}\pi_{\mathbf{k}}(u)\bar c(u)\,du\right|
\leqslant B_{\pi,0}(D_P(n)),\,\,\,\mathbf{k}\in\mathscr{I}(D_P(n)).
\end{align*}
Let 
\begin{align*}
B_{\pi,0}(P) = \max_{\mathbf{k}\in\mathscr{I}(P)}\sup_{u\in[0,1]^d}|\pi_\mathbf{k}(u)|,\,\,\, B_{\psi,1}(L) = \max_{1\leqslant l\leqslant L}\|\dot\psi_l\|_\infty.
\end{align*}
Obviously, as $n\rightarrow\infty$, $B_{\pi,0}(D_P(n)) = O(D_P^{\rho_0}(n))$ and $B_{\psi,1}(D_L(n)) = O(D_L^{r_{\psi,1}}(n))$. Then, gathering \eqref{bound sum psi} and \autoref{proD0}, we have
\begin{align*}
\max_{(\mathbf{k},l)\in\mathscr{A}_n}\left|\frac{\mu_\mathbf{k}}{\sqrt{n}}\sum_{t=1}^n\psi_l(t/n)\right|\leqslant \frac{B_{\pi,0}(D_P(n))B_{\psi,1}(D_L(n))}{\sqrt{n}} = O_p\left(n^{-1/2}D_P^{\rho_0}(n)D_L^{r_{\psi,1}}(n)\right) = o_p(\sqrt{\log n}),
\end{align*}
which yields the desired results.
\end{proof}

\begin{proposition}
\label{proD3}
Suppose Assumptions \ref{A1}--\ref{A5} and \ref{AD1} hold. Then, under $\mathcal{H}_0$ in \eqref{oriH_0}, as $n\rightarrow\infty$,
\begin{align*}
\max_{(\mathbf{k},l)\in\mathscr{A}_n}\left|\sum_{t=1}^n\xi_{nt,\mathbf{k},l}\right| = o_p(\sqrt{\log n}).
\end{align*}
\end{proposition}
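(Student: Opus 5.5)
The plan is a Bernstein inequality for each pair $(\mathbf{k},l)$, combined with a union bound over $\mathscr{A}_n$. The rate conditions in \eqref{rate DP DL} and \autoref{proD0} are what make the union bound vanish.

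\emph{Setup.} Under $\mathcal{H}_0$ the $U_t$ are i.i.d. with density $\bar c$. Hence, for fixed $(\mathbf{k},l)$, the summands $\xi_{nt,\mathbf{k},l}$, $t=1,\ldots,n$, are independent and mean zero. They satisfy the almost sure bound
\begin{align*}
|\xi_{nt,\mathbf{k},l}|\leqslant b_n:=\frac{2B_{\pi,0}(D_P(n))B_{\psi,0}(D_L(n))}{\sqrt{n}}=O\left(n^{-1/2}D_P^{\rho_0}(n)D_L^{r_{\psi,0}}(n)\right)=o\left((\log n)^{-1/2}\right),
\end{align*}
where the last equality is the first statement of \autoref{proD0}.

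\emph{Variance bound.} The variance satisfies
\begin{align*}
\sum_{t=1}^n\mathbb{E}\,\xi_{nt,\mathbf{k},l}^2\leqslant B_{\pi,0}^2(D_P(n))\,\frac{1}{n}\sum_{t=1}^n\psi_l^2(t/n).
\end{align*}
I bound the Riemann sum as in \eqref{bound sum psi}. Since $\int_0^1\psi_l^2=1$ and $|(\psi_l^2)'|\leqslant 2B_{\psi,0l}B_{\psi,1l}$, we get
\begin{align*}
\frac{1}{n}\sum_{t=1}^n\psi_l^2(t/n)\leqslant 1+\frac{2B_{\psi,0}(D_L(n))B_{\psi,1}(D_L(n))}{n}\leqslant C,
\end{align*}
uniformly in $l\leqslant D_L(n)$, by $D_L^{r_{\psi,0}+r_{\psi,1}}(n)=O(n)$ from \autoref{proD0}. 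Hence the variance is at most $\sigma_n^2:=C D_P^{2\rho_0}(n)$, uniformly over $\mathscr{A}_n$.

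\emph{Bernstein and union bound.} Fix $\varepsilon>0$ and set $x_n=\varepsilon\sqrt{\log n}$. Bernstein's inequality gives
\begin{align*}
\mathbb{P}\left\{\left|\sum_{t=1}^n\xi_{nt,\mathbf{k},l}\right|>x_n\right\}\leqslant 2\exp\left(-\frac{x_n^2}{2(\sigma_n^2+b_nx_n/3)}\right).
\end{align*}
Since $b_nx_n=o(1)$ and $\sigma_n^2\geqslant C>0$, the exponent is at least $c\,\varepsilon^2\log n/D_P^{2\rho_0}(n)$ for large $n$. Moreover $\#\mathscr{A}_n=K(D_P(n))D_L(n)\leqslant C D_P^d(n)D_L(n)$. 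The union bound then gives
\begin{align*}
\mathbb{P}\left\{\max_{(\mathbf{k},l)\in\mathscr{A}_n}\left|\sum_{t=1}^n\xi_{nt,\mathbf{k},l}\right|>x_n\right\}\leqslant 2\exp\left(\log\bigl(CD_P^d(n)D_L(n)\bigr)-\frac{c\,\varepsilon^2\log n}{D_P^{2\rho_0}(n)}\right).
\end{align*}
The second condition in \eqref{rate DP DL}, $D_P^{2\rho_0}(n)\log\bigl(D_P^d(n)D_L(n)\bigr)=o(\log n)$, means the second term in the exponent dominates the first. So the exponent tends to $-\infty$ and the probability tends to zero. Since $\varepsilon$ is arbitrary, this proves the claim.

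\emph{Main obstacle.} The delicate step is controlling the variance uniformly in $l$ without losing a power of $D_L(n)$. The crude bound $B_{\psi,0}^2$ would spoil the union bound for the Legendre case, which is why I use the Riemann-sum comparison with $\int_0^1\psi_l^2=1$. A second point to check is that $\sigma_n^2$ stays bounded away from zero when $\rho_0=0$. This is harmless, since I only ever use $\sigma_n^2$ as an upper bound and may enlarge it to at least $1$.
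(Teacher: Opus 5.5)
Your proposal is correct and follows the paper's proof essentially step for step. The paper also applies Bernstein's inequality to each $(\mathbf{k},l)$, using the sup bound $2n^{-1/2}B_{\pi,0}(D_P(n))B_{\psi,0}(D_L(n))$ and the Riemann-sum variance bound $1+2B_{\psi,0}B_{\psi,1}/n$ together with $D_L^{r_{\psi,0}+r_{\psi,1}}(n)=O(n)$ from Proposition~\ref{proD0}, and then closes with a union bound over $\mathscr{A}_n$ via the second condition in \eqref{rate DP DL}.
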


\begin{proof}[\textit{\textbf{Proof of \autoref{proD3}.}}]
Note that $\mathbb{E}(\xi_{nt,\mathbf{k},l}) = 0$ and $|\xi_{nt,\mathbf{k},l}|\leqslant 2n^{-1/2}B_{\pi,0}(D_P(n))B_{\psi,0}(D_L(n))$ almost surely. As $n\rightarrow\infty$, we have $B_{\pi,0}(D_P(n)) = O(D_P^{\rho_0}(n))$. Using the same technique in proving \eqref{bound sum psi}, we have
\begin{align*}
\left|\frac{1}{n}\sum_{t=1}^n\psi^2_l(t/n) -1\right|\leqslant \frac{2\|\psi_l\dot\psi_l\|_\infty}{n}\leqslant \frac{2B_{\psi,0}(D_L(n))B_{\psi,1}(D_L(n))}{n}.
\end{align*}
Under the null hypothesis, $\xi_{nt,\mathbf{k},l}$ are mutually independent, and 
\begin{align*}
\sum_{t=1}^n\mathbb{E}(\xi_{nt,\mathbf{k},l}^2) \leqslant \frac{\mathbb{E}(\pi_\mathbf{k}^2(U_t))}{n}\sum_{t=1}^n\psi^2_l(t/n)\leqslant B_{\pi,0}^2(D_P(n))\left(1 + \frac{2B_{\psi,0}(D_L(n))B_{\psi,1}(D_L(n))}{n}\right),\,\,\,(\mathbf{k},l)\in\mathscr{A}_n.
\end{align*}
By \autoref{proD0}, we have $B_{\psi,0}(D_L(n))B_{\psi,1}(D_L(n)) = O(D_L^{r_{\psi,0}+r_{\psi,1}}(n)) = O(n)$. Thus, there exists some positive constant $V_0$ such that 
\begin{align*}
\sum_{t=1}^n\mathbb{E}(\xi_{nt,\mathbf{k},l}^2)\leqslant V_0B_{\pi,0}^2(D_P(n))\,\,\,\text{for all}\,\,\,n\geqslant 1.
\end{align*}

Using Bernstein's inequality (Pollard, 2012, p.193), for any $\varepsilon>0$, we have
\begin{align*}
\mathbb{P}\left\{\left|\sum_{t=1}^n\xi_{nt,\mathbf{k},l}\right|>\varepsilon\sqrt{\log n}\right\}\leqslant 2\exp\left(-\frac{\varepsilon^2\log n}{2\left(V_0B_{\pi,0}^2(D_P(n)) + \frac{2\varepsilon}{3}\sqrt{\frac{\log n}{n}}B_{\pi,0}(D_P(n))B_{\psi,0}(D_L(n))\right)}\right)
\end{align*}
for all $(\mathbf{k},l)\in\mathscr{A}_n$. By \autoref{proD0}, we have
\begin{align*}
\sqrt{\frac{\log n}{n}}B_{\pi,0}(D_P(n))B_{\psi,0}(D_L(n)) = O\left(\sqrt\frac{\log n}{n}D_P^{\rho_0}(n)D_L^{r_{\psi,0}}(n)\right) = o(1).
\end{align*}
Hence, there exists $c_\varepsilon>0$, such that for sufficiently large $n$,
\begin{align*}
\mathbb{P}\left\{\left|\sum_{t=1}^n\xi_{nt,\mathbf{k},l}\right|>\varepsilon\sqrt{\log n}\right\}\leqslant 2\exp\left(-\frac{c_\varepsilon\log n}{B_{\pi,0}^2(D_P(n))}\right),\,\,\,(\mathbf{k},l)\in\mathscr{A}_n.
\end{align*}
Consequently, we obtain
\begin{align*}
\mathbb{P}\left\{\max_{(\mathbf{k},l)\in\mathscr{A}_n}\left|\sum_{t=1}^n\xi_{nt,\mathbf{k},l}\right|>\varepsilon\sqrt{\log n}\right\}\leqslant &\sum_{(\mathbf{k},l)\in\mathscr{A}_n}\mathbb{P}\left\{\left|\sum_{t=1}^n\xi_{nt,\mathbf{k},l}\right|>\varepsilon\sqrt{\log n}\right\}\\
\leqslant &2\exp\left(\log [K(D_P(n))D_L(n)]-\frac{c_\varepsilon\log n}{B_{\pi,0}^2(D_P(n))}\right).
\end{align*}
By \autoref{AD1}(iii), it follows that
\begin{align*}
B_{\pi,0}^2(D_P(n))\log [K(D_P(n))D_L(n)] = O\left(D_P^{2\rho_0}(n)\log [D_P^d(n)D_L(n)]\right) = o(\log n).
\end{align*}
Thus, we further derive that 
\begin{align*}
\mathbb{P}\left\{\max_{(\mathbf{k},l)\in\mathscr{A}_n}\left|\sum_{t=1}^n\xi_{nt,\mathbf{k},l}\right|>\varepsilon\sqrt{\log n}\right\}\leqslant 2\exp\left(\log [K(D_P(n))D_L(n)]-\frac{c_\varepsilon\log n}{B_{\pi,0}^2(D_P(n))}\right)\rightarrow 0,\,\,\, n\rightarrow\infty,
\end{align*}
which ends the proof of \autoref{proD3}.
\end{proof}

Gathering Propositions \ref{proD1}--\ref{proD3}, we obtain $\max_{(\mathbf{k},l)\in\mathscr{A}_n}|\hat{Z}_{\mathbf{k}l}| = o_p(\sqrt{\log n})$. Then, by \eqref{bound selection}, we have $\mathbb{P}\{(\hat{P},\hat{L})\neq (2,1)\}\rightarrow 0$, which implies $\mathbb{P}\{(\hat{P},\hat{L})=(2,1)\}\rightarrow 1$.

Note that for any $\varepsilon> 0$, 
\begin{align*}
\mathbb{P}\left\{\left|n\hat{\Psi}_n^2(K(\hat{P}),\hat{L}) - n\hat{\Psi}_n^2(K(2),1)\right|>\varepsilon\right\}\leqslant \mathbb{P}\{(\hat{P},\hat{L})\neq (2,1)\}\rightarrow 0,\,\,\,n\rightarrow\infty.
\end{align*}
Thus, by \autoref{Th2} and Slutsky's theorem, we have
\begin{align*}
n\hat{\Psi}_n^2(K(\hat{P}),\hat{L}) = n\hat{\Psi}_n^2(K(2),1) + o_p(1)\stackrel{d}\longrightarrow \chi_{d(d-1)/2}^2
\end{align*}
under the null hypothesis.

Let 
\begin{align*}
\theta_{\mathbf{k}l} = \int_0^1\int_{[0,1]^d}c(u;\tau)\pi_{\mathbf{k}}(u)\psi_l(\tau)dud\tau\,\,\, \text{and} \,\,\, \hat{\theta}_{\mathbf{k}l} = \frac{1}{n}\sum_{t=1}^n\pi_{\mathbf{k}}(\hat{U}_t)\psi_{l}(t/n).
\end{align*}
Under the fixed alternative $\mathcal{H}_1$ in \eqref{oriH_1}, there exist $\mathbf{k}_0$ and $l_0$ such that $\theta_{\mathbf{k}_0l_0}\neq 0$. Let $P_0 = \|\mathbf{k}_0\|_1$ and $L_0 = l_0$. Since $D_P(n)$ and $D_L(n)$ diverge to infinity, we have $P_0\leqslant D_P(n)$ and $L_0\leqslant D_L(n)$ for $n$ sufficiently large.

According to the proof of \autoref{Th3}, we have $\hat{\theta}_{\mathbf{k}l}\stackrel{p}\longrightarrow \theta_{\mathbf{k}l}$ for $(\mathbf{k},l)\in\mathscr{A}(P_0,L_0)$. Hence, there exists $q_0>0$, such that $\|\hat\theta_n(K(P_0),L_0)\|^2\stackrel{p}\longrightarrow q_0$. It follows that
\begin{align*}
\frac{SC_n(P_0,L_0)}{n} = \left\|\hat\theta_n(K(P_0),L_0)\right\|^2 - \frac{K(P_0)L_0\log n}{n}\stackrel{p}\longrightarrow q_0
\end{align*}
by Slutsky's theorem. Therefore, we obtain $\mathbb{P}\{SC_n(P_0,L_0)>nq_0/2\}\rightarrow 1$ as $n\rightarrow\infty$. Using the selection rule of the data-driven method, we have
\begin{align*}
n\left\|\hat\theta_n(K(\hat{P}),\hat{L})\right\|^2\geqslant SC_n(\hat{P},\hat{L})\geqslant SC_n(P_0,L_0)
\end{align*}
for $n$ sufficiently large. Thus, $\mathbb{P}\{\|\hat\theta_n(K(\hat{P}),\hat{L})\|^2>q_0/2\}\rightarrow 1$ as $n\rightarrow\infty$.

Consider the events 
\begin{align*}
\mathcal{E}_{n,1} = \left\{\min_{2\leqslant P\leqslant D_P(n)}
\lambda_{\min}(\hat{\Sigma}_{n,K(P)})>0\right\}\,\,\,\text{and}\,\,\, \mathcal{E}_{n,2} = \left\{\left\|\hat\theta_n(K(\hat{P}),\hat{L})\right\|^2>\frac{q_0}{2}\right\}.
\end{align*}
Note that when $\mathcal{E}_{n,1}$ occurs, 
\begin{align*}
\hat{\Psi}_n^2(K(\hat{P}),\hat{L})\geqslant \frac{\left\|\hat{\theta}_n(K(\hat{P}),\hat{L})\right\|^2}{\lambda_{\max} \left(\hat{\Sigma}_{n,K(\hat{P})}\right)}.
\end{align*}
In addition,
\begin{align*}
\lambda_{\max}(\hat{\Sigma}_{n,K(\hat{P})}) \leqslant \text{tr}\left(\hat{\Sigma}_{n,K(\hat{P})}\right)\leqslant \frac{1}{n}\sum_{t=1}^n\|\pi(\hat{U}_t)\|^2\leqslant K(\hat{P})B^2_{\pi,0}(\hat{P})\leqslant K(D_P(n))B^2_{\pi,0}(D_P(n)).
\end{align*}
Therefore, if $\mathcal{E}_{n,1}\cap \mathcal{E}_{n,2}$ occurs, we obtain
\begin{align*}
n\hat{\Psi}_n^2(K(\hat{P}),\hat{L})\geqslant \frac{nq_0}{2K(D_P(n))B^2_{\pi,0}(D_P(n))}.
\end{align*}
Under \autoref{AD1}(iii), we have
\begin{align*}
\frac{K(D_P(n))B^2_{\pi,0}(D_P(n))}{n} = O(n^{-1}D_P^{d+2\rho_0}(n)) = o(1).
\end{align*}
Hence, $nq_0/[2K(D_P(n))B^2_{\pi,0}(D_P(n))]\rightarrow\infty$ as $n\rightarrow\infty$. For any $x>0$, as long as $n$ is large enough, we have $nq_0/[2K(D_P(n))B^2_{\pi,0}(D_P(n))]>x$. It then follows that $\mathcal{E}_{n,1}\cap \mathcal{E}_{n,2}\subset \{n\hat{\Psi}_n^2(K(\hat{P}),\hat{L})>x\}$. Thus, 
\begin{align*}
\mathbb{P}\left\{n\hat{\Psi}_n^2(K(\hat{P}),\hat{L})\leqslant x\right\}\leqslant \mathbb{P}\{\mathcal{E}_{n,1}^c\} + \mathbb{P}\{\mathcal{E}_{n,2}^c\}
\rightarrow 0,\,\,\,n\rightarrow\infty,
\end{align*}
which completes the proof.
\end{proof}

\section*{References}
\spacingset{1}

\begingroup
\setlength{\leftskip}{0.5em}
\setlength{\parindent}{-0.5em}
\setlength{\parskip}{0.2\baselineskip}

\noindent\hspace*{-0.5em}Bae, J., D. Jun, and S. Levental (2010). The uniform CLT for martingale difference arrays under the uniformly integrable entropy. \textit{Bull. Korean Math. Soc} 47 (1), 39–51.

Massart, P. (1990). The tight constant in the Dvoretzky--Kiefer--Wolfowitz inequality. \textit{The annals of Probability}, 1269–1283.

Pollard, D. (2012). \textit{Convergence of stochastic processes}. Springer Science \& Business Media.

van der Vaart, A. W. (1998). \textit{Asymptotic Statistics}. Cambridge Series in Statistical and Probabilistic Mathematics. Cambridge University Press.

van der Vaart, A. W. and J. A. Wellner (1996). \textit{Weak Convergence and Empirical Processes: With Applications to Statistics}. Springer Series in Statistics. Springer.

\par
\endgroup

\end{document}